\documentclass[12pt]{article}

\usepackage{etoolbox}
\usepackage{arxiv_package}

\usepackage{mathrsfs}

\usepackage{enumitem}

\usepackage[utf8]{inputenc}
\usepackage[T1]{fontenc}
\usepackage{natbib}
\usepackage{url}            
\usepackage{booktabs}       
\usepackage{amsfonts}       
\usepackage{nicefrac}       
\usepackage{microtype}      
\usepackage{bm}
\usepackage{amsthm}
\usepackage{comment}
\usepackage{hyperref}       
\usepackage{cleveref}
\usepackage{amsmath}
\usepackage{multicol, multirow, colortbl}
\usepackage{pifont}
\usepackage{caption}
\usepackage{subcaption}
\usepackage{tabularx}
\usepackage{multirow}
\usepackage{makecell}
\usepackage[hang,flushmargin]{footmisc}
\usepackage{wrapfig}
\usepackage{diagbox}
\usepackage{float}

\usepackage{appendix}

\usepackage{bbm}
\usepackage{dsfont}

\DeclareMathAlphabet{\mathbbold}{U}{bbold}{m}{n}

\newcommand{\includeorplaceholder}[3][]{%
  \IfFileExists{#2}{%
    \includegraphics[#1]{#2}%
  }{%
    {\setlength{\fboxsep}{0pt}\setlength{\fboxrule}{0.4pt}%
     \fbox{\parbox[c][#3][c]{\linewidth}{}}%
    }%
  }%
}

\hypersetup{colorlinks=true,citecolor=blue,linkcolor=blue}

\makeatletter
\let\original@footnotemark\footnotemark
\newcommand{\align@footnotemark}{%
  \ifmeasuring@
    \chardef\@tempfn=\value{footnote}%
    \original@footnotemark
    \setcounter{footnote}{\@tempfn}%
  \else
    \iffirstchoice@
      \original@footnotemark
    \fi
  \fi}
\pretocmd{\start@align}{\let\footnotemark\align@footnotemark}{}{}
\makeatother

\usepackage{mathtools}

\newcolumntype{C}{>{\centering\arraybackslash}X}

\usepackage[algoruled]{algorithm2e}

\usepackage[normalem]{ulem} 
\usepackage{cancel}

\title{
Fit CATE Once:
Model-Assisted Randomization Tests Without Sample Splitting
}
\author{Fangnan Zheng  
\quad and \quad 
Yao Zhang\thanks{Department of Statistics and Data Science, National University of Singapore.}}

\begin{document}

\maketitle

\begin{abstract}
Randomization tests and flexible treatment-effect models offer complementary strengths for analyzing data from randomized panel experiments: the former provide valid inference under the known assignment mechanism, while the latter can capture complex patterns of effect heterogeneity. We develop model-assisted randomization tests that combine these strengths without sample splitting.
The key idea is to estimate an unsigned version of the conditional average treatment effect (CATE) from the covariance structure of residualized outcomes, while leaving the realized assignments for randomization inference. The remaining sign can be chosen to best fit the observed outcomes.
We establish identification and consistency for the proposed unsigned CATE estimators, as well as validity for the CATE-assisted randomization tests.
Across synthetic and semi-synthetic experiments, the CATE-assisted randomization tests control Type I error and achieve higher power than covariate-adjusted and sample-split alternatives. Finally, we show that the assignment-free CATE estimates can be used to discover heterogeneous subgroups and test subgroup-specific treatment effects.
\end{abstract}

\section{Introduction}

Randomized experiments run over time produce sequences of outcomes that are well suited to studying dynamic causal effects. For example, a government rolling out an employment subsidy across regions can track annual earnings to see how the policy's impact unfolds. Similarly, a clinical trial that staggers treatment start times across patients can record symptom scores over multiple visits to learn whether effects strengthen, plateau, or fade.
To recover these patterns, standard analyses often use regression models to separate treatment effects from other variation, but valid inference can become sensitive to modeling assumptions that are not always easy to justify.

This concern motivates design-based or randomization-based inference, where validity is grounded in the known assignment mechanism rather than in strong outcome-modeling assumptions. For example, randomization-based central limit theorems support asymptotically valid confidence intervals for treatment effects without an i.i.d.\ assumption \citep{li2017general}, and related ideas have been developed for time-series and panel experiments \citep{athey2022design,bojinov2019time,bojinov2021panel}. When baseline covariates are available, ordinary least squares adjustment with heteroskedasticity-robust standard errors yields asymptotically valid intervals even under model misspecification \citep{lin2013agnostic}, and follow-up work extends this to nonlinear adjustment \citep{guo2023generalized,cohen2024no} and to more complex designs \citep{lin2025unifying,gao2025causal}.

In this article, we focus on randomization tests \citep{fisher1935design} for causal inference. They offer finite-sample validity and allow flexible test statistics, including statistics that use regression models to detect complex treatment-effect patterns.
Existing work typically adjusts the statistic with a conditional-mean model \citep{rosenbaum2002covariance,zhao2021covariate}; a natural next step is to use a CATE model, which captures effect heterogeneity across units and time. However, naively fitting a CATE model on the realized assignments and using it in the test may break validity. One must either refit the model for every randomized assignment \citep{guo2025ml}, which is usually too expensive, or use sample splitting \citep{zhang2025adaptive}, which sacrifices power.

We develop an assignment-free method for estimating CATEs to construct randomization tests.  Our key observation is that, under an additive outcome model, residualized outcomes can be written as centered treatment-timing indicators multiplied by lagged treatment effects, plus noise, so their second moments encode treatment-effect magnitudes. 
When effects are lag-invariant, diagonal residual moments identify a common effect magnitude; when residual errors are serially uncorrelated, off-diagonal moments identify the full lagged CATE vector \emph{up to a global sign}. Because these moments depend only on covariates, outcomes, and the known assignment mechanism, they can be estimated \emph{without} fitting on the realized assignments. The remaining sign is much simpler to learn than the full CATE surface: it can be chosen to best fit the residualized outcome trajectory, or learned from a small subset of revealed assignments.
Together with the unsigned estimator, this yields a signed CATE estimator that can be used in randomization-test statistics while preserving validity.

On the theoretical side, we establish local and global identification results for our assignment-free estimation strategy and prove consistency for the proposed unsigned CATE estimator. On the empirical side, we verify that the unsigned CATE estimator is consistent and can also warm-start signed CATE estimation when only a small number of assignments are revealed. This shows that our estimation strategy is useful not only for randomization tests, but also for direct CATE estimation.
Through additional synthetic and semi-synthetic experiments, we show that our CATE-assisted randomization tests control Type I error and achieve higher power than unadjusted, covariate-adjusted, and sample-split alternatives. We also show that assignment-free CATE estimates can help reveal subgroups with heterogeneous treatment effects and construct valid randomization tests within the discovered subgroups.

The rest of the paper is organized as follows. Section~\ref{sec:related} discusses related work. Section~\ref{sect:sad} introduces the experimental setup and the randomization-test framework. Section~\ref{sec:te_var} develops the diagonal and off-diagonal residual-moment estimators for unsigned CATEs and establishes identification and consistency results. Section~\ref{subsec:te_in_rt} explains how to orient the unsigned estimator and construct CATE-assisted randomization tests. \Cref{sect:exp} presents the experiments, and \Cref{sect:discussion} concludes with a discussion.

\section{Related work}
\label{sec:related}

Our work connects to several strands of the literature on randomized inference. The closest is the line on randomization tests with covariate adjustment. A classical starting point is \citet{rosenbaum2002covariance}, who studies randomization tests based on residualized outcomes from a conditional mean model. For inference on the average treatment effect, \citet{zhao2021covariate} propose a robust \(t\)-statistic based on a linear model with treatment--covariate interactions. \citet{hennessy2016conditional} develop conditional randomization tests that restrict the assignment space to account for covariate imbalance between treated and control groups. Our work differs from this line in two ways. First, we study experiments run over time, where treatment timing creates repeated outcomes. Second, instead of using only outcome adjustment, we use residual covariances to extract treatment-effect information before randomization tests.

A second strand is the literature on design-based, or randomization-based, inference for panel and time-series experiments. \citet{bojinov2019time} develop causal estimands and exact randomization tests for temporal interventions in single-unit settings, while \citet{bojinov2021panel} extend this design-based framework to panel experiments with repeated observations on multiple units. Most relevant to us is \citet{zhang2025multiple}, who develop conditional randomization tests for lagged and spillover treatment effects without using outcome or treatment-effect models. Building on their conditioning idea, we study how to construct more powerful test statistics by estimating CATEs from residual covariances without using the realized assignments.

A third strand is the literature on staggered-adoption designs. \citet{athey2022design} analyze randomized staggered-adoption experiments from a design-based perspective. In observational settings, a large literature has shown that standard two-way fixed-effects event-study regressions can fail to recover dynamic effects under treatment-effect heterogeneity \citep{goodmanbacon2021difference,sun2021estimating}, prompting a range of alternative estimators \citep{callaway2021did,borusyak2024revisiting,roth2023trending}. Our setting is different: treatment timing is randomized and known from the design, so we do not need parallel-trends assumptions. Instead, we exploit randomization directly for exact or conditional inference.

\section{Setup}\label{sect:sad}

Suppose we observe a panel with \(N\) units and \(T\) time periods. For each unit \(i\in[N]\), we observe covariates \(X_i\in\mathbb R^d\), outcomes \(Y_{i,[T]}=(Y_{i,1},\dots,Y_{i,T})\), and a treatment start time \(A_i\in[T+1]\), where \(A_i=T+1\) means a never-treated unit. 
For example, in a staggered-adoption design, 
the assignment vector \(A_{[N]}\) is drawn uniformly from all assignments with exactly \(N_t\) units starting treatment at time \(t\) for each \(t\in[T+1]\).
We discuss other randomized panel designs in Appendix~\ref{sect:designs}.

We impose two standard assumptions throughout. First, we assume no interference, meaning that unit \(i\)'s outcome at time \(t\) depends only on its own treatment start time: for any \(a\in[T+1]\), \(Y_{i,t}=Y_{i,t}(A_i)=Y_{i,t}(a)\) whenever \(A_i=a\). Second, we assume no anticipation, meaning that treatment starting after time \(t\) has no effect on the time-\(t\) outcome: \(Y_{i,t}(a)=Y_{i,t}(T+1)\) for all \(a>t\).
In this article, we mainly consider testing a global sharp null and lag-specific nulls.
The sharp null of no treatment effect is
\begin{equation}
\label{eq:H_0}
H_0:\quad
Y_{i,t}(a)=Y_{i,t}(T+1)
\qquad
\text{for all } i\in[N],\ 1\le a\le t\le T.
\end{equation}
This null asks whether treatment has any effect at all. Because it is sharp, all missing potential outcomes can be imputed under \(H_0\), making it the basic benchmark for randomization inference. In many panel experiments, however, the sharp null is too coarse: rejecting \(H_0\) does not reveal whether the effect appears immediately, emerges only after several periods, or varies with exposure length. We therefore refine the global null by introducing lag-specific null hypotheses.

Fix a time \(t\in[T-l]\) and a lag \(l\in\{0,\dots,T-1\}\). Define the time-specific lag-\(l\) null
\begin{equation}
\label{eq:H_0tl}
H_{0,t,l}:\quad
Y_{i,t+l}(t)=Y_{i,t+l}(T+1)
\qquad
\text{for all } i\in[N].
\end{equation}
This null asks whether treatment initiated at time \(t\) has any effect on the outcome \(l\) periods later, at time \(t+l\). The lag-\(l\) null \(H_{0,l}\) is the intersection of the time-specific nulls \(H_{0,t,l}\) over all \(t=1,\dots,T-l\). In other words, \(H_{0,l}\) asserts that treatment has no effect exactly \(l\) periods after adoption, uniformly over all  treatment start times.
Both nulls \(H_0\) and \(H_{0,l}\) can be extended to fixed-effect null hypotheses with prespecified effect sizes, and can also be restricted to prespecified subgroups.

\subsection{Model-assisted randomization tests}\label{sec:rt}

A randomization test is defined through a test statistic \(T(O_{[N]};\hat\eta)\), which maps the data \(O_{[N]}\) into evidence against the null. Here \(\hat\eta\) denotes a nuisance model, such as regression models for the outcomes and/or treatment effects. This model is used to adjust for covariates and construct a lower-variance statistic, which can improve power by making the alternative distribution easier to distinguish from the null distribution.

Let \(\tilde A_{[N]}\) be an independent copy of \(A_{[N]}\) drawn from the same assignment distribution. Let \(\tilde{\mathbb P}\) denote the distribution of \(\tilde A_{[N]}\), and let \(\tilde O_{[N]}\) denote the simulated dataset obtained by replacing the observed assignments in \(O_{[N]}\) with \(\tilde A_{[N]}\) while keeping the observed covariates and outcomes fixed. We test \(H_0\) using the \(p\)-value
\begin{equation}\label{equ:p_value}
p(O_{[N]}; \hat \eta)
= \tilde{\mathbb P}_{\tilde A_{[N]}}
\big\{\,T(\tilde O_{[N]}; \hat \eta)
\ge T(O_{[N]}; \hat \eta)
\big\}.
\end{equation}
The hypothesis \(H_{0,t,l}\) in \eqref{eq:H_0tl} is weaker than \(H_0\). Under \(H_{0,t,l}\) and no anticipation,
\[
Y_{i,t+l}(a)=Y_{i,t+l}(a')=Y_{i,t+l}(T+1),
\quad \text{for } a,a'\in \mathcal A_{t,l}:=\{t,t+l+1,\dots,T+1\}.
\]
Following \citet{zhang2025multiple},
we test \(H_{0,t,l}\) using the conditional  \(p\)-value
\begin{equation}\label{equ:pvaluetl}
p_{t,l}(O_{[N]};\hat \eta)
= \tilde{\mathbb P}_{\tilde A_{[N]}}
\big\{\,T_{t,l}(\tilde O_{[N]}; \hat \eta)
\ge T_{t,l}(O_{[N]}; \hat \eta)
\mid g_{t,l}(\tilde A_{[N]}) = g_{t,l}(A_{[N]})
\big\},
\end{equation}
where \(g_{t,l}(A_{[N]})=\{i\in[N]:A_i\in\mathcal A_{t,l}\}\) denotes the set of units that either start treatment at time \(t\) or remain untreated at time \(t+l\), and the statistic \(T_{t,l}\) only depends on these units' outcomes at \(t+l\).
The conditional distribution of \(\tilde A_{[N]}\) in \eqref{equ:pvaluetl}
can be viewed as an artificial randomized experiment among the units in \(g_{t,l}(A_{[N]})\).
\begin{theorem}\label{thm:validity}
Assume that
\(
\hat \eta \independent A_{[N]}\mid X_{[N]},Y_{[N],[T]}.
\)
Then rejecting \(H_0\) whenever \(p(O_{[N]}; \hat \eta)\le\alpha\) controls the Type I error at level \(\alpha\in(0,1)\):
\begin{equation}\label{equ:validity}
\mathbb P_{H_0}\{p(O_{[N]}; \hat \eta)\le\alpha \mid X_{[N]}, Y_{[N],[T]} \} \le \alpha.
\end{equation}
\end{theorem}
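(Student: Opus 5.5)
The plan is the standard exchangeability argument for randomization tests, made conditional on the data and on the nuisance estimate. Throughout we condition on $\mathcal F := (X_{[N]}, Y_{[N],[T]})$.

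\textbf{Step 1: outcomes are fixed under the null.} Under $H_0$ and no interference, every potential outcome equals the never-treated outcome $Y_{i,t}(T+1)$. Hence the observed array $Y_{[N],[T]}$ does not depend on the realized $A_{[N]}$. Replacing $A_{[N]}$ by $\tilde A_{[N]}$ while keeping outcomes fixed therefore gives exactly the data that would have been observed under $\tilde A_{[N]}$. So $\tilde O_{[N]}$ is a genuine counterfactual dataset, and the map
\[
a \mapsto T(O(a);\eta)
\]
is a deterministic function $t_\eta(a)$ once $\mathcal F$ and $\eta$ are fixed.

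\textbf{Step 2: the assignment stays uniform given the data and $\hat\eta$.} The design draws $A_{[N]}$ from a known distribution $\pi$ independently of the potential outcomes and covariates. Under $H_0$ the observed $Y_{[N],[T]}$ is a function of the potential outcomes only, so
\[
A_{[N]} \mid \mathcal F \sim \pi .
\]
The hypothesis $\hat\eta \independent A_{[N]} \mid \mathcal F$ then gives
\[
A_{[N]} \mid (\mathcal F, \hat\eta) \sim \pi .
\]
This is the step I expect to need the most care. One must check that the design is independent of potential outcomes given covariates, possibly with $X$ entering the design as in stratified designs, and that conditioning on $Y$ does not distort the law of $A$. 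That holds precisely because $Y$ is assignment-free under $H_0$. If $\hat\eta$ carries extra external randomness, it is handled by the stated conditional independence.

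\textbf{Step 3: super-uniformity.} Fix $(\mathcal F, \hat\eta=\eta)$ and write $t=t_\eta$. The $p$-value is
\[
p = G(t(A)), \qquad G(s) = \tilde{\mathbb P}\{t(\tilde A) \ge s\},
\]
with $\tilde A \sim \pi$ independent. By Step 2, $A$ also has law $\pi$, so $t(A)$ and $t(\tilde A)$ have the same distribution. The standard lemma then applies: if $S$ has survival function $G(s)=\mathbb P(S\ge s)$, then $\mathbb P\{G(S)\le\alpha\}\le\alpha$. To prove it, let
\[
s^* = \inf\{s : G(s)\le\alpha\}.
\]
Using left-continuity of $G$ together with the discrete atoms, $\{G(S)\le\alpha\}\subseteq\{S\ge s^*\}$ or $\{S> s^*\}$. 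This gives probability at most $\alpha$ whenever $G(s^*)\le\alpha$, and a short case analysis covers ties.

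\textbf{Step 4: remove the conditioning on $\hat\eta$.} Integrating over $\hat\eta$ given $\mathcal F$ yields \eqref{equ:validity}.
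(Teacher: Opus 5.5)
Your proposal is correct and follows essentially the same argument as the paper: condition on $(X_{[N]},Y_{[N],[T]},\hat\eta)$, use the sharp null together with the conditional independence to conclude that $A_{[N]}$ still has the design law, so the observed and simulated statistics are equal in distribution, then apply the super-uniformity bound and integrate out $\hat\eta$. The differences are cosmetic. The paper uses the finite assignment space to take the smallest support point in the rejection region, while you use the general survival-function lemma; for discrete $S$ it is typically your $\{S>s^*\}$ branch, bounded via the right limit $\lim_{s\downarrow s^*}G(s)\le\alpha$ rather than by left-continuity, that does the work. You also make explicit why $A_{[N]}\mid (X_{[N]},Y_{[N],[T]})$ follows the design under $H_0$, which the paper leaves implicit.
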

By the same randomization argument, the conditional \(p\)-value
\(p_{t,l}(O_{[N]};\hat \eta)\) also controls the Type I error under \(H_{0,t,l}\).
The conditional independence assumption is central to \eqref{equ:validity}. If it is violated, for example because \(\hat\eta\) is fitted using \(A_{[N]}\), then the two test statistics in \eqref{equ:p_value} are no longer equal in distribution, and the resulting \(p\)-value need not be valid. Refitting \(\hat\eta\) for every simulated value of \(\tilde A_{[N]}\) would restore this symmetry, but is often computationally prohibitive, especially when \(\hat\eta\) is a machine learning model.

Thus, there is a basic trade-off in constructing model-assisted randomization tests. To preserve validity, the nuisance model \(\hat\eta\) should not be fitted using the observed assignments \(A_{[N]}\). But to improve power, especially for detecting treatment effects, we would like an accurate treatment-effect model. In practice, this usually leaves two choices: use only an outcome model, or estimate treatment effects by sample splitting.

We next show that repeated observations in longitudinal experiments offer a way around this trade-off. By exploiting the covariance structure of residualized outcomes, we can identify the CATE vector up to a global sign without fitting on the realized assignments. Choosing the remaining sign is much easier than estimating the full CATE from scratch, and can be guided by the observed covariates and outcomes or by a small subset of observed assignments. This leads to CATE-assisted randomization tests that preserve validity while avoiding the efficiency loss from sample splitting.

\section{Unsigned CATE estimator}\label{sec:te_var}

For $t\in [T+1]$, we denote the probability that the treatment start time $A_i =t$:
\[
\pi_t(x):=\mathbb P(A_i=t\mid X_i=x), \quad t\in[T+1].
\]
We also define the probability that treatment starts no later than time $t$ as
\[
\pi_{\le t}(x):=\mathbb P(A_i\le t\mid X_i=x)
=\sum_{l=0}^{t-1}\pi_{t-l}(x).
\]
All these probabilities are known from the design of an experiment. 

We define the outcome function and the residuals at time $t$ as
\[
\mu_t(x):=\mathbb E[Y_{i,t}\mid X_i=x] \quad \text{ and }\quad  R_{i,t}:=Y_{i,t}-\mu_t(X_i).
\]
We define the expected control outcome at time $t$ as
\begin{equation}\label{eq:mu0_def}
\mu_{0,t}(x)
:= \mathbb E \big[Y_{i,t}(T+1)\mid X_i=x\big] = \mathbb E\big[Y_{i,t}\mid X_i=x,\,A_{i}>t\big]
\end{equation}
by no-anticipation, which implies that $Y_{i,t}=Y_{i,t}(T+1)$ if unit $i$ starts the treatment after time $t$.
For each lag $l\in\{0,1,\dots,T-1\}$ and time $t\in [T-l]$, 
we define the expected outcome at time \(t+l\), \(l\) steps after treatment initialization at time \(t\), as
\[
\mu_{1,t+l}(x):= \mathbb E \big[Y_{i,t+l}(t)\mid X_i=x\big],
\]
and define the lagged CATE vector $\tau(x):=\big(\tau_0(x),\dots,\tau_{T-1}(x)\big)^\top$, where $\tau_l(x)$ is
the lag-$l$ conditional average treatment effect (CATE), defined as
\begin{equation}\label{eq:tau_l_def}
\tau_l(x)
:=\mu_{1,t+l}(x)  - \mu_{0,t+l}(x).
\end{equation}
The definition in \eqref{eq:tau_l_def} implicitly assumes that, conditional on \(X_i\), the lag-\(l\) treatment effect is invariant across treatment start times. In other words, once we condition on covariates, the expected effect \(l\) steps after adoption depends only on the exposure length \(l\), and not on the  time at which treatment begins. This assumption is reasonable when the treatment and the background environment are stable over time, but it may fail if there are  time-specific factors that modify treatment effects, such as seasonality, aggregate shocks, or changes in treatment implementation.

Without using the treatment assignments \(A_{[N]}\), we can estimate the \emph{second-order moments} of the residual vector \(R_i=(R_{i,1},\dots,R_{i,T})^\top\), defined by
\begin{equation}\label{equ:second_R}
\Sigma_R(x)
:=\mathbb E\!\left[R_i R_i^\top \mid X_i=x\right],
\quad
c_{t,s}(x):=[\Sigma_R(x)]_{t,s}
=\mathbb E\!\left[R_{i,t}R_{i,s}\mid X_i=x\right].
\end{equation}
Under the model introduced below, estimating the CATE vector \(\tau(X_i)\) using the treatment assignments \(A_{[N]}\) is straightforward and admits a closed-form expression; we will discuss this estimation strategy in \Cref{subsec:first_moment}.

\subsection{Additive outcome model}

\begin{assumption}\label{ass:additive_outcome_model}
We posit the additive model for observed outcomes over time:
\begin{equation}\label{eq:yit}
Y_{i,t}
= \mu_{0,t}(X_i)
+ \sum_{l=0}^{t-1}\mathbf{1}_{\{A_i=t-l\}}\tau_l(X_i)
+ \epsilon_{i,t}, \qquad t\in[T],
\end{equation}
where \(\epsilon_{i,t}\) satisfies
\(
\mathbb E[\epsilon_{i,t}\mid X_i,A_i]=0, t\in[T]
\).
\end{assumption}
Assumption~\ref{ass:additive_outcome_model} is flexible in several respects. The untreated mean \(\mu_{0,t}(x)\) is allowed to vary arbitrarily with both covariates and calendar time, and the treatment effect \(\tau_l(x)\) may be heterogeneous in \(x\) and depend on the exposure length \(l\). Thus, the model does not impose linearity or parametric structure on either the baseline outcome or the dynamic treatment effect. Its key restriction is an additive mean structure: conditional on \(X_i\) and \(A_i\), treatment shifts the outcome mean by a lag-specific amount, while the remaining error has mean zero. This formulation is reasonable when treatment primarily affects the conditional mean, and when the dynamic treatment response is stable across treatment start times after conditioning on covariates. It may fail, however, if the effect at a given lag varies across cohorts or calendar time, or if treatment interacts with unobserved variables in a non-additive way. Additional restrictions on the second moments of \(\epsilon_i\) are imposed below only where needed.


In \eqref{eq:yit}, the sum has at most one nonzero
term because $A_i$ takes a single value: if $A_i=t-l$ then the sum is
the lag-$l$ effect $\tau_l(X_i)$, while if $A_i>t$ (not yet treated) then the entire sum is zero.
Following the transform of \citet{robinson1988root}, 
we residualize \eqref{eq:yit} by subtracting off its expectation given $X_i$, which yields
\begin{equation}\label{eq:yit_expression}
R_{i,t}= Y_{i,t} - \mu_t(X_i)
=\sum_{l=0}^{t-1}\Big(\mathbf 1_{\{A_i=t-l\}}-\pi_{t-l}(X_i)\Big)\tau_l(X_i)+\epsilon_{i,t}.
\end{equation}
Equation \eqref{eq:yit_expression} rewrites the residual $R_{i,t}$ as the sum of a treatment component and a noise component.
The treatment component is a linear combination of the centered indicators
$\mathbf 1_{\{A_i=t-l\}}-\pi_{t-l}(X_i)$, which have mean zero conditional on $X_i$.
Moreover, since $A_i$ takes a single value, the treatment component measures how $A_i$ deviates from its expectation, scaled by the lagged effect $\tau_l(X_i)$. 
Rewrite \eqref{eq:yit_expression} in matrix form
\[
R_i=\Pi(X_i)\,\tau(X_i)+\epsilon_i,
\]
where $\epsilon_i:=(\epsilon_{i,1},\dots,\epsilon_{i,T})^\top$ and $\Pi(X_i)$ is a $T\times T$ \emph{random} assignment matrix with
\[
\big[\Pi(X_i)\big]_{t,l+1}
=
\big[\mathbf1_{\{A_i=t-l\}}-\pi_{t-l}(X_i)\big]
\mathbf 1_{\{l\le t-1\}},~\forall t\in [T],\ l=0,1,\dotsc, T-1.
\]
Substituting $R_i$ into the second moments $\Sigma_R(x)$ in \eqref{equ:second_R}, we obtain a decomposition
\begin{equation}\label{equ:decomposition}
\begin{split}
\Sigma_R(x) &  = \Sigma_{A}(x;\tau)+\Sigma_\epsilon(x),\\[5pt]
\Sigma_{A}(x;\tau)
& :=\mathbb E\!\left[\Pi(X_i)\tau(X_i)\tau(X_i)^\top\Pi(X_i)^\top \mid X_i=x\right],\\[5pt]
\Sigma_\epsilon(x) & :=\mathbb E\!\left[\epsilon_i\epsilon_i^\top \mid X_i=x\right].
\end{split}
\end{equation}
This decomposition provides a unified set of conditional moment restrictions.
Because $\Sigma_{A}(x;\tau)$ depends on $\tau(x)$ only through $\tau(x)\tau(x)^\top$, all second-order
restrictions are invariant to a \emph{global} sign flip $\tau(x)\mapsto -\tau(x)$. Thus, $\tau(x)$ is
identified up to an overall sign unless an additional normalization is imposed (e.g., the lag-$0$ effect $\tau_0(x)\ge 0$).
Different structural assumptions lead to different estimating equations for $\tau(x)$.

\subsection{Estimation from diagonal moments}\label{subsubsec:static_est}

We begin with a simplified setting in which the treatment effect is constant across lags. Under this assumption, the diagonal entries of \(\Sigma_R(x)\) take a particularly simple form and already suffice to identify the treatment-effect magnitude.

\begin{assumption}\label{assumption:diagonal}
For any \(x\in\mathcal X\) and any \(l\in\{0,\dots,T-1\}\), the lag-\(l\) effect is constant:
\[
\tau_l(x)\equiv \tau_{*}(x).
\]
\end{assumption}

\begin{assumption}[Time-invariant residual variance]\label{assumption:diag_variance}
For any \(x\in\mathcal X\),
\[
\mathbb E[\epsilon_{i,t}^2\mid X_i=x]=\sigma_\epsilon^2(x),
\qquad t\in[T].
\]
\end{assumption}

Assumptions~\ref{assumption:diagonal} reduces the lagged CATE vector to a single effect \(\tau_*(x)\), while \ref{assumption:diag_variance} makes the residual noise contribution common across time. Together, they imply that changes in the diagonal residual moments across \(t\) are driven by the known design factor \(v_t(x)\), which allows us to identify \(\tau_*^2(x)\). If residual variances vary over time, these diagonal moments would confound treatment-effect magnitude with time-varying noise.

\begin{proposition}\label{prop:diagonal_moment}
Suppose Assumptions~\ref{ass:additive_outcome_model}, \ref{assumption:diagonal}, and \ref{assumption:diag_variance} hold. Then, for each \(t\in[T]\),
\[
c_{t,t}(x)
=
[\Sigma_R(x)]_{t,t}
=
\tau_*^2(x)\,v_t(x)+\sigma_\epsilon^2(x),
\qquad
v_t(x):=\pi_{\le t}(x)\bigl(1-\pi_{\le t}(x)\bigr).
\]
\end{proposition}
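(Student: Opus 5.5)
We start from the residual representation \eqref{eq:yit_expression}. Under Assumption~\ref{assumption:diagonal} every lag carries the same effect, so the treatment component collapses to
\[
\sum_{l=0}^{t-1}\bigl(\mathbf 1_{\{A_i=t-l\}}-\pi_{t-l}(X_i)\bigr)\tau_*(X_i)
=\bigl(\mathbf 1_{\{A_i\le t\}}-\pi_{\le t}(X_i)\bigr)\tau_*(X_i).
\]
This uses the fact that the events $\{A_i=t-l\}$, $l=0,\dots,t-1$, are disjoint and their union is $\{A_i\le t\}$. It also uses the definition $\pi_{\le t}(x)=\sum_{l=0}^{t-1}\pi_{t-l}(x)$. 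Writing $D_{i,t}:=\mathbf 1_{\{A_i\le t\}}-\pi_{\le t}(X_i)$, we obtain the scalar form $R_{i,t}=D_{i,t}\tau_*(X_i)+\epsilon_{i,t}$.

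Next we square this expression and take conditional expectations given $X_i=x$. The result has three terms:
\[
c_{t,t}(x)=\tau_*^2(x)\,\mathbb E[D_{i,t}^2\mid X_i=x]+2\tau_*(x)\,\mathbb E[D_{i,t}\epsilon_{i,t}\mid X_i=x]+\mathbb E[\epsilon_{i,t}^2\mid X_i=x].
\]
\begin{itemize}
\item \emph{First term.} Given $X_i=x$, the indicator $\mathbf 1_{\{A_i\le t\}}$ is Bernoulli with mean $\pi_{\le t}(x)$. Hence $D_{i,t}$ is that Bernoulli variable centered at its mean, and its conditional variance is $\pi_{\le t}(x)(1-\pi_{\le t}(x))=v_t(x)$.
\item \emph{Third term.} This equals $\sigma_\epsilon^2(x)$ directly by Assumption~\ref{assumption:diag_variance}.
\item \emph{Cross term.} This is the only step needing care. $D_{i,t}$ is a function of $(X_i,A_i)$, and $\tau_*(X_i)$ is fixed given $X_i$. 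So we condition first on $(X_i,A_i)$ and use the tower property: $\mathbb E[D_{i,t}\epsilon_{i,t}\mid X_i]=\mathbb E\bigl[D_{i,t}\,\mathbb E[\epsilon_{i,t}\mid X_i,A_i]\mid X_i\bigr]=0$ by Assumption~\ref{ass:additive_outcome_model}.
\end{itemize}
Substituting the three terms gives the claimed formula.

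There is no real obstacle here. The points to verify are the collapse of the lag sum into $\mathbf 1_{\{A_i\le t\}}$, which uses disjointness of the start-time events, and that the cross term vanishes because the error has mean zero conditional on the full $(X_i,A_i)$, not merely on $X_i$. We also need enough integrability for the second moments to exist, which is implicit in the definition of $\Sigma_R$.
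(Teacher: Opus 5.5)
Your proposal is correct and follows the paper's proof essentially step for step. Both collapse the lag sum to $\tau_*(x)\bigl(\mathbf 1_{\{A_i\le t\}}-\pi_{\le t}(x)\bigr)$, kill the cross term using $\mathbb E[\epsilon_{i,t}\mid X_i,A_i]=0$, and evaluate the remaining terms as a Bernoulli variance plus $\sigma_\epsilon^2(x)$.
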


Under the additive model in Assumption~\ref{ass:additive_outcome_model}, the time-\(t\) outcome depends on whether treatment has started by time \(t\). The diagonal term
\(
c_{t,t}(x)=\mathbb E[R_{i,t}^2\mid X_i=x].
\)
is the conditional variance of the residualized outcome at time \(t\). 
Proposition~\ref{prop:diagonal_moment} shows that 
under the lag-invariant effect assumption, this variance has two components: variation from randomized treatment timing, equal to \(\tau_*^2(x)v_t(x)\), and residual outcome noise, equal to \(\sigma_\epsilon^2(x)\). The coefficient \(v_t(x)\) is known from the design. Thus, the only unknowns in the identity are the squared treatment-effect magnitude \(\tau_*^2(x)\) and the nuisance variance \(\sigma_\epsilon^2(x)\). Because \(\sigma_\epsilon^2(x)\) is common across \(t\), we can eliminate it by comparing diagonal moments over time. Averaging the identity in Proposition~\ref{prop:diagonal_moment} gives
\[
\bar c(x):=\frac{1}{T}\sum_{t=1}^T c_{t,t}(x)
=
\tau_*^2(x)\,\bar v(x)+\sigma_\epsilon^2(x),
\qquad
\bar v(x):=\frac{1}{T}\sum_{t=1}^T v_t(x).
\]

\begin{proposition}\label{prop:diagonal_identification}
Suppose Assumptions~\ref{ass:additive_outcome_model}, \ref{assumption:diagonal}, and \ref{assumption:diag_variance} hold. Fix \(x\in\mathcal X\). If there exists \(t\in[T]\) such that \(v_t(x)\neq \bar v(x)\), the magnitude \(m_*(x) := |\tau_*(x)|\) is point identified, with
\[
m_*(x)
=
\sqrt{\frac{c_{t,t}(x)-\bar c(x)}{v_t(x)-\bar v(x)}}.
\]
\end{proposition}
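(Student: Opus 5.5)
The identification follows directly from Proposition~\ref{prop:diagonal_moment}, which we take as given. It says that for every $s\in[T]$,
\[
c_{s,s}(x)=\tau_*^2(x)\,v_s(x)+\sigma_\epsilon^2(x).
\]
The right-hand side is affine in the known design coefficient $v_s(x)$, with two unknowns: the slope $\tau_*^2(x)$ and the intercept $\sigma_\epsilon^2(x)$. Assumption~\ref{assumption:diag_variance} is exactly what makes the intercept common across $s$. The plan is to remove this nuisance intercept by differencing against the time average, and then solve for the slope.

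The steps are as follows.

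\textbf{Step 1.} Average the identity over $s=1,\dots,T$. This gives
\[
\bar c(x)=\tau_*^2(x)\,\bar v(x)+\sigma_\epsilon^2(x),
\]
as already displayed before the statement.

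\textbf{Step 2.} Subtract this average from the identity at the chosen $t$. The common term $\sigma_\epsilon^2(x)$ cancels, leaving
\[
c_{t,t}(x)-\bar c(x)=\tau_*^2(x)\bigl(v_t(x)-\bar v(x)\bigr).
\]

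\textbf{Step 3.} By hypothesis $v_t(x)\neq\bar v(x)$, so we may divide. This gives
\[
\tau_*^2(x)=\frac{c_{t,t}(x)-\bar c(x)}{v_t(x)-\bar v(x)}.
\]
The ratio is automatically nonnegative, since it equals a square. Hence the square root is well defined, and it equals $|\tau_*(x)|=m_*(x)$.

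\textbf{Step 4.} Point identification means that $m_*(x)$ is a functional of quantities identified from the observed data distribution. Here:
\begin{itemize}
\item $c_{t,t}(x)$ and $\bar c(x)$ are functionals of the joint law of $(X_i,Y_i)$, through $\mu_t(x)=\mathbb E[Y_{i,t}\mid X_i=x]$ and $R_{i,t}$. Neither requires the assignments.
\item $v_t(x)$ and $\bar v(x)$ are known from the design, via $\pi_{\le t}(x)$.
\end{itemize}
So any two data-generating processes that satisfy the assumptions and share the same observable law must share the same $m_*(x)$. As a by-product, $\sigma_\epsilon^2(x)=\bar c(x)-\tau_*^2(x)\bar v(x)$ is identified too.

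The argument is essentially algebraic; the only point requiring care is the nondegeneracy condition. Without some $t$ with $v_t(x)\neq\bar v(x)$, all the $v_s(x)$ are equal. In that case the map $(\tau_*^2,\sigma_\epsilon^2)\mapsto(c_{s,s})_s$ is not injective: one can trade $\tau_*^2 v$ against $\sigma_\epsilon^2$ while keeping every $c_{s,s}$ fixed. This is why the hypothesis is needed, and it is the step I would flag.

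We do not need to re-derive Proposition~\ref{prop:diagonal_moment}. If it were questioned, the check is that
\[
\sum_{l}\bigl(\mathbf 1_{\{A_i=t-l\}}-\pi_{t-l}(X_i)\bigr)\tau_*=\tau_*\bigl(\mathbf 1_{\{A_i\le t\}}-\pi_{\le t}(X_i)\bigr),
\]
whose conditional variance is $\tau_*^2 v_t$. The cross term with $\epsilon_{i,t}$ vanishes because $\mathbb E[\epsilon_{i,t}\mid X_i,A_i]=0$.
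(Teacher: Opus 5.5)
Your proposal is correct and follows the paper's proof essentially line by line: you invoke Proposition~\ref{prop:diagonal_moment}, average over $t$ to get $\bar c(x)=\tau_*^2(x)\bar v(x)+\sigma_\epsilon^2(x)$, subtract to cancel $\sigma_\epsilon^2(x)$, divide by $v_t(x)-\bar v(x)\neq 0$, and take the nonnegative square root. Your extra remarks, that the moments are observable without assignments and that injectivity fails when all $v_s(x)$ coincide, are accurate additions that the paper's proof does not spell out.
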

Proposition~\ref{prop:diagonal_identification} shows that, after removing the common nuisance term 
\(\sigma_\epsilon^2(x)\) by differencing across time, the remaining variation in the diagonal moments identifies \(\tau_*^2(x)\). The key requirement is that the design generate nontrivial variation in \(v_t(x)\) across \(t\); otherwise, the diagonal moments do not contain enough information to separate \(\tau_*^2(x)\) from \(\sigma_\epsilon^2(x)\). Since the moment equation identifies only the squared effect, \(\tau_*(x)\) is identified up to sign.
Motivated by Proposition~\ref{prop:diagonal_identification}, we let
\begin{equation}\label{eq:tau_diagonal}
\hat\tau_{\pm}^{\mathrm{diag}}(x)
:=\hat m_*(x)I_T,
\qquad
I_T=(1,\ldots,1)^\top\in\mathbb R^T,
\end{equation}
where we estimate \(m_*(x)\) by plugging in estimators of the diagonal moments:
\[
\hat{m}_*(x)
:=
\sqrt{
\max\left\{
\frac{\hat c_{t,t}(x)-\hat{\bar c}(x)}{v_t(x)-\bar v(x)},0
\right\}
},
\qquad 
\hat{\bar c}(x):=\frac{1}{T}\sum_{t=1}^T \hat c_{t,t}(x).
\]
\begin{proposition}\label{prop:diag_consistency}
Suppose Assumptions~\ref{ass:additive_outcome_model}, \ref{assumption:diagonal}, and \ref{assumption:diag_variance} hold. Fix \(x\in\mathcal X\), and suppose there exists \(t\in[T]\) such that \(v_t(x)\neq \bar v(x)\). If
\(
\hat c_{t,t}(x)\xrightarrow{p} c_{t,t}(x)\) for all \(t\in[T]\),
then
\[
\hat{m}_*(x)\xrightarrow{p}m_*(x).
\]
\end{proposition}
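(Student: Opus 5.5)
This is a continuous-mapping argument. Fix $x$, and let $t$ be the index from the hypothesis with $v_t(x)\neq\bar v(x)$. The estimator $\hat m_*(x)$ is built from this same $t$. The design quantities $v_t(x)$ and $\bar v(x)$ are known and deterministic, so the only random inputs are the finite collection $\hat c_{1,1}(x),\dots,\hat c_{T,T}(x)$.

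\textbf{Step 1: joint convergence.} Each coordinate converges in probability, and there are finitely many of them, so the vector $(\hat c_{1,1}(x),\dots,\hat c_{T,T}(x))$ converges in probability to $(c_{1,1}(x),\dots,c_{T,T}(x))$. Since $\hat{\bar c}(x)$ is a fixed linear combination of these coordinates, $\hat c_{t,t}(x)-\hat{\bar c}(x)\xrightarrow{p} c_{t,t}(x)-\bar c(x)$.

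\textbf{Step 2: the map and its continuity.} Define
\[
g(u)=\sqrt{\max\{u/(v_t(x)-\bar v(x)),\,0\}}.
\]
The denominator is a nonzero constant, so $u\mapsto u/(v_t(x)-\bar v(x))$ is continuous. Both $y\mapsto\max\{y,0\}$ and $z\mapsto\sqrt z$ on $[0,\infty)$ are continuous, so $g$ is continuous on all of $\mathbb R$. This is precisely why the truncation at zero is included: the plug-in ratio may be negative in finite samples, while the square root needs a nonnegative argument. By the continuous mapping theorem,
\[
\hat m_*(x)=g\bigl(\hat c_{t,t}(x)-\hat{\bar c}(x)\bigr)\xrightarrow{p} g\bigl(c_{t,t}(x)-\bar c(x)\bigr).
\]

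\textbf{Step 3: identifying the limit.} By Proposition~\ref{prop:diagonal_moment} and the averaged identity,
\[
c_{t,t}(x)-\bar c(x)=\tau_*^2(x)\bigl(v_t(x)-\bar v(x)\bigr).
\]
So the ratio inside $g$ equals $\tau_*^2(x)\ge 0$, the truncation is inactive at the limit, and $g\bigl(c_{t,t}(x)-\bar c(x)\bigr)=\sqrt{\tau_*^2(x)}=|\tau_*(x)|=m_*(x)$, as in Proposition~\ref{prop:diagonal_identification}.

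There is no real obstacle. The one point needing care is continuity of $g$ at the boundary case $\tau_*(x)=0$, where the limiting argument is $0$ and $\sqrt{\cdot}$ is not differentiable. Continuity still holds there, which is all the continuous mapping theorem requires; a delta-method argument would fail at this point.
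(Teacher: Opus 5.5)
Your proof is correct and follows essentially the same route as the paper's: $\hat{\bar c}(x)\xrightarrow{p}\bar c(x)$ because only finitely many diagonal moments enter, the limiting ratio equals $\tau_*^2(x)$ by Proposition~\ref{prop:diagonal_identification}, and the continuous map $u\mapsto\sqrt{\max\{u,0\}}$ finishes the argument. Your remark that only continuity, not differentiability, is needed at the boundary case $\tau_*(x)=0$ is a worthwhile point that the paper leaves implicit.
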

Proposition~\ref{prop:diag_consistency} shows that the plug-in magnitude estimator \(\hat m_*(x)\) is consistent for \(m_*(x)=|\tau_*(x)|\). If it is known a priori that \(\tau_*(x)\ge 0\), we can recover the treatment effect itself by
\(
\hat\tau_*(x)=\hat{m}_*(x),
\)
which satisfies
\(
\hat\tau_*(x)\xrightarrow{p}\tau_*(x).
\)
Note that \eqref{eq:tau_diagonal} yields one estimator for each \(t\in[T]\) with \(v_t(x)\neq \bar v(x)\). These estimators may differ in finite samples, so we aggregate them to obtain the final estimator of \(m_*(x)\).

\subsection{Estimation from off-diagonal moments}\label{subsubsec:cross_est}

We next consider the more general case in which the CATEs may vary across lags. In this setting, off-diagonal moments become informative because, under the next assumption, residual errors are uncorrelated over time. Thus, any off-diagonal residual covariance comes from treatment timing and lag-specific treatment effects.
\begin{assumption}\label{assumption:offdiagonal}
For any \(x\in\mathcal X\), \(\Sigma_\epsilon(x)\) is diagonal, i.e.,
\[
\mathbb E[\epsilon_{i,t}\epsilon_{i,s}\mid X_i=x]=0
\qquad \text{for all } t\neq s.
\]
\end{assumption}
This assumption is reasonable when the remaining outcome shocks are approximately uncorrelated over time after accounting for covariates. It may fail in settings with latent shocks or other sources of serial dependence.
Let \(\mathcal P:=\{(t,s):1\le s<t\le T\}\), and collect the off-diagonal moments from \eqref{equ:second_R} into the vector
\[
c(x):=\bigl(c_{t,s}(x)\bigr)_{(t,s)\in\mathcal P}\in \mathbb R^{T(T-1)/2}.
\]

\begin{proposition}\label{prop:offdiag_moment}
Under Assumptions~\ref{ass:additive_outcome_model} and \ref{assumption:offdiagonal}, \(c(x)\) has the quadratic representation
\begin{equation}\label{equ:key_identity}
c(x)=F_x(\tau):=\bigl(\tau^\top H_{t,s}(x)\tau\bigr)_{(t,s)\in\mathcal P},
\end{equation}
where each \(H_{t,s}(x)\in\mathbb R^{T\times T}\) is a symmetric matrix depending only on the treatment assignment probabilities \(\pi_a(x)\), \(a\in[T+1]\).
\end{proposition}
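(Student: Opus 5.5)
We start from the decomposition \eqref{equ:decomposition}, $\Sigma_R(x)=\Sigma_A(x;\tau)+\Sigma_\epsilon(x)$, which follows from the matrix form $R_i=\Pi(X_i)\tau(X_i)+\epsilon_i$ of \eqref{eq:yit_expression}. We then restrict attention to off-diagonal entries $(t,s)\in\mathcal P$.

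\textbf{Step 1: the noise and cross terms vanish.} Under Assumption~\ref{assumption:offdiagonal}, $[\Sigma_\epsilon(x)]_{t,s}=0$ for $t\neq s$. The cross terms also need care. Expanding $R_iR_i^\top$ gives, besides $\Pi\tau\tau^\top\Pi^\top$ and $\epsilon\epsilon^\top$, the mixed products $\Pi(X_i)\tau(X_i)\epsilon_i^\top$ and its transpose. Their conditional expectations vanish by the tower property: conditional on $(X_i,A_i)$, the matrix $\Pi(X_i)$ and the vector $\tau(X_i)$ are fixed, and $\mathbb E[\epsilon_i\mid X_i,A_i]=0$ by Assumption~\ref{ass:additive_outcome_model}. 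Hence
\[
c_{t,s}(x)=[\Sigma_A(x;\tau)]_{t,s}=\tau(x)^\top\,\mathbb E\big[\Pi_{t\cdot}(X_i)^\top\Pi_{s\cdot}(X_i)\mid X_i=x\big]\,\tau(x),
\]
where $\Pi_{t\cdot}$ denotes the $t$-th row.

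\textbf{Step 2: define $H_{t,s}(x)$.} Set $M_{t,s}(x):=\mathbb E[\Pi_{t\cdot}^\top\Pi_{s\cdot}\mid X_i=x]$. Since a quadratic form only sees the symmetric part of its matrix, we take
\[
H_{t,s}(x):=\tfrac12\big(M_{t,s}(x)+M_{t,s}(x)^\top\big),
\]
which is symmetric by construction and gives $c_{t,s}(x)=\tau^\top H_{t,s}(x)\tau$.

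\textbf{Step 3: $H_{t,s}(x)$ depends only on the $\pi_a(x)$.} This is the main computation. Entry $(l+1,k+1)$ of $M_{t,s}(x)$ equals
\[
\mathbf 1_{\{l\le t-1,\,k\le s-1\}}\;\mathbb E\big[(\mathbf 1_{\{A_i=t-l\}}-\pi_{t-l})(\mathbf 1_{\{A_i=s-k\}}-\pi_{s-k})\mid X_i=x\big].
\]
The expectation is the covariance of two indicators of the single categorical variable $A_i$, which is
\[
\pi_{a}(x)\mathbf 1_{\{a=b\}}-\pi_a(x)\pi_b(x),\qquad a=t-l,\ b=s-k.
\]
This is an explicit function of the assignment probabilities alone.

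The only subtle point is Step 1's use of $\mathbb E[\epsilon_i\mid X_i,A_i]=0$, rather than mere unconditional mean zero, to kill the mixed terms. Once that is in place, the representation \eqref{equ:key_identity} follows entrywise.
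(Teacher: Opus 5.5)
Your proof is correct and follows the paper's argument. You remove the noise and mixed terms using Assumption~\ref{assumption:offdiagonal} and $\mathbb E[\epsilon_i\mid X_i,A_i]=0$, evaluate the covariance of the start-time indicators as $\pi_a(x)\mathbf 1_{\{a=b\}}-\pi_a(x)\pi_b(x)$, and symmetrize. Your $M_{t,s}(x)$ coincides entrywise with the paper's $G_{t,s}(x)=J_{t-s}D_s(x)-\pi^{(t)}(x)\pi^{(s)}(x)^\top$; the paper only writes it in this closed form, which it reuses in the proof of Proposition~\ref{prop:rank_sufficient_design}.
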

Proposition~\ref{prop:offdiag_moment} shows that the off-diagonal residual covariances depend on the CATE vector \(\tau(x)\) through quadratic forms. They therefore contain information about all components of \(\tau(x)\), not just the common magnitude identified by the diagonal moments. Since Assumption~\ref{assumption:offdiagonal} removes the off-diagonal error covariance, any remaining covariance between residuals at different time points must come from the shared treatment-timing indicators and the lag-specific treatment effects. This turns estimation of the lagged CATE vector into a moment-matching problem.

\subsubsection{Estimators}

To approximate the identity \eqref{equ:key_identity}, we estimate \(c(x)\) by \(\hat c(x)\), obtained by regressing the estimated residual products \(\hat R_{i,t}\hat R_{i,s}\) on \(X_i\) for all \((t,s)\in\mathcal P\), and then solve
\begin{equation}\label{eq:opt2_nls_pointwise}
\hat \tau_{\pm }^{\,\mathrm{nls}}(x)\in\arg\min_{\tau} \ell^{\,\mathrm{nls}}(\tau;\hat c,x,\rho),
\end{equation}
where \(\ell^{\,\mathrm{nls}}(\tau;\hat c,x,\rho)\) is the {\it pointwise} nonlinear least-squares (NLS) objective
\[
\ell^{\,\mathrm{nls}}(\tau;\hat c,x,\rho):=
\|\hat c(x)-F_x(\tau)\|_2^2+\rho\|\tau\|_2^2,
\]
and \(\rho\ge 0\) is a ridge penalty. 
Because \(F_x(\tau)\) takes the same value at \(\tau(x)\) and \(-\tau(x)\), the true \(\tau(x)\) is identifiable only up to sign. This is the same basic ambiguity that arises in other quadratic inverse problems \citep{shechtman2015phase,candes2013phaselift,candes2015wirtinger}. Below, we show that this is the only ambiguity under a local rank condition. 

The objective function in \eqref{eq:opt2_nls_pointwise} is nonconvex, so standard optimization methods such as Gauss--Newton or Levenberg--Marquardt may converge only to local minima. Our consistency result therefore applies to a global minimizer of the NLS objective; in computation, we use the convex relaxation below to provide a stable initialization and reduce the risk of converging to a poor local minimum.

Define
\(
\mathcal A_x(B):=\bigl(\mathrm{tr}(H_{t,s}(x)B)\bigr)_{(t,s)\in\mathcal P}.
\)
When \(B=\tau\tau^\top\), this gives the trace-form representation of the right-hand side of \eqref{equ:key_identity}. We estimate \(B(x)=\tau(x)\tau(x)^\top\) by
\begin{equation}\label{eq:opt2_convex_pointwise}
\hat B(x)\in\arg\min_{B\succeq 0} \ell^{\,\mathrm{cvx}}(B;\hat c,x,\lambda),
\end{equation}
where \(\ell^{\,\mathrm{cvx}}(B;\hat c,x,\lambda)\) is the convex objective
\[
\ell^{\,\mathrm{cvx}}(B;\hat c,x,\lambda)
:=
\|\hat c(x)-\mathcal A_x(B)\|_2^2+\lambda\,\mathrm{tr}(B).
\]
On the positive semidefinite cone, the trace penalty encourages a low-rank solution, reflecting the rank-one structure of \(B(x)\). We solve \eqref{eq:opt2_convex_pointwise} using a standard semidefinite programming solver. Since \(T\) is small in our applications, the matrix variable \(B\in\mathbb R^{T\times T}\) has only \(T(T+1)/2\) free entries, so this step is computationally modest. After obtaining \(\hat B(x)\), we compute its top eigenpair \((\hat\lambda_1(x),\hat u_1(x))\) and define the spectral estimator
\begin{equation}\label{eq:opt2_tau_sp}
\hat \tau_{\pm }^{\,\mathrm{sp}}(x):= \sqrt{\hat\lambda_1(x)}\,\hat u_1(x).
\end{equation}
The two estimators play different roles. The NLS criterion targets the CATE vector \(\tau(x)\) directly, but is nonconvex. The convex relaxation instead targets the rank-one matrix \(B(x)=\tau(x)\tau(x)^\top\), which can be estimated more stably and whose leading eigenvector provides a natural initializer for local optimization of the NLS objective. The result below also shows that, even without the NLS refinement step, the spectral estimator obtained from the convex relaxation can be itself consistent up to sign.

\subsubsection{Identification \& Consistency}
\label{sect:identify_consistent}

Define the compact parameter spaces
\[
\Theta:=\{\tau\in\mathbb R^T:\|\tau\|_2\le M\},
\qquad
\mathbb B:=\{B\succeq 0:\mathrm{tr}(B)\le M^2\}.
\]
Let \(J_x(\tau):=\frac{\partial F_x(\tau)}{\partial \tau^\top}\in\mathbb R^{|\mathcal P|\times T}\) denote the Jacobian of the moment map. 
\begin{proposition}[Local identification up to sign]\label{prop:opt2_identification}
Fix \(x\in\mathcal X\). Suppose \(\tau(x)\neq 0\) and \(\mathrm{rank}(J_x(\tau(x)))=T\). Then there exists an open neighborhood \(\mathcal N_x\) of \(\tau(x)\) such that
\[
\bigl\{\tau\in \mathcal N_x\cup(-\mathcal N_x):F_x(\tau)=c(x)\bigr\}
=
\{\tau(x),-\tau(x)\}.
\]
Locally, \(B(x)=\tau(x)\tau(x)^\top\) is point identified, while \(\tau(x)\) is identified up to sign.
\end{proposition}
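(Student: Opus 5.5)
The plan is to combine the sign symmetry of the quadratic map with an injectivity argument based on the full-rank Jacobian. By Proposition~\ref{prop:offdiag_moment}, $F_x(\tau)=(\tau^\top H_{t,s}(x)\tau)_{(t,s)\in\mathcal P}$ with each $H_{t,s}(x)$ symmetric. Hence $F_x(-\tau)=F_x(\tau)$, and the moment identity \eqref{equ:key_identity} gives $F_x(\pm\tau(x))=c(x)$. So both $\tau(x)$ and $-\tau(x)$ lie in the solution set, and the content of the proposition is the reverse inclusion.

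For the reverse inclusion, note that $F_x$ is a polynomial map, hence $C^1$. Its Jacobian is $J_x(\tau)=2\bigl(\tau^\top H_{t,s}(x)\bigr)_{(t,s)\in\mathcal P}$, a $|\mathcal P|\times T$ matrix that is linear in $\tau$. Since $\mathrm{rank}(J_x(\tau(x)))=T$, the matrix $J_x(\tau(x))$ has a left inverse. Full column rank is an open condition, so there is a ball $\mathcal N_x$ around $\tau(x)$ on which $F_x$ is injective. This is the constant-rank/immersion theorem, and it can be made explicit as follows. For $\tau,\tau'\in\mathcal N_x$, exactness of the quadratic form gives
\[
F_x(\tau)-F_x(\tau')=\bigl((\tau+\tau')^\top H_{t,s}(x)(\tau-\tau')\bigr)_{(t,s)}=\tfrac12 J_x(\tau+\tau')(\tau-\tau')=J_x\!\left(\tfrac{\tau+\tau'}{2}\right)(\tau-\tau').
\]
The midpoint $(\tau+\tau')/2$ lies in the convex ball $\mathcal N_x$, and on that ball $J_x$ has full column rank with smallest singular value bounded below. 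Hence $F_x(\tau)=F_x(\tau')$ forces $\tau=\tau'$. This exact midpoint formula is the step I expect to carry the argument, because it avoids any Taylor remainder.

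Now take $\tau\in\mathcal N_x$ with $F_x(\tau)=c(x)=F_x(\tau(x))$. Injectivity on $\mathcal N_x$ gives $\tau=\tau(x)$. If instead $\tau\in-\mathcal N_x$, then $-\tau\in\mathcal N_x$ and $F_x(-\tau)=F_x(\tau)=c(x)$, so $-\tau=\tau(x)$. This proves the set equality.

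One subtlety remains: $\mathcal N_x$ must not contain $-\tau(x)$, or the two branches overlap. This is handled by shrinking the radius below $\|\tau(x)\|_2$, which is positive since $\tau(x)\ne0$; in fact the rank condition already forces $\tau(x)\neq 0$, because $J_x(0)=0$. Finally, $B=\tau\tau^\top$ takes the same value on both solutions $\pm\tau(x)$, so $B(x)$ is locally point identified, while $\tau(x)$ is identified only up to the global sign.
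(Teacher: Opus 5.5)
Your proof is correct, but it reaches the key local-injectivity step by a different route than the paper.

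The paper picks \(T\) rows \(\mathcal I\subseteq\mathcal P\) with \(J_{\mathcal I}(\tau(x))\) invertible. It applies the inverse function theorem to the square map \(P_{\mathcal I}\circ F_x\) to get injectivity near \(\tau(x)\), and again near \(-\tau(x)\). It then shrinks to some \(\mathcal N_x\) with \(\mathcal N_x\cap(-\mathcal N_x)=\varnothing\) and finishes with evenness exactly as you do.

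You instead use the exact identity \(F_x(\tau)-F_x(\tau')=J_x\bigl(\tfrac{\tau+\tau'}{2}\bigr)(\tau-\tau')\). This holds because \(F_x\) is quadratic with symmetric \(H_{t,s}(x)\). You combine it with continuity of \(J_x\), which is linear in \(\tau\).

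The inverse-function-theorem route is generic: it works for any \(C^1\) moment map. However, it gives a neighborhood of unspecified size. Your argument is more elementary and gives more:
\begin{itemize}
\item injectivity on \emph{any} convex set around \(\tau(x)\) where \(J_x\) keeps full column rank;
\item the quantitative bound \(\|F_x(\tau)-F_x(\tau')\|_2\ge\sigma_{\min}\|\tau-\tau'\|_2\) on such a ball.
\end{itemize}
The bound is a local stability estimate. It ties directly to the paper's remark about checking the conditioning of \(J_x\) numerically, and could yield a rate in Theorem~\ref{thm:opt2_nls_consistency}.

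Your approach also makes two steps unnecessary: the paper's second application at \(-\tau(x)\), and your own shrinking step. A convex ball on which \(J_x\) has full rank cannot contain \(0\), since \(J_x(0)=0\). Because any ball centered at \(\tau(x)\) that contains \(-\tau(x)\) also contains their midpoint \(0\), such a ball automatically excludes \(-\tau(x)\). In any case, the set equality needs only injectivity on \(\mathcal N_x\) plus evenness, whether or not \(\mathcal N_x\) and \(-\mathcal N_x\) overlap.
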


Proposition~\ref{prop:opt2_identification} exploits the quadratic structure of
\(
F_x(\tau):=\bigl(\tau^\top H_{t,s}(x)\tau\bigr)_{(t,s)\in\mathcal P}.
\)
It shows that, in a neighborhood of the true CATE vector, the quadratic moment equations identify the full vector \(\tau(x)\) up to the unavoidable global sign. This is useful because the map \(F_x(\tau)\) is nonlinear and symmetric in \(\tau\), so a priori the off-diagonal moments could fail to distinguish many nearby CATE vectors. 

The rank condition rules this out locally.
For a small perturbation \(h\in\mathbb R^T\), the corresponding change in the \((t,s)\)-th moment is
\[
(\tau(x)+h)^\top H_{t,s}(x)(\tau(x)+h)-\tau(x)^\top H_{t,s}(x)\tau(x)
=
2h^\top H_{t,s}(x)\tau(x)+h^\top H_{t,s}(x)h.
\]
Thus, to first order, the effect of perturbing \(\tau(x)\) is determined by the vectors
\[
\{H_{t,s}(x)\tau(x):(t,s)\in\mathcal P\}.
\]
The rank condition \(\mathrm{rank}(J_x(\tau(x)))=T\) is equivalent to
\[
\operatorname{span}\{H_{t,s}(x)\tau(x):(t,s)\in\mathcal P\}
=
\mathbb R^T.
\]
There are \(|\mathcal P|=\binom{T}{2}\) such vectors in \(\mathbb R^T\), so for \(T\ge 3\), the off-diagonal moments provide at least as many restrictions as there are lag-specific effects. This makes the full-rank condition reasonable in standard treatment assignment designs. However, it is not automatic: the vectors may still be linearly dependent or nearly redundant if the assignment probabilities provide little variation across treatment times, or if the true effect vector \(\tau(x)\) lies in an unfavorable direction.

To make the full-rank condition more concrete, the next proposition gives a simple example where the rank can be checked under equal assignment probability \(q\). In this example, checking full rank boils down to a much simpler scalar check: whether a polynomial \(P_{T,\rho}(q)\) is nonzero for the given \(T\), \(\rho\), and \(q\). Equal assignment probabilities and decaying CATE patterns are common in practice, but this example is not meant to cover every design satisfying the full-rank condition.

\begin{proposition}[A sufficient equal-probability design]\label{prop:rank_sufficient_design}
Fix \(T\ge 3\) and \(x\in\mathcal X\). Suppose the treatment start time is assigned independently of \(X_i\), with
\[
\pi_1(x)=\cdots=\pi_T(x)=q,
\qquad
\pi_{T+1}(x)=1-Tq,
\]
for some \(q\in(0,1/T)\). Suppose the lagged CATE follows a geometric decay pattern
\[
\tau_l(x)=\theta(x)\rho^l,\qquad l=0,\ldots,T-1,
\]
where \(\theta(x)\neq0\) and \(\rho\in(0,1)\). Then there exists a nonzero polynomial \(P_{T,\rho}(q)\) such that, whenever $P_{T,\rho}(q)\neq 0,$
we have $\operatorname{rank}(J_x(\tau(x)))=T.$
\end{proposition}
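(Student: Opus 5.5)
The plan is to compute the moment map $F_x$ explicitly under the equal-probability design. I would then show that a well-chosen $T\times T$ minor of the Jacobian $J_x(\tau(x))$ is a polynomial in $q$ that is not identically zero. That polynomial, after removing harmless factors, is $P_{T,\rho}(q)$.

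\textbf{Step 1: explicit off-diagonal moments.} Write $f_t(a):=\tau_{t-a}\mathbf 1_{\{a\le t\}}$. Then the treatment part of $R_{i,t}$ in \eqref{eq:yit_expression} is $f_t(A_i)-\mathbb E[f_t(A_i)\mid X_i]$. By Assumption~\ref{assumption:offdiagonal}, for $(t,s)\in\mathcal P$ the moment $c_{t,s}(x)$ equals $\operatorname{Cov}(f_t(A_i),f_s(A_i)\mid X_i=x)$. With $\pi_a\equiv q$ for $a\le T$, this gives
\[
c_{t,s}=q\sum_{j=0}^{s-1}\tau_{t-s+j}\tau_j-q^2\Bigl(\sum_{k=0}^{t-1}\tau_k\Bigr)\Bigl(\sum_{k=0}^{s-1}\tau_k\Bigr).
\]
Differentiating gives $J_x(\tau)=qJ^{(1)}(\tau)+q^2J^{(2)}(\tau)$, where $J^{(1)}$ and $J^{(2)}$ do not depend on $q$. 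Both have entries that are linear in $\tau$.

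\textbf{Step 2: choose a minor.} Take the $T$ rows indexed by $(t,1)$ for $t=2,\dots,T$, together with the row $(3,2)$; this uses $T\ge3$. Let $D(q)$ be the determinant of this $T\times T$ submatrix, evaluated at $\tau_l=\theta\rho^l$. Then $D(q)=q^T\det\bigl(J^{(1)}+qJ^{(2)}\bigr)$, and by homogeneity it equals $\theta^T$ times a polynomial in $(q,\rho)$. I would define
\[
P_{T,\rho}(q):=\theta^{-T}q^{-T}D(q),
\]
which is a polynomial in $q$ of degree at most $T$. If $P_{T,\rho}(q)\neq0$, this minor is nonsingular because $q>0$ and $\theta\neq0$, so $\operatorname{rank}J_x(\tau(x))=T$.

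\textbf{Step 3 (the main point): $P_{T,\rho}\not\equiv0$.} It suffices that its constant term $\theta^{-T}\det J^{(1)}_{\text{rows}}$ is nonzero.

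For $d=t-1\in\{1,\dots,T-1\}$, row $(t,1)$ of $J^{(1)}$ is the gradient of $\tau_d\tau_0$. This row is $\tau_0e_d+\tau_de_0$, with coordinates indexed $0,\dots,T-1$. Row $(3,2)$ is the gradient of $\tau_1\tau_0+\tau_2\tau_1$, which is
\[
r=\tau_1e_0+(\tau_0+\tau_2)e_1+\tau_1e_2.
\]
The rows for $d\ge1$ have $\tau_0\neq0$ on the diagonal in columns $1,\dots,T-1$. Eliminating those columns from $r$ (a Schur complement) gives
\[
\det=\pm\,\tau_0^{T-1}\Bigl(r_0-\sum_{d\ge1}r_d\tau_d/\tau_0\Bigr)=\pm\,\tau_0^{T-1}\Bigl(\tau_1-\frac{(\tau_0+\tau_2)\tau_1}{\tau_0}-\frac{\tau_1\tau_2}{\tau_0}\Bigr)=\mp\,2\tau_0^{T-2}\tau_1\tau_2.
\]
At $\tau_l=\theta\rho^l$ this equals $\mp2\theta^T\rho^3\neq0$, since $\rho\in(0,1)$. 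Hence the constant term of $P_{T,\rho}$ is $\mp2\rho^3\neq0$, so $P_{T,\rho}$ is a nonzero polynomial.

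It follows that $P_{T,\rho}$ has at most $T$ roots, and the rank condition holds off this finite set. This gives the proposition.

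The step needing most care is the bookkeeping in Step 1. One must check that the centering produces exactly the $q^2$ product term, and that its gradient contributes only at order $q^2$, so it cannot cancel the $q^T$ leading term.
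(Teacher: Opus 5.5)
Your proposal is correct and follows essentially the same route as the paper. You use the same $T$ rows $(2,1),\dots,(T,1),(3,2)$, the same order-$q$ leading vectors, and the same determinant $\mp 2\tau_0^{T-2}\tau_1\tau_2=\mp2\theta^T\rho^3$ to show that the minor is a nonzero polynomial in $q$. The only difference is cosmetic: you divide out $q^T$, so the nonvanishing coefficient becomes the constant term of $P_{T,\rho}$. This is slightly cleaner than the paper's version, which keeps $q^T$ inside $P_{T,\rho}$ and calls that lowest-order term the ``leading'' term.
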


The condition \(P_{T,\rho}(q)\neq0\) should be interpreted as a generic nondegeneracy condition on the assignment probabilities. In the proof, \(P_{T,\rho}(q)\) arises as the determinant of a \(T\times T\) submatrix of the Jacobian, after factoring out the nonzero scale \(\theta(x)^T\). The key point is that this determinant is a nonzero polynomial in \(q\). Hence it can vanish only at finitely many values of \(q\). Therefore, for fixed \(T\) and \(\rho\), the full-rank condition holds for all equal-probability designs except for a finite set of exceptional assignment probabilities. Equivalently, if \(q\) were chosen from any continuous distribution on \((0,1/T)\), the rank condition would hold with probability one. In applications, one need not compute the polynomial explicitly; the rank of \(J_x(\hat\tau(x))\), or the conditioning of the corresponding Jacobian, can be checked numerically for the chosen design.

The local identification result in  \Cref{prop:opt2_identification} provides the basis for the global condition: it shows that no nearby CATE vector can reproduce the same moments, apart from the global sign flip. The global condition then only needs to rule out distant spurious solutions. Intuitively, when the treatment assignment design is sufficiently informative, these quadratic equations should determine \(\tau(x)\) uniquely up to this unavoidable sign ambiguity. To formalize this, define the sign-invariant distance
\begin{equation}\label{equ:unsigned_dist}
d_{\pm}(u,v):=\min\{\|u-v\|_2,\|u+v\|_2\}.
\end{equation}
\begin{proposition}[Global identification up to sign]\label{prop:global_id}
Fix \(x\in\mathcal X\). Suppose \(\tau(x)\in\Theta\), and suppose that for every \(\varepsilon>0\),
\[
\inf_{\tau\in\Theta:\ d_{\pm}(\tau,\tau(x))\ge \varepsilon}
\|F_x(\tau)-c(x)\|_2^2
>
0.
\]
Then \(\tau(x)\) is globally identified up to a single global sign on \(\Theta\), in the sense that
\[
\{\tau\in\Theta:F_x(\tau)=c(x)\}
=
\{\tau(x),-\tau(x)\}.
\]
Consequently, \(B(x)=\tau(x)\tau(x)^\top\) is globally point identified on \(\Theta\).
\end{proposition}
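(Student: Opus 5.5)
The argument is a direct set-inclusion check in two directions, followed by a short deduction for \(B(x)\).

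For the inclusion \(\{\tau(x),-\tau(x)\}\subseteq\{\tau\in\Theta:F_x(\tau)=c(x)\}\), we use Proposition~\ref{prop:offdiag_moment}, which gives \(F_x(\tau(x))=c(x)\). Since each coordinate of \(F_x\) is a quadratic form \(\tau^\top H_{t,s}(x)\tau\), the map is even, so \(F_x(-\tau(x))=F_x(\tau(x))=c(x)\). Membership in \(\Theta\) is immediate: \(\tau(x)\in\Theta\) by hypothesis, and \(\Theta\) is a Euclidean ball, hence symmetric, so \(-\tau(x)\in\Theta\) as well.

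For the reverse inclusion, take any \(\tau\in\Theta\) with \(F_x(\tau)=c(x)\) and argue by contradiction. Suppose \(\tau\notin\{\tau(x),-\tau(x)\}\). Then \(\varepsilon:=d_{\pm}(\tau,\tau(x))>0\), since \(d_\pm\) vanishes exactly when \(\tau=\pm\tau(x)\). So \(\tau\) belongs to the set \(\{\tau'\in\Theta:d_\pm(\tau',\tau(x))\ge\varepsilon\}\). On that set the assumed infimum of \(\|F_x(\cdot)-c(x)\|_2^2\) is strictly positive, which forces \(\|F_x(\tau)-c(x)\|_2^2>0\). This contradicts \(F_x(\tau)=c(x)\). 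Note that only the pointwise consequence of the infimum condition is used here; the uniform separation matters later, for consistency, but not for identification.

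The statement about \(B\) follows from the fact that the solution set is exactly \(\{\pm\tau(x)\}\) and \(\tau\tau^\top=(-\tau)(-\tau)^\top\). Hence every \(\tau\in\Theta\) solving the moment equations yields the same matrix \(\tau\tau^\top=B(x)\), so \(B(x)\) is globally point identified on \(\Theta\). There is no real obstacle in this argument. The only care needed is in checking that \(d_\pm\) is zero precisely on \(\{\pm\tau(x)\}\) and that \(\Theta\) is sign-symmetric, both of which are immediate from the definitions.
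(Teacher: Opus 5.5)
Your proof is correct and matches the paper's argument: evenness of the quadratic map \(F_x\) gives the forward inclusion, and the reverse inclusion follows by contradiction with the separation condition at \(\varepsilon=d_{\pm}(\tau,\tau(x))\). The paper uses half this value, which makes no difference since the set is defined by \(\ge\varepsilon\); your explicit check that \(-\tau(x)\in\Theta\) by symmetry of the ball is a small point the paper leaves implicit.
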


Proposition~\ref{prop:global_id} strengthens the local identification result in \Cref{prop:opt2_identification} from a neighborhood of \(\tau(x)\) to the full parameter space \(\Theta\). It rules out distant CATE vectors that reproduce the same off-diagonal residual covariances as \(\tau(x)\), except for the unavoidable global sign flip. This is the condition that prevents a global minimizer of the NLS objective from converging to a distant spurious solution.

For the convex relaxation, we use the analogous matrix-level condition stated in Theorem~\ref{thm:opt2_convex_consistency}: \(B(x)=\tau(x)\tau(x)^\top\) is the unique minimizer of the population convex objective over \(\mathbb B\). This condition rules out alternative positive semidefinite matrices 
\(B\neq \tau(x)\tau(x)^\top\) that produce the same off-diagonal moments. 
Once \(B(x)=\tau(x)\tau(x)^\top\) is uniquely identified, its leading eigenvector 
recovers the direction of \(\tau(x)\), up to a global sign. Thus the spectral estimator targets the equivalence class \(\{\tau(x),-\tau(x)\}\).

\begin{assumption}\label{assump:opt2_moment_cons}
Suppose \(N\to\infty\),
\begin{enumerate}
    \item for any \(x\in\mathcal X\), \(\hat c(x)\xrightarrow{p} c(x)\);
    \item for any compact \(\mathcal X_0\subset\mathcal X\),
    \(
    \sup_{x\in\mathcal X_0}\|\hat c(x)-c(x)\|_2 \xrightarrow{p} 0.
 \)
\end{enumerate}
\end{assumption}

\begin{assumption}[Regularity and boundedness]\label{assump:opt2_reg}
Fix a compact \(\mathcal X_0\subset\mathcal X\). Assume:
\begin{enumerate}
\item \(H_{t,s}(x)\) is continuous in \(x\) for all \((t,s)\in\mathcal P\);
\item \(\sup_{x\in\mathcal X_0}\max_{(t,s)\in\mathcal P}\|H_{t,s}(x)\|_{\mathrm{op}}<\infty\);
\item \(\sup_{x\in\mathcal X_0}\|\tau(x)\|_2\le M<\infty\).
\end{enumerate}
\end{assumption}

\begin{theorem}[Consistency of convex relaxation]\label{thm:opt2_convex_consistency}
Suppose Assumptions~\ref{assump:opt2_moment_cons}--\ref{assump:opt2_reg} hold. Fix \(x\in\mathcal X_0\). Assume that \(B(x):=\tau(x)\tau(x)^\top\) is the unique minimizer of \(\ell^{\,\mathrm{cvx}}(B;c,x,0)\) over \(\mathbb B\). Then any minimizer \(\hat B(x)\) of \(\ell^{\,\mathrm{cvx}}(B;\hat c,x,\lambda)\) over \(\mathbb B\), with \(\lambda=o(1)\), satisfies
\[
\|\hat B(x)-B(x)\|_F\xrightarrow{p}0,
\quad \text{as}\quad N\rightarrow\infty.
\]
Moreover, if \(\|\tau(x)\|_2\ge c_{\tau}\) for some \(c_{\tau}>0\), then the spectral estimator \(\hat\tau_{\pm}^{\,\mathrm{sp}}(x)\) satisfies
\[
d_{\pm}(\hat\tau_{\pm}^{\,\mathrm{sp}}(x),\tau(x))
\xrightarrow{p}0,
\quad \text{as}\quad N\rightarrow\infty.
\]
\end{theorem}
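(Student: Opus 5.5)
The proof is a standard argmin-consistency argument on the compact set \(\mathbb B\), followed by eigenvector perturbation to pass from \(\hat B(x)\) to \(\hat\tau^{\,\mathrm{sp}}_{\pm}(x)\).

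\textbf{Step 1: uniform convergence of the objective.} Write \(L_N(B):=\ell^{\,\mathrm{cvx}}(B;\hat c,x,\lambda)\) and \(L(B):=\ell^{\,\mathrm{cvx}}(B;c,x,0)=\|c(x)-\mathcal A_x(B)\|_2^2\). For \(B\in\mathbb B\) we have \(\|B\|_F\le\mathrm{tr}(B)\le M^2\), and \(|\mathrm{tr}(H_{t,s}(x)B)|\le\|H_{t,s}(x)\|_{\mathrm{op}}\mathrm{tr}(B)\). By Assumption~\ref{assump:opt2_reg}, \(\mathcal A_x(B)\) is therefore bounded on \(\mathbb B\), say by \(K\). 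Expanding the squares,
\[
\sup_{B\in\mathbb B}|L_N(B)-L(B)|\le \|\hat c(x)-c(x)\|_2^2+2\|\hat c(x)-c(x)\|_2\bigl(\|c(x)\|_2+K\bigr)+\lambda M^2 .
\]
This bound tends to zero in probability by Assumption~\ref{assump:opt2_moment_cons}(1) and \(\lambda=o(1)\).

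\textbf{Step 2: well-separated minimum.} \(L\) is continuous, \(\mathbb B\) is compact, and \(B(x)\in\mathbb B\) because \(\mathrm{tr}(\tau\tau^\top)=\|\tau(x)\|_2^2\le M^2\). By the uniqueness assumption, for every \(\varepsilon>0\),
\[
\delta_\varepsilon:=\inf_{B\in\mathbb B,\ \|B-B(x)\|_F\ge\varepsilon}L(B)-L(B(x))>0 .
\]
Since \(\hat B(x)\) minimizes \(L_N\) and \(B(x)\) is feasible,
\[
L(\hat B(x))-L(B(x))\le 2\sup_{\mathbb B}|L_N-L|=o_p(1).
\]
Hence \(\mathbb P\{\|\hat B(x)-B(x)\|_F\ge\varepsilon\}\le\mathbb P\{2\sup_{\mathbb B}|L_N-L|\ge\delta_\varepsilon\}\to0\). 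The part of the argument that needs care is that uniqueness of the minimizer is assumed only for the unpenalized population objective \(L\). The penalty term \(\lambda\,\mathrm{tr}(B)\) is bounded by \(\lambda M^2\) on \(\mathbb B\), so it is absorbed into the \(o_p(1)\) term in Step 1 and needs no separate treatment.

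\textbf{Step 3: spectral estimator.} The target \(B(x)=\tau\tau^\top\) has top eigenvalue \(\|\tau\|_2^2\ge c_\tau^2\), eigenvector \(u=\tau/\|\tau\|_2\), and all other eigenvalues zero, so the eigengap is at least \(c_\tau^2\). By Weyl's inequality, \(|\hat\lambda_1-\|\tau\|_2^2|\le\|\hat B-B\|_{\mathrm{op}}\to_p0\), so \(\sqrt{\hat\lambda_1}\to_p\|\tau\|_2\). By Davis--Kahan, for a suitable sign \(s\in\{\pm1\}\),
\[
\|s\hat u_1-u\|_2\le \frac{2\sqrt2\,\|\hat B-B\|_{\mathrm{op}}}{c_\tau^2}\to_p0 .
\]
Using this sign in the definition of \(d_\pm\),
\[
d_\pm(\hat\tau^{\,\mathrm{sp}}_\pm,\tau)\le\bigl|\sqrt{\hat\lambda_1}-\|\tau\|_2\bigr|+\|\tau\|_2\,\|s\hat u_1-u\|_2\to_p0 .
\]

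The main obstacle is the identification step in Step 2, which turns uniqueness into a positive separation \(\delta_\varepsilon\). It relies on compactness of \(\mathbb B\) and continuity of \(L\), both of which hold here. Everything else is routine.
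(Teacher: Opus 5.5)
Your proposal is correct and follows essentially the same route as the paper: you show uniform convergence of the penalized sample objective to the population objective on the compact set $\mathbb B$ (absorbing $\lambda\,\mathrm{tr}(B)\le\lambda M^2$), you get a well-separated minimum from uniqueness plus compactness, and you finish with Weyl's inequality and Davis--Kahan using the eigengap $\|\tau(x)\|_2^2\ge c_\tau^2$. The differences are cosmetic: you use the standard inequality $L(\hat B)-L(B(x))\le 2\sup_{\mathbb B}|L_N-L|$ where the paper argues on the event $\sup_{\mathbb B}|L_N-L|\le\delta_\varepsilon/3$, and you write out the Davis--Kahan constant and the final triangle-inequality bound explicitly.
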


Theorem~\ref{thm:opt2_convex_consistency} has two implications. First, the convex program consistently recovers the rank-one matrix \(B(x)\). Second, once \(B(x)\) is recovered, the leading eigenvector consistently recovers \(\tau(x)\) up to a global sign. This justifies the program as a stable route to recovering the signal structure before resolving the remaining sign ambiguity.

\begin{theorem}[Consistency of NLS]\label{thm:opt2_nls_consistency}
Suppose Assumptions~\ref{assump:opt2_moment_cons}--\ref{assump:opt2_reg} hold. Fix \(x\in\mathcal X_0\). Assume that \(\ell^{\,\mathrm{nls}}(\tau;c,x,0)\) has the unique minimizer set \(\{\tau(x),-\tau(x)\}\) over \(\Theta\). Then any global minimizer \(\hat\tau_{\pm}^{\,\mathrm{nls}}(x)\) of \(\ell^{\,\mathrm{nls}}(\tau;\hat c,x,\rho)\) over \(\Theta\), with \(\rho=o(1)\), satisfies
\[
d_{\pm}(\hat \tau_{\pm}^{\,\mathrm{nls}}(x),\tau(x))
\xrightarrow{p}0.
\]
\end{theorem}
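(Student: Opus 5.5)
This is a standard argmin-consistency (M-estimation) argument on the compact set \(\Theta\), carried out in the quotient by the sign flip.

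\textbf{Step 1: uniform convergence of the criterion.} Write \(Q_N(\tau):=\ell^{\,\mathrm{nls}}(\tau;\hat c,x,\rho)\) and \(Q(\tau):=\ell^{\,\mathrm{nls}}(\tau;c,x,0)=\|c(x)-F_x(\tau)\|_2^2\). By Assumption~\ref{assump:opt2_reg}, for \(\tau\in\Theta\),
\[
|F_{x,(t,s)}(\tau)|\le \|H_{t,s}(x)\|_{\mathrm{op}}M^2\le K<\infty .
\]
Also \(\|c(x)\|_2=\|F_x(\tau(x))\|_2\) is bounded, since \(\tau(x)\in\Theta\) and Proposition~\ref{prop:offdiag_moment} holds. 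Expanding the squares gives
\[
\sup_{\tau\in\Theta}|Q_N(\tau)-Q(\tau)|
\le \|\hat c(x)-c(x)\|_2\bigl(\|\hat c(x)-c(x)\|_2+2\|c(x)\|_2+2\sqrt{|\mathcal P|}K\bigr)+\rho M^2 .
\]
This bound is \(o_p(1)\) by Assumption~\ref{assump:opt2_moment_cons}(1) and \(\rho=o(1)\).

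\textbf{Step 2: well-separated minimum modulo sign.} The map \(Q\) is continuous, since \(F_x\) is quadratic. It is also even, and it vanishes exactly on \(\{\pm\tau(x)\}\) by the uniqueness hypothesis, because \(Q\ge 0\) and \(Q(\tau(x))=0\). Fix \(\varepsilon>0\). The set
\[
K_\varepsilon:=\{\tau\in\Theta:d_{\pm}(\tau,\tau(x))\ge\varepsilon\}
\]
is compact: it is closed because \(d_\pm(\cdot,\tau(x))\) is continuous as a minimum of two continuous functions, and it lies in the compact \(\Theta\). If \(K_\varepsilon\) is empty there is nothing to prove. Otherwise \(Q\) attains its infimum \(\delta_\varepsilon\) on \(K_\varepsilon\), and \(\delta_\varepsilon>0\) because \(K_\varepsilon\) excludes \(\pm\tau(x)\). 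This is exactly the separation condition of Proposition~\ref{prop:global_id}, here derived from the uniqueness hypothesis.

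\textbf{Step 3: the argmin argument.} Let \(\hat\tau=\hat\tau_{\pm}^{\,\mathrm{nls}}(x)\) be any global minimizer over \(\Theta\). Since \(\tau(x)\in\Theta\),
\[
Q(\hat\tau)\le Q_N(\hat\tau)+o_p(1)\le Q_N(\tau(x))+o_p(1)\le Q(\tau(x))+o_p(1)=o_p(1).
\]
Hence
\[
\mathbb P\{d_\pm(\hat\tau,\tau(x))\ge\varepsilon\}\le\mathbb P\{Q(\hat\tau)\ge\delta_\varepsilon\}\to0 .
\]
The uniform bound is a deterministic inequality holding for every \(\tau\in\Theta\), so it applies at the data-dependent \(\hat\tau\) with no measurability issues beyond the usual outer-probability convention.

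\textbf{Main obstacle.} The only real issue is the non-identifiability from the sign symmetry, which rules out the usual argument that a unique minimizer attracts \(\hat\tau\). The fix is to work throughout with the quotient distance \(d_\pm\) and the compact sets \(K_\varepsilon\). No selection of a sign for \(\hat\tau\) is needed, because \(d_\pm\) is invariant to the sign flip.
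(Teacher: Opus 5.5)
Your proposal is correct and is essentially the paper's proof: both bound $\sup_{\tau\in\Theta}|Q_N(\tau)-Q(\tau)|$ by expanding the squares, then use a positive gap $\inf_{K_\varepsilon}Q-Q(\tau(x))$ on the compact set $K_\varepsilon=\{\tau\in\Theta:d_{\pm}(\tau,\tau(x))\ge\varepsilon\}$ to rule out global minimizers in $K_\varepsilon$ with probability tending to one. The one difference is that you also use $Q(\tau(x))=0$, which rests on the identity $c(x)=F_x(\tau(x))$ from Proposition~\ref{prop:offdiag_moment} (Assumptions~\ref{ass:additive_outcome_model} and~\ref{assumption:offdiagonal} are standing in that section but not listed in the theorem), whereas the paper works with the gap relative to $Q(\tau(x))$ and does not need it; your argument goes through unchanged with $0$ replaced by $Q(\tau(x))$.
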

Theorem~\ref{thm:opt2_nls_consistency} shows that any global NLS minimizer consistently recovers the lagged CATE vector up to a global sign. Relative to the convex relaxation, the NLS criterion targets \(\tau(x)\) directly rather than the intermediate matrix \(B(x)\). The price is computational rather than statistical: the theorem applies to global minimizers, whereas numerical algorithms may find only local stationary points. This is why, in practice, the convex estimator is useful as a warm start even when the final target is the NLS solution.

\section{CATE-assisted randomization tests}
\label{subsec:te_in_rt}

Given an unsigned CATE estimator from the previous section, we now explain how to orient it and use it to construct a CATE-assisted randomization test.

\subsection{Orienting the unsigned CATE estimator}\label{sect:sign_cate}

Let
$\hat \tau_{\pm}(x)
=
(\hat \tau_{\pm,0}(x),\dotsc,\hat \tau_{\pm,T-1}(x))^\top$
denote an unsigned CATE estimator. This may be the diagonal estimator
\(\hat \tau_{\pm}^{\mathrm{diag}}(x)\) from \Cref{subsubsec:static_est}, or the off-diagonal spectral or NLS estimators
\(\hat \tau_{\pm}^{\,\mathrm{sp}}(x)\) and
\(\hat \tau_{\pm}^{\,\mathrm{nls}}(x)\) from \Cref{subsubsec:cross_est}.
If the sign of \(\tau_l(x)\) is known for some lag \(l\), we can choose between \(\hat\tau_{\pm}(x)\) and \(-\hat\tau_{\pm}(x)\) to match that sign.

When no prior sign information is available, we choose the sign that best fits the residualized outcome trajectory in \eqref{eq:yit_expression}. Specifically, for each unit \(i\), define
\begin{equation}\label{equ:s_x}
\hat s_i
=
\argmin_{s\in\{-1,1\}}
\left\{
\min_{a\in[T+1]}
\sum_{t=1}^{T}
\left[
\hat R_{i,t}
-
\sum_{l=0}^{t-1}
\Bigl(
\mathbf 1_{\{a=t-l\}}-\pi_{t-l}(X_i)
\Bigr)
s\,\hat\tau_{\pm,l}(X_i)
\right]^2
\right\}.
\end{equation}
This criterion chooses the sign that makes the residualized outcome path most compatible with the estimated treatment effects, allowing the treatment start time \(a\) to vary over its possible values. Equivalently, the squared loss can be viewed as a Gaussian working negative log-likelihood for the residualized outcomes conditional on \(X_i\) and a candidate treatment start time. We then define the signed CATE estimator by
\[
\hat\tau(X_i):=\hat s_i\,\hat\tau_{\pm}(X_i).
\]
Because \(\hat s_i\) is chosen using unit \(i\)'s residual trajectory, \(\hat\tau(X_i)\) should be understood as a fixed unit-level signed CATE estimate at the observed covariate value \(X_i\), rather than as a newly fitted function on all of \(\mathcal X\). The test statistics below only require these fitted values for the observed units \(i\in[N]\).

We orient the CATE vector separately for each unit. This step is not guaranteed to recover the correct sign for every unit, even if the other nuisance estimators are consistent. The reason is simple: each unit provides only one finite outcome trajectory, and we do not use its realized treatment start time \(A_i\) when choosing the sign. Although \(A_i\) is randomized, the covariates and outcomes still contain information about which treatment start time best explains the residual trajectory. Unless this prediction of $A_i$ is perfect, the selected sign can be wrong. 
The next proposition makes this precise in an oracle setting, where the unsigned CATE is known or well estimated.

\begin{proposition}[Oracle sign recovery with observed assignment]\label{prop:gaussian_sign_recovery}
Suppose that 
\begin{enumerate}
    \item the assignment design is balanced, \(\pi_t(x)=1/(T+1)\) for \(t\in[T+1]\);
    \item the lagged CATE vector \(\tau(X_i)\) satisfies, for \(l=0,\ldots,T-1\),
    \[
    \tau_l(X_i)=s_i\,\theta(X_i)\rho^l,
    \qquad
    s_i\in\{-1,1\},\quad \theta(X_i)>0,\quad \rho\in(0,1);
    \]
    \item Assumption~\ref{ass:additive_outcome_model} holds, and conditional on \(X_i\) and \(A_i\), the error vector \(\epsilon_i\) is mean-zero sub-Gaussian with variance proxy \(\sigma^2\).
\end{enumerate}
Let \(\hat s_i^{\mathrm{obs}}\) be the oracle sign estimator obtained from the loss in \eqref{equ:s_x}, using the true residuals and the true CATE magnitudes \(|\tau_l(X_i)|=\theta(X_i)\rho^l\), with the observed assignment \(A_i\) fixed in the loss. Then, for any unit with \(A_i=a\le T\),
\[
\mathbb P(\hat s_i^{\mathrm{obs}}\neq s_i\mid X_i,A_i=a)
\le
\exp\left\{
-\frac{\theta^2(X_i)}{2\sigma^2}
\left(1-\frac{a}{T+1}\right)^2
\right\}.
\]
\end{proposition}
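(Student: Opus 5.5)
The plan is to reduce the sign comparison to a one-dimensional sub-Gaussian tail bound, and then lower bound the signal norm using only the $t=a$ coordinate.

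\textbf{Step 1: a simplified loss.} Fix $X_i$ and $A_i=a\le T$, and write $m_l:=\theta(X_i)\rho^l$ for the true magnitudes. Let $\Pi_a$ be the deterministic matrix obtained from $\Pi(X_i)$ by setting $A_i=a$, so that $[\Pi_a]_{t,l+1}=(\mathbf 1_{\{a=t-l\}}-\tfrac{1}{T+1})\mathbf 1_{\{l\le t-1\}}$. Define $u:=\Pi_a m$. By \eqref{eq:yit_expression} under Assumption~\ref{ass:additive_outcome_model} and condition 2, the true residual vector is $R_i=s_i u+\epsilon_i$. With the assignment fixed at $a$, the loss in \eqref{equ:s_x} for a candidate sign $s$ becomes $L(s)=\|R_i-s u\|_2^2$.

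\textbf{Step 2: the error event.} The loss at the true sign is $L(s_i)=\|\epsilon_i\|_2^2$. The loss at the wrong sign is
\[
L(-s_i)=\|2s_iu+\epsilon_i\|_2^2=\|\epsilon_i\|_2^2+4\|u\|_2^2+4s_iu^\top\epsilon_i .
\]
A wrong sign can be selected only if $L(-s_i)\le L(s_i)$; counting ties conservatively as errors, this is the event $-s_iu^\top\epsilon_i\ge\|u\|_2^2$.

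\textbf{Step 3: the tail bound.} Conditional on $X_i$ and $A_i$, the vector $u$ is deterministic and $\epsilon_i$ is mean-zero sub-Gaussian with variance proxy $\sigma^2$. Hence $-s_iu^\top\epsilon_i$ is sub-Gaussian with proxy $\sigma^2\|u\|_2^2$, and the Chernoff bound gives
\[
\mathbb P(\hat s_i^{\mathrm{obs}}\ne s_i\mid X_i,A_i=a)\le\exp\bigl\{-\|u\|_2^4/(2\sigma^2\|u\|_2^2)\bigr\}=\exp\bigl\{-\|u\|_2^2/(2\sigma^2)\bigr\}.
\]
If $u=0$ the bound would be trivial, but Step 4 shows that $u\ne0$.

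\textbf{Step 4: lower bound on $\|u\|_2$ (the main step).} It suffices to use the single coordinate $t=a$, since $\|u\|_2^2\ge u_a^2$. In row $a$, the indicator $\mathbf 1_{\{a=a-l\}}$ is nonzero only at $l=0$, so
\[
u_a=\theta(X_i)-\frac{1}{T+1}\sum_{l=0}^{a-1}\theta(X_i)\rho^l .
\]
Because $\rho\in(0,1)$, the sum satisfies $\sum_{l=0}^{a-1}\rho^l\le a$. Therefore
\[
u_a\ge\theta(X_i)\Bigl(1-\frac{a}{T+1}\Bigr)>0,
\]
where positivity uses $a\le T$ and $\theta(X_i)>0$. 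This gives $\|u\|_2^2\ge\theta^2(X_i)(1-a/(T+1))^2$. Substituting into the bound from Step 3 yields the stated inequality.

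The only delicate point is this lower bound on $\|u\|_2$: the centering terms $-\pi$ partially cancel the signal. Taking just the $t=a$ coordinate and bounding the geometric sum by $a$ is enough to handle this cancellation.
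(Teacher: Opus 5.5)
Your proof is correct and follows the paper's proof essentially step for step. You reduce the error event to $L(-s_i)\le L(s_i)$, apply the sub-Gaussian tail bound to $u^\top\epsilon_i$ with proxy $\sigma^2\|u\|_2^2$, and lower bound $\|u\|_2^2$ by the single coordinate $t=a$ using $\sum_{l<a}\rho^l\le a$.
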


Proposition~\ref{prop:gaussian_sign_recovery} is an oracle benchmark for the sign-learning problem. It fixes the realized assignment \(A_i\), so the remaining task is only to choose between two signs. In this simplified setting, where first-stage estimation error is ignored, the probability of choosing the wrong sign is governed by a simple signal-to-noise ratio and the treatment start time. The bound is not meant to be sharp; it uses a simple lower bound to make the intuition transparent.
First, the bound shows that the sign is easier to recover when the effect scale \(\theta(X_i)\) is large and the noise variance \(\sigma^2\) is small. Second, the bound shows why early treatment starts are especially helpful for sign recovery. If \(A_i=a\) is small, then many later outcomes are post-treatment, so the signed treatment effect appears repeatedly across the outcome path. If \(a\) is close to \(T\), most outcomes are pre-treatment and only a few observations carry information about the signed effect. In that case, the two signs are much harder to tell apart.

The fully assignment-free rule in \eqref{equ:s_x} is more delicate because it does not observe \(A_i\). 
Suppose, for the moment, that we ignore first-stage estimation error. Under the correct sign \(s_i\), the loss for a candidate time \(\tilde a\) is
\[
\begin{aligned}
\ell_i^+(\tilde a)
&=
\sum_{t=1}^T
\bigg[
R_{i,t}
-
\sum_{l=0}^{t-1}
\Bigl(
\mathbf 1_{\{\tilde a=t-l\}}-\pi_{t-l}(X_i)
\Bigr)
s_i\,\theta(X_i)\rho^l
\bigg]^2 \\
&=
\sum_{t=1}^T
\Big[
\epsilon_{i,t}
+
s_i\theta(X_i)
\left\{
\mathbf 1_{\{A_i\le t\}}\rho^{t-A_i}
-
\mathbf 1_{\{\tilde a\le t\}}\rho^{t-\tilde a}
\right\}
\Big]^2 .
\end{aligned}
\]
Thus, in the noiseless case, \(\ell_i^+(\tilde a)\) is minimized at \(\tilde a=A_i\). 
A misspecified start time, especially one far from \(A_i\), creates a systematic mismatch along the residual trajectory. With noise, however, exact recovery of \(A_i\) is not necessary for sign recovery.
Define 
\[
\ell_i(s)=\min_{\tilde a\in[T+1]}\sum_{t=1}^T
\bigg[
R_{i,t}
-
\sum_{l=0}^{t-1}
\Bigl(
\mathbf 1_{\{\tilde a=t-l\}}-\pi_{t-l}(X_i)
\Bigr)
s\,\theta(X_i)\rho^l
\bigg]^2.
\]
The sign is recovered whenever \(\ell_i(s_i)<\ell_i(-s_i)\): the best-fitting trajectory under the correct sign only needs to beat the best-fitting trajectory under the wrong sign.

A complementary strategy is to use a small subset of observed assignments only for sign learning. This orients \(\hat\tau_{\pm}\) while keeping most units available for randomization inference. Importantly, this is still much simpler than estimating the full CATE from the labeled subset, because the second stage only needs to choose the sign. We revisit this point in the experiments, where a small fraction of observed assignments is enough to orient \(\hat\tau_{\pm}\) effectively and produce an accurate signed CATE estimator.

\subsection{CATE-assisted test statistics}\label{sect:cate_stat}

Let \(\hat\eta\) collect the nuisance functions used to construct the test statistic:
\[
\hat \eta
=
\bigl(
\hat\mu_1,\dots,\hat\mu_T,
\hat\tau_0,\dots,\hat\tau_{T-1},
\pi_1,\dots,\pi_{T+1}
\bigr).
\]
The nuisance quantities in \(\hat\eta\) are constructed from the observed covariates, outcomes, and the known assignment mechanism, without fitting on the realized assignments. Hence they satisfy the conditional independence condition required in \Cref{thm:validity}.

Once we obtain the unit-level signed CATE estimates \(\hat\tau(X_i)\), we use the additive outcome model in \Cref{ass:additive_outcome_model} to estimate the control outcome functions by
\[
\hat\mu_{0,t}(X_i)
:=
\hat\mu_t(X_i)
-
\sum_{s=0}^{t-1}\pi_{t-s}(X_i)\hat\tau_s(X_i),
\quad t\in[T],
\]
and the lag-\(l\) treated outcome functions by
\[
\hat\mu_{1,t+l}(X_i)
:=
\hat\mu_{0,t+l}(X_i)+\hat\tau_l(X_i),
\quad l\in\{0,\dots,T-t\}.
\]

To test the time-specific lag-\(l\) hypothesis \(H_{0,t,l}\) in \eqref{eq:H_0tl}, we use a statistic of the form
\begin{equation}\label{equ:statistic_form}
T_{t,l}(O_{[N]};\hat\eta) =
\frac{1}{N}\sum_{i=1}^N \psi_{i,t,l}(O_i;\hat\eta),
\end{equation}
where \(\psi_{i,t,l}(O_i;\hat\eta)\) is a unit-level score designed to detect the lag-\(l\) effect.

For example, one can use the augmented inverse probability weighting (AIPW) score \citep{robins1994estimation} for the average lag-\(l\) treatment effect at time \(t\):
\[
\psi_{i,t,l}^{\mathrm{AIPW}}(O_i;\hat\eta)
:=
\hat\tau_l(X_i)
+\frac{\mathbf 1_{\{A_i=t\}}}{\pi_t(X_i)}
\Bigl\{Y_{i,t+l}-\hat\mu_{1,t+l}(X_i)\Bigr\}
-\frac{\mathbf 1_{\{A_i>t+l\}}}{1-\pi_{\le t+l}(X_i)}
\Bigl\{Y_{i,t+l}-\hat\mu_{0,t+l}(X_i)\Bigr\}.
\]
Here, units with \(A_i=t\) form the treated group at lag \(l\), while units with \(A_i>t+l\) remain untreated at time \(t+l\) and serve as controls.

\begin{algorithm}[t]

\caption{CATE-assisted randomization test}
\label{alg:rc}
\vspace{3pt}
\begin{minipage}{\textwidth}
\begin{enumerate}[leftmargin=*,label=\arabic*.]
    \item For each \(t\in [T]\), fit \(\hat\mu_t(\cdot)\) by regressing \(Y_{i,t}\) on \(X_i\), and compute the residuals
    \[
    \hat R_{i,t}:=Y_{i,t}-\hat\mu_t(X_i),
    \quad i\in[N].
    \]
    \item Estimate the residual covariance matrix \(\Sigma_R(x)\) in \eqref{equ:second_R} by regressing  \(\hat{R}_{i,t}\hat{R}_{i,s}\) on $X_i$:
    \[
    \hat c_{t,s}(x)
    =
    \widehat{\mathbb E}[\hat R_{i,t}\hat R_{i,s}\mid X_i=x],
    \quad t,s\in[T].
    \]
    \item Estimate the CATE vector up to a global sign:
    \begin{enumerate}[leftmargin=*,label=(\alph*)]
        \item If the effect is lag-invariant, use the diagonal estimator \(\hat\tau_{\pm}^{\mathrm{diag}}(x)\) in \eqref{eq:tau_diagonal}.
        \item If the effect varies across lags, use the off-diagonal spectral estimator \(\hat\tau_{\pm}^{\mathrm{sp}}(x)\) from \eqref{eq:opt2_tau_sp}, and optionally refine it to \(\hat\tau_{\pm}^{\mathrm{nls}}(x)\) using \eqref{eq:opt2_nls_pointwise}.
    \end{enumerate}
    \item Orient the unsigned estimator using \eqref{equ:s_x}, and set
    \(
    \hat\tau(X_i)=\hat s_i\,\hat\tau_{\pm}(X_i),~i\in[N].
  \)
\item Compute a CATE-assisted randomization statistic and its \(p\)-value using the signed estimator \(\hat\tau\), the outcome estimators \(\hat\mu_t\), and the known  probabilities \(\pi_a(x)\).
\end{enumerate}
\end{minipage}
\vspace{2pt}
\end{algorithm}

Alternatively, suppose we want a test statistic that is more directly sensitive to heterogeneous treatment effects across units. Motivated by a Gaussian working model with common variance, we can use the following likelihood-ratio score for the alternative \(\tau_l(X_i)=\hat\tau_l(X_i)\) against the null \(\tau_l(X_i)=0\):
\begin{equation}\label{equ:lr_statistic}
\begin{split}
\psi_{i,t,l}^{\mathrm{LR}}(O_i;\hat\eta)
&= -
\bigl[ Y_{i,t+l} - \hat\mu_{0,t+l}(X_i)-
\mathbf 1_{\{A_i=t\}}\hat\tau_l(X_i)\bigr]^2
+
\bigl[Y_{i,t+l}-\hat\mu_{0,t+l}(X_i)\bigr]^2
\\[5pt]
&=
\mathbf 1_{\{A_i=t\}}\left\{2\hat\tau_l(X_i)
\bigl[
Y_{i,t+l}-\hat\mu_{0,t+l}(X_i)
\bigr]-
\hat\tau_l^2(X_i)
\right\}.
\end{split}
\end{equation}
The Gaussian model is used only to motivate the score \(\psi_{i,t,l}^{\mathrm{LR}}(O_i;\hat\eta)\); validity of the randomization test still follows from the random assignment mechanism.

To test the sharp null \(H_0\) in \eqref{eq:H_0}, we aggregate the lag-specific statistics as
\[
T(O_{[N]};\hat\eta)
= \frac{1}{T}
\sum_{l=0}^{T-1}
\frac{1}{T-l}
\sum_{t=1}^{T-l}
T_{t,l}(O_{[N]};\hat\eta).
\]
Algorithm~\ref{alg:rc} summarizes the construction of the CATE-assisted randomization test. Depending on the identifying assumptions, we use either diagonal or off-diagonal residual moments to estimate the CATE vector up to sign, orient the unsigned estimator, and use the signed CATE estimator and other outcome function estimators to define a test statistic. As described in \Cref{sec:rt}, we use this statistic to compute a randomization \(p\)-value \(p(O_{[N]};\hat\eta)\) for the hypothesis of interest.

\section{Experiments}\label{sect:exp}
 
This section evaluates the finite-sample performance of our CATE-assisted randomization tests, checks convergence of the unsigned CATE estimators, and presents semi-synthetic experiments based on county teen employment data. Code to reproduce the experiments is available at \url{https://github.com/fangnanzheng/fit-cate-once}.

\subsection{Synthetic data}

We compare four randomization tests (RTs), using either no covariate adjustment or different forms of covariate adjustment. All tests use the statistic form in \eqref{equ:statistic_form}, but differ in the choice of the test score \(\psi_{i,t,l}(O_i;\hat\eta)\). The regression models used in the statistics are fitted using gradient boosting regression trees. 

\begin{itemize}
    \item \textbf{RT (DM)} uses the difference between the treated and not-yet-treated means:
    \[
    \psi_{i,t,l}^{\mathrm{DM}}(O_i)
    =
    \frac{N}{N_t}\mathbf{1}_{\{A_i=t\}}Y_{i,t+l}
    -
    \frac{N}{N_{>t+l}}\mathbf{1}_{\{A_i>t+l\}}Y_{i,t+l},
    \]
    where \(N_{>t+l}=\sum_{s=t+l+1}^{T+1}N_s\) is the number of not-yet-treated units at time $t+l$.

    \item \textbf{RT (cDM)} uses the same mean-difference statistic after residualizing outcomes:
    \[
    \psi_{i,t,l}^{\mathrm{cDM}}(O_i;\hat\eta)
    =
    \frac{N}{N_t}\mathbf{1}_{\{A_i=t\}}
    \bigl[Y_{i,t+l}-\hat\mu_{t+l}(X_i)\bigr]
    -
    \frac{N}{N_{>t+l}}\mathbf{1}_{\{A_i>t+l\}}
    \bigl[Y_{i,t+l}-\hat\mu_{t+l}(X_i)\bigr].
    \]
    \item \textbf{RT (SS)} uses the likelihood-ratio score
    \(\psi_{i,t,l}^{\mathrm{LR}}(O_i;\hat\eta)\) in \eqref{equ:lr_statistic}, where the CATE estimator is constructed using the realized assignments of 50\% of the units, following the method in Section~\ref{subsec:first_moment}. This method is based on Assumption~\ref{ass:additive_outcome_model} and is the closest comparison to our assignment-free CATE estimation method. The remaining 50\% of the units are used for the randomization test.
    \item \textbf{RT (CATE)} is our proposed randomization test, implemented as in Algorithm~\ref{alg:rc}, and also uses the likelihood-ratio score
    \(\psi_{i,t,l}^{\mathrm{LR}}(O_i;\hat\eta)\).
\end{itemize}


We consider a randomized staggered-adoption experiment with \(T=5\) time periods and sample size
\(N\in\{300,400,500,600,700\}\). For each unit \(i\in[N]\), we observe covariates
\(X_i=(X_{i,1},\ldots,X_{i,5})\).
The first four covariates are drawn independently from \(\mathrm{Unif}[-2,2]\).
The fifth covariate 
\(
X_{i,5}\mid X_{i,1}\sim
\mathrm{Bernoulli}\!\left(0.08/[1+\exp(-X_{i,1})]\right).
\)
Each unit receives a treatment start time
\(A_i\in\{1,2,3,4,5,6\}\), where \(A_i=6\) denotes never treated. The cohort proportions are
$\pi=(0.1,0.2,0.2,0.2,0.2,0.1)$.
For each \(N\), the cohort counts are obtained by rounding \(N\pi\). Conditional on these cohort counts,
units are randomly assigned to the six treatment-timing groups. This can be viewed as a completely
randomized experiment over treatment start times.

For the Type I error control experiments, we set \(\tau_l(X_i)=0\) for all units \(i\) and all lags \(l\), while keeping the parts of the data distribution unchanged. For the power experiments, we use the treatment-effect functions specified below.

\subsubsection{Lag-invariant treatment effects}

\begin{figure}[t]
\vspace{-15pt}
    \centering
    \begin{minipage}{0.9\textwidth}
        \centering
        \begin{subfigure}[t]{0.38\linewidth}
            \centering
            \includegraphics[width=\linewidth]{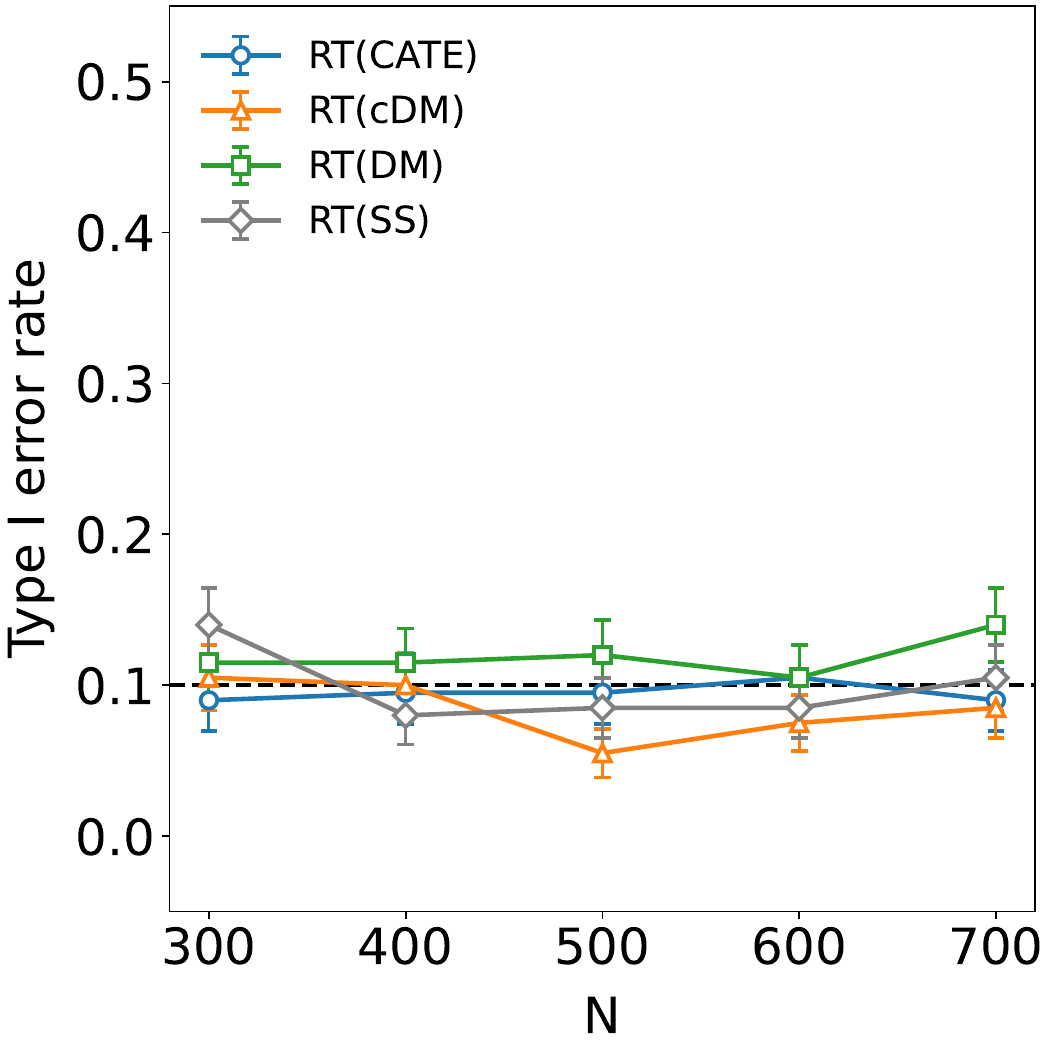}
            \caption{Size for the global null.}
            \label{fig1:sub1}
        \end{subfigure}
        \hspace{0.02\linewidth}
        \begin{subfigure}[t]{0.38\linewidth}
            \centering
            \includegraphics[width=\linewidth]{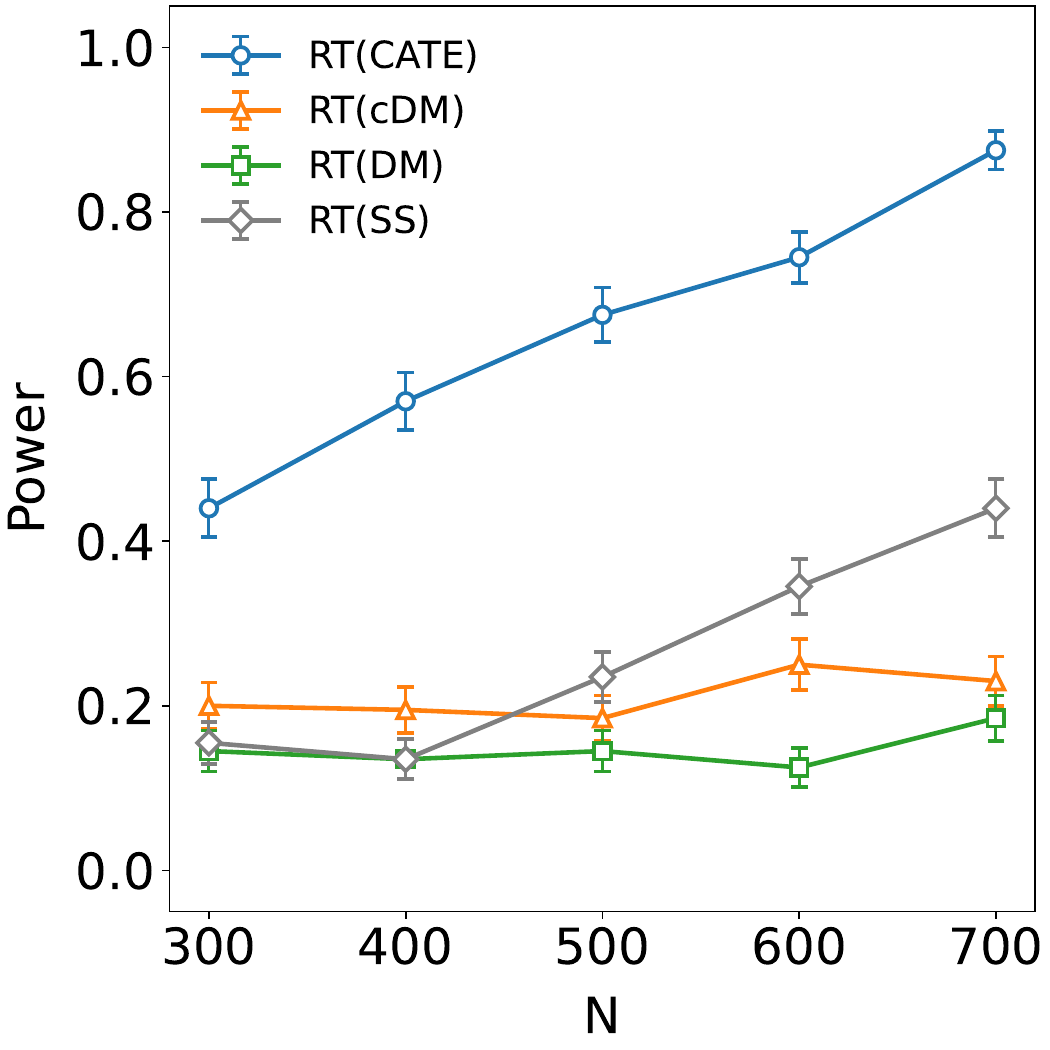}
            \caption{Power for the global null.}
            \label{fig1:sub2}
        \end{subfigure}
        \begin{subfigure}[t]{0.8\linewidth}
        \vspace{6pt}
            \centering
            \includegraphics[width=\linewidth]{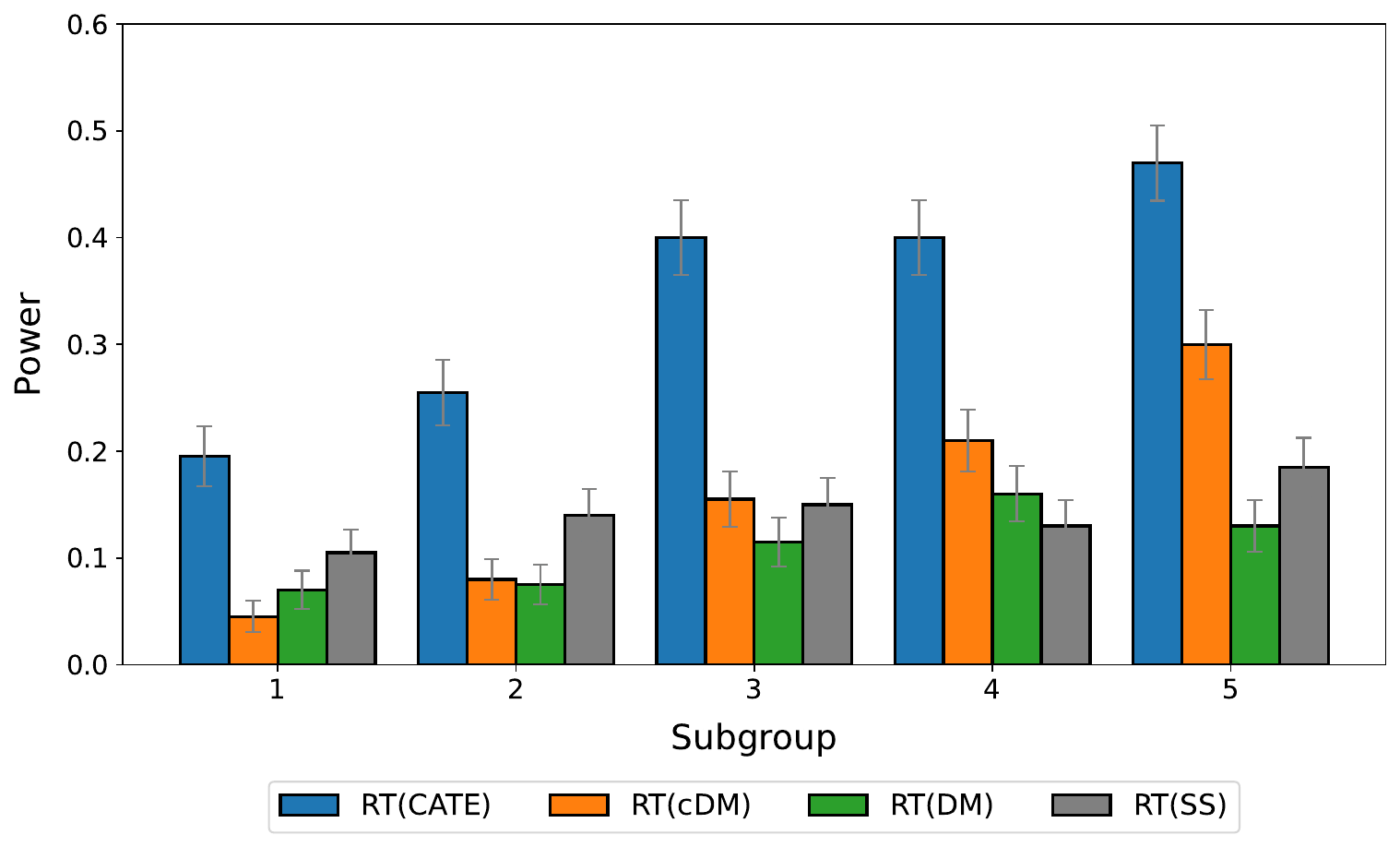}
            \caption{Power for the subgroup nulls at \(N=500\).}
            \label{fig1:sub3}
        \end{subfigure}
    \end{minipage}
    \caption{Comparison of randomization tests under the lag-invariant effect assumption. }
    \label{fig:static}
\end{figure}

Under Assumption~\ref{assumption:diagonal}, we implement our method using the diagonal estimator from
\Cref{subsubsec:static_est}. To match the lag-invariant treatment-effect assumption, we generate outcomes \(Y_{i,t}\) as in \eqref{eq:yit}, with
\(
\Sigma_\epsilon(X_i)
=
\bigl(2^{-|t-s|}\bigr)_{t,s=1}^T.
\)
For \(t\in[5]\) and \(l=0,\ldots,4\), set
\[
\mu_{0,t}(X_i)
=
(0.7+0.1t)\sum_{j=1}^5 X_{i,j},
\qquad
\tau_l(X_i)
=
3.5X_{i,5}-0.1(1-X_{i,5}).
\]
Since the CATE is lag-invariant, we test the global null. In addition, we partition the covariate space into $K=5$ subgroups
\(
\mathcal S_k := \{i\in[N]: X_i\in\mathcal X_k\},~k\in [K]
\), where $\mathcal X_k$'s are created evenly using the four quintile cut-points of \(X_{i,1}\).
We test the subgroup null
\begin{equation}
\label{eq:H_0k}
H_{0,k}:\quad
Y_{i,t}(a)=Y_{i,t}(T+1)
\qquad
\text{for all } i\in\mathcal S_k,\ 1\le a\le t\le T.
\end{equation}
This subgroup null asks whether treatment has no effect within each subgroup \(k\). Testing these subgroup nulls is
useful when treatment effects may be heterogeneous across covariates, while testing the global null could miss important structure.

Figure~\ref{fig:static} reports the size and power results for all randomization tests. The \(p\)-values are computed using 500 randomization draws, and rejection rates are evaluated at the nominal level \(\alpha=0.1\). All results are averaged over 200 experimental replications.

\Cref{fig1:sub1} confirms that all four randomization tests control Type I error near the nominal level \(0.1\). 
Building upon this validity check, \Cref{fig1:sub2} shows that RT (CATE) achieves the highest power across all sample sizes. RT (CATE) outperforms RT (cDM) because it uses the CATE estimator to detect not only average treatment effects but also heterogeneous treatment effects across units.

\Cref{fig1:sub3} shows that RT (CATE) also achieves the highest power for the subgroup nulls in \eqref{eq:H_0k}. In the data-generating process, treatment effects become stronger from subgroup 1 to 5. Accordingly, the power of RT (CATE) and RT (cDM) increases across subgroups, while the power of RT (DM) and RT (SS) remains relatively stable.

\vspace{-5pt}
\subsubsection{Lagged treatment effects}

Under Assumption~\ref{assumption:offdiagonal}, we implement our method using the off-diagonal estimator from
\Cref{subsubsec:cross_est}. We generate outcomes \(Y_{i,t}\) with lag-varying treatment effects:
\[
\mu_{0,t}(X_i)
=
(0.7+0.1t)\sum_{j=1}^5X_{i,j},
\quad
\tau_l(X_i)=0.9^l(3.5X_{i,5}-0.1(1-X_{i,5})),
\]
\vspace{-5pt}
for \(t\in[5]\) and \(l=0,\ldots,4\), with error covariance
\(
\Sigma_\epsilon(X_i)=I_T.
\)

\begin{figure}[t]
    \vspace{15pt}
    \centering
    \begin{minipage}{0.9\textwidth}
        \centering
        \begin{subfigure}[t]{0.38\linewidth}
            \centering
            \includegraphics[width=\linewidth]{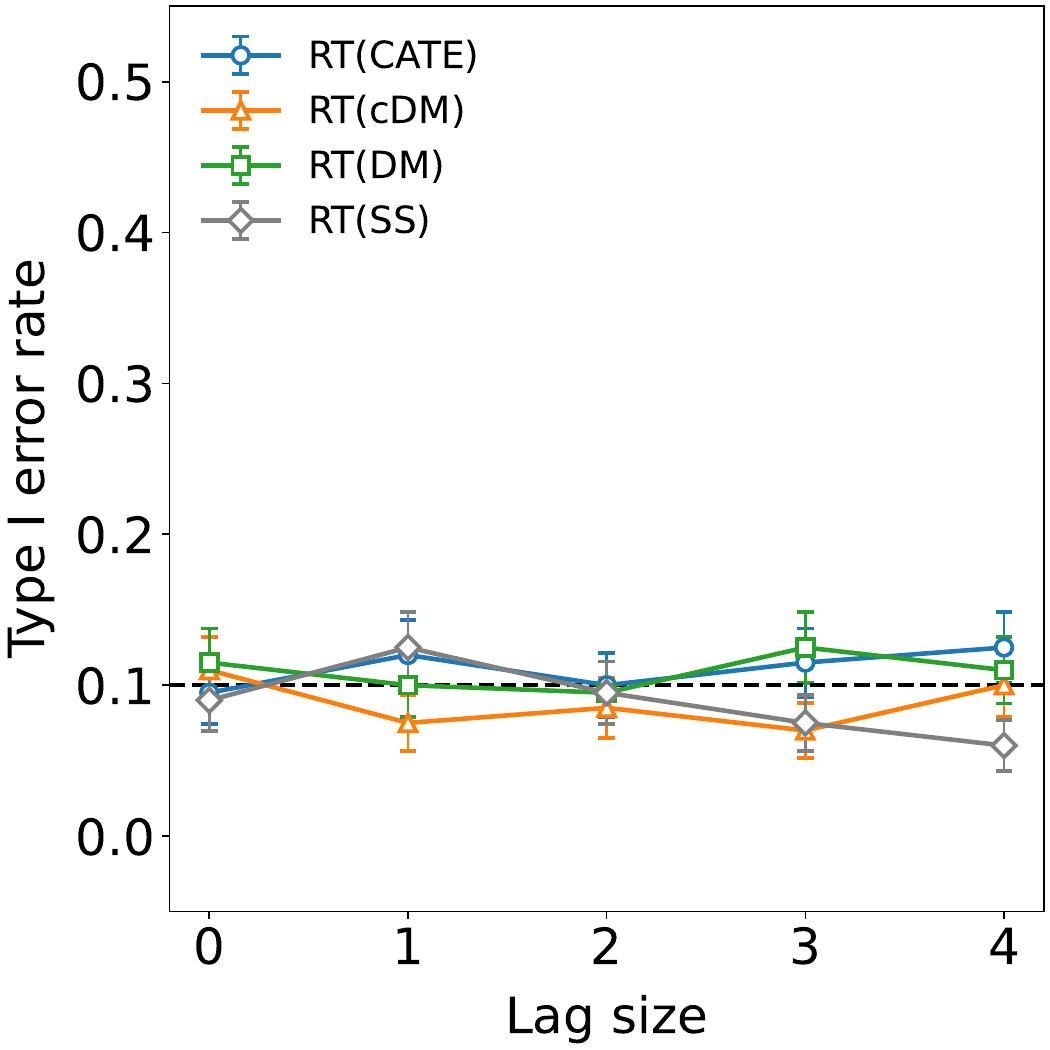}
            \caption{Size across lags. }
            \label{fig2:sub1}
        \end{subfigure}
        \hspace{0.02\linewidth}
        \begin{subfigure}[t]{0.38\linewidth}
            \centering
            \includegraphics[width=\linewidth]{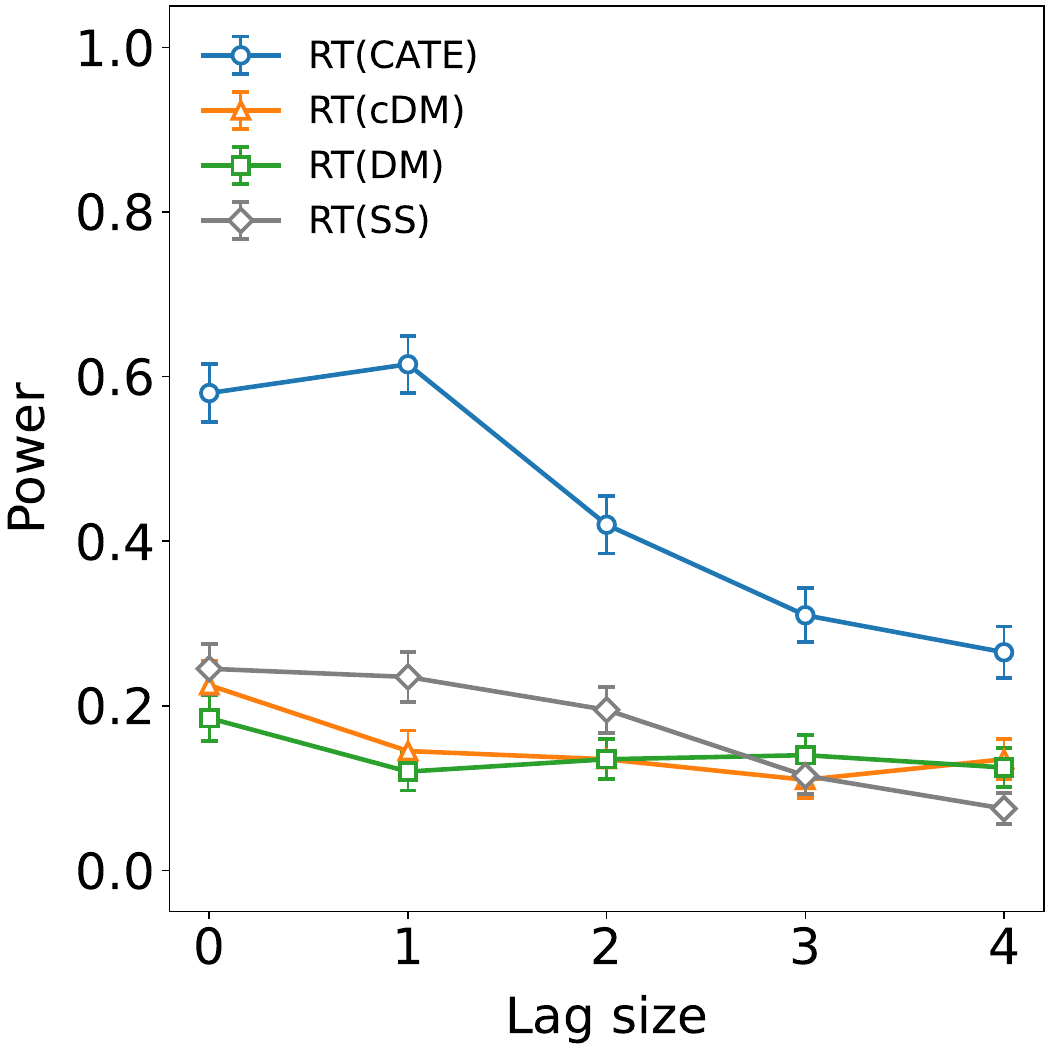}
            \caption{Power across lags.}
            \label{fig2:sub2}
        \end{subfigure}
        \begin{subfigure}[t]{0.38\linewidth}
                \vspace{0.5em}
            \centering
            \includegraphics[width=\linewidth]{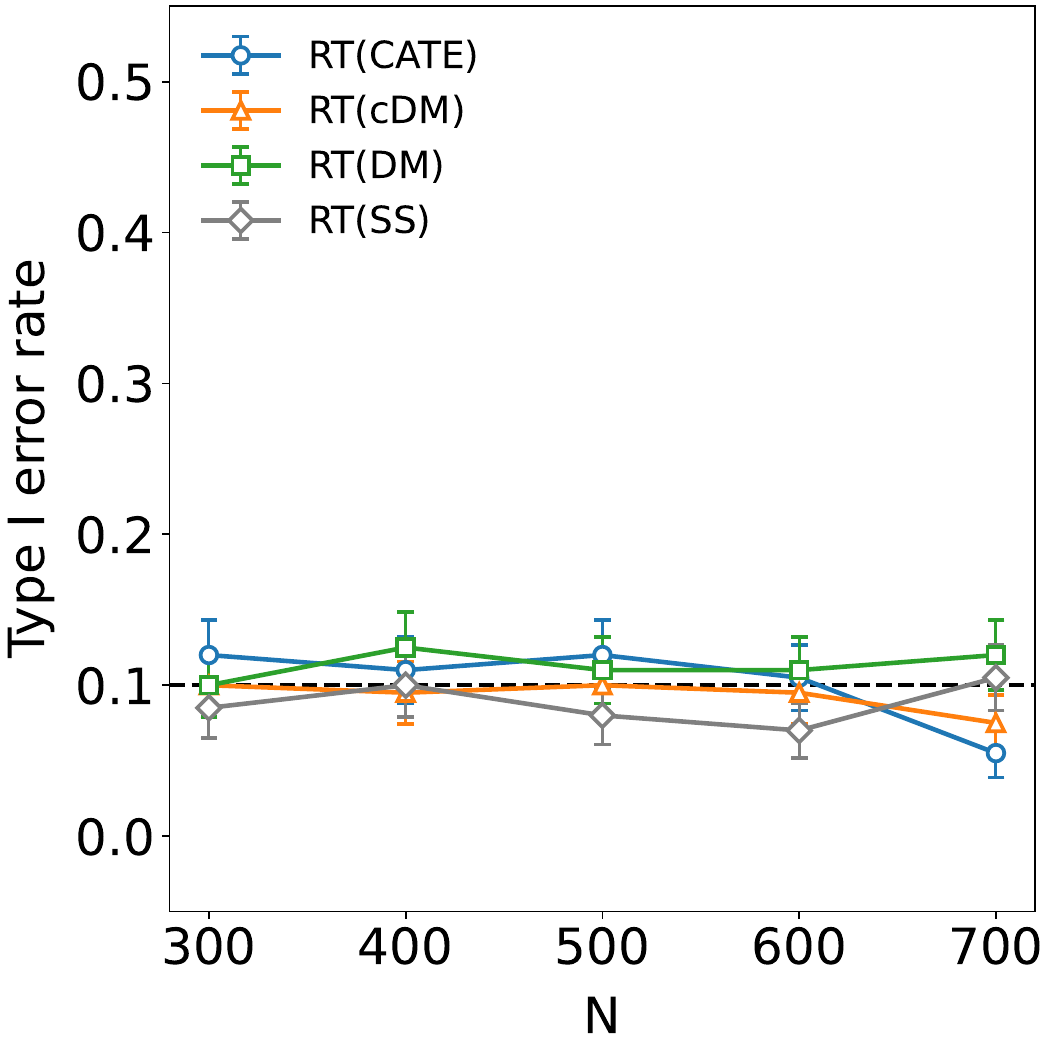}
            \caption{Size with Fisher's method.}
            \label{fig2:sub3}
        \end{subfigure}
        \hspace{0.02\linewidth}
        \begin{subfigure}[t]{0.38\linewidth}
         \vspace{0.5em}
            \centering
            \includegraphics[width=\linewidth]{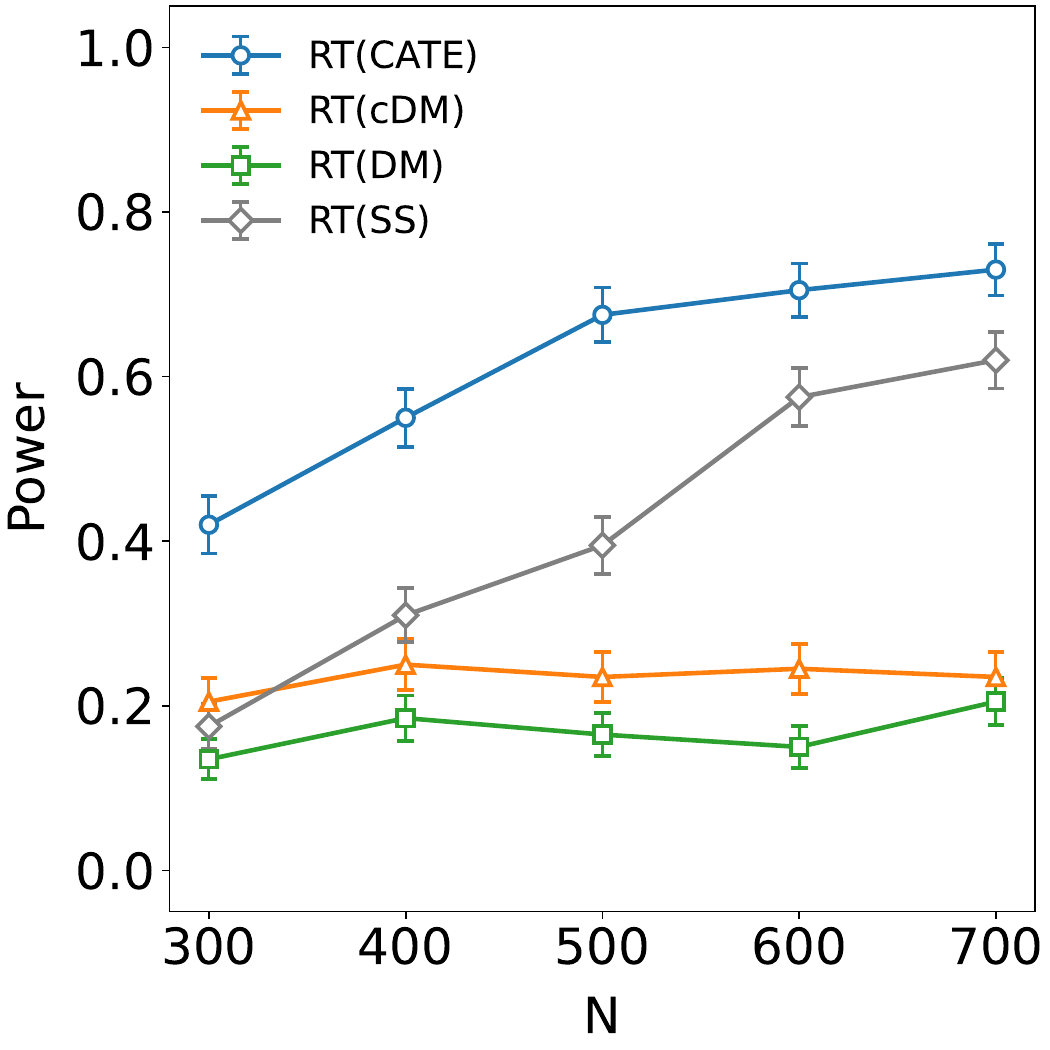}
            \caption{Power with Fisher's method.}
            \label{fig2:sub4}
        \end{subfigure}
    \end{minipage}
    \vspace{5pt}
    \caption{Comparison of randomization tests under the lagged-effect assumption. 
The top row tests lagged effects at time $t=1$.
    The bottom row reports the results obtained by using  Fisher's method combining $p$-values across time \(t\) for \(l=0\).
    }
    \label{fig:lagged}
\end{figure}

We first test the lag-\(l\) effect at adoption time \(t=1\) for \(l=0,\ldots,4\). We also test the global lag-\(0\) null by combining \(p\)-values from the time-specific lag-\(0\) tests across all time steps, using the conditional randomization test in \citet{zhang2025multiple}.

Figure~\ref{fig:lagged} reports the size and power results for all tests. The \(p\)-values are computed using 500 randomization draws, and rejection rates are evaluated at the nominal level \(\alpha=0.1\). All results are averaged over 200 experimental replications.
\Cref{fig2:sub1,fig2:sub3} confirm that all randomization tests control size near the nominal level, both for the individual lag-specific nulls and for the global lag-\(0\) null. \Cref{fig2:sub2,fig2:sub4} show that RT (CATE) achieves higher power than the other methods for detecting lagged effects, both before and after combining the \(p\)-values with Fisher's method.

\subsection{Consistency and warm-starting experiments}
\label{subsec:warm_start}

We next provide a modest empirical check of the identification and consistency theory in \Cref{sect:identify_consistent}. We ask two practical questions: whether the assignment-free unsigned CATE estimator can converge to the true CATE up to a global sign, and whether it can serve as a useful warm start for learning the signed CATE vector.

We focus on the off-diagonal unsigned CATE estimator in \Cref{subsubsec:cross_est}. In Algorithm~\ref{alg:rc}, we use gradient-boosted regression trees to estimate the outcome functions and the residual covariance matrix. We first compute the spectral estimator in \eqref{eq:opt2_tau_sp}, and then refine it by solving the NLS problem in \eqref{eq:opt2_nls_pointwise}. We set the trace penalty in the convex relaxation to decrease with the sample size, while always using no ridge penalty in the NLS refinement.
Other details of this experiment, including the data-generating process, are given in Appendix~\ref{app:warm_start_dgp}. We also repeat the consistency experiment with a finite-dimensional parametric version of the off-diagonal estimator in Appendix~\ref{appendix:parametric}.

Let \(\hat\tau_{\pm,N}\) denote the unsigned estimator at sample size \(N\). \Cref{fig5:sub1} shows that the normalized mean squared error (NMSE) of \(\hat\tau_{\pm,N}\) decreases toward zero as \(N\) increases. The NMSE is defined using the unsigned distance \(d_{\pm}(\cdot,\cdot)\) in \eqref{equ:unsigned_dist}:
\[
\mathrm{NMSE}(\hat \tau_{\pm,N},\tau)
=
\frac{
\mathbb E\!\left[d_{\pm}\{\hat\tau_{\pm,N}(X),\tau(X)\}^2\right]
}{
\mathbb E\!\left[\|\tau(X)\|_2^2\right]
}.
\]
For the warm-start experiment, we fix a full dataset of size \(N=25{,}600\). We first fit the off-diagonal estimator using all covariates and outcomes. We then reveal assignments for \(M\in\{25,50,100,200,400,800,1600,3200\}\) units.
We compare two estimators: a direct R-learner \citep{nie2020quasi} that learns the full CATE vector from these \(M\) assignment-revealed units, and a warm-start estimator that learns only the missing sign on top of our unsigned CATE estimator. Both estimators are implemented using the same gradient-boosted regression trees. The warm-start estimator uses the same second-stage loss as the direct R-learner, but restricts the fitted CATE to be a signed version of the unsigned estimator; see Appendix~\ref{app:warm_start_dgp} for details.

\Cref{fig5:sub2} reports the NMSE of the direct R-learner and our warm-start estimator, denoted by Ours, computed using the usual \(\ell_2\) distance. When \(M\) is small, the warm-start estimator is substantially more accurate. This shows that the unsigned CATE estimator can greatly reduce the number of revealed assignments needed to learn a signed CATE vector. As \(M\) grows, the direct learner improves and catches up.

The main takeaway is that learning the sign of the CATE is much easier than learning the entire CATE vector from scratch. Thus, if predicting the sign from covariates and outcomes alone is a concern, one can instead learn it from a small subset of assignments, preserving most assignments for the randomization test.

\begin{figure}[t]
    \centering
    \begin{subfigure}[b]{0.4\textwidth}
        \centering
        \includegraphics[width=\linewidth]{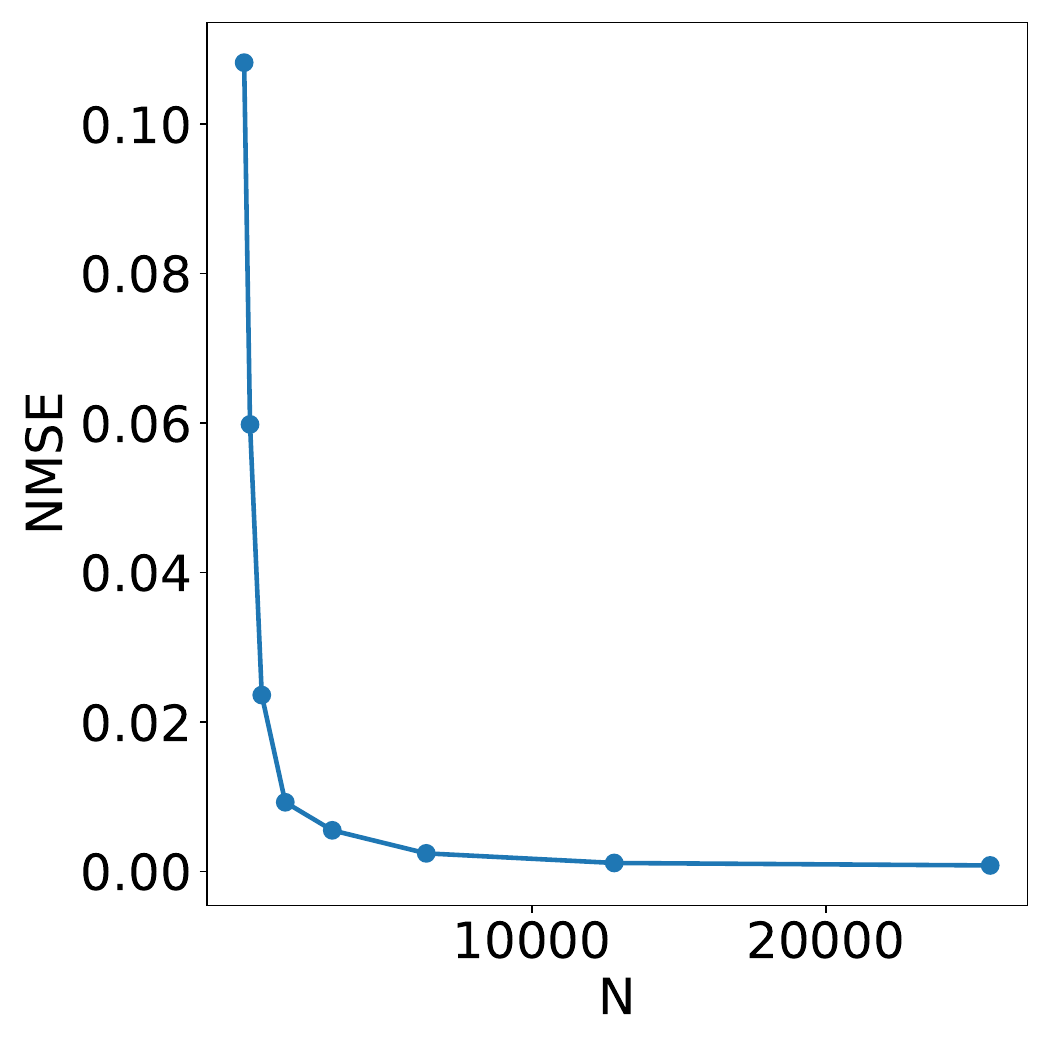}
        \caption{Unsigned estimator.}
        \label{fig5:sub1}
    \end{subfigure}
        \begin{subfigure}[b]{0.4\textwidth}
        \centering
        \includegraphics[width=\linewidth]{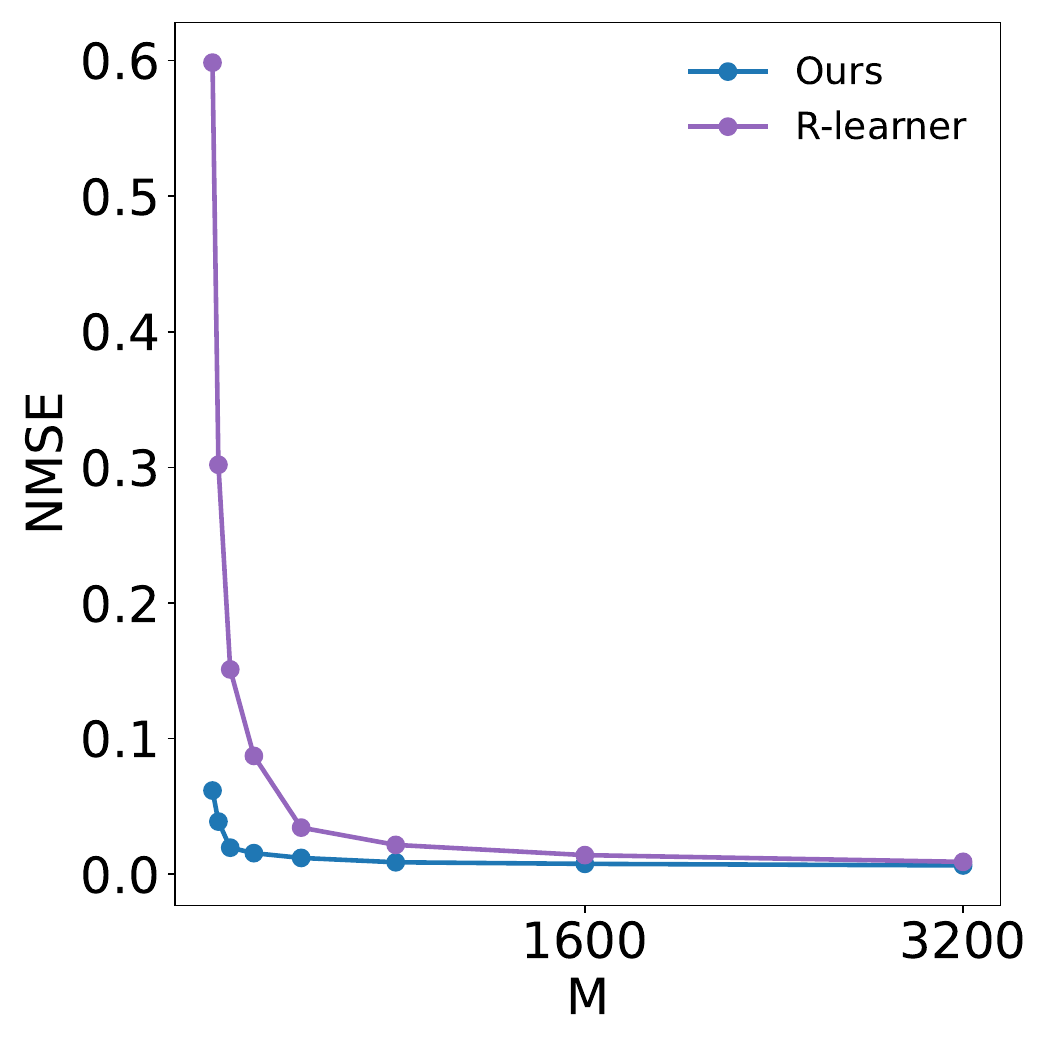}
        \caption{Signed estimators.}
        \label{fig5:sub2}
    \end{subfigure}
    \caption{Normalized MSE (NMSE) of the unsigned and signed CATE estimators.}
    \label{fig:consistency}
\end{figure}

\subsection{Experiments on county teen employment data}

We next compare our randomization-test-based method with event-study two-way fixed-effects (TWFE) regression in a semi-synthetic experiment based on the minimum-wage and teen-employment dataset analyzed by \citet{callaway2021difference}.\footnote{The dataset is available at \url{https://bcallaway11.github.io/did/reference/mpdta.html}.} This empirical setting is a canonical staggered-adoption difference-in-differences application: counties become treated when their state raises its minimum wage, and the outcome is county-level log teen employment. In the original application, \citet{callaway2021difference} find negative average effects of minimum-wage increases on teen employment, with evidence varying across treatment cohorts. The dataset contains a balanced five-period county panel with staggered treatment adoption. We retain the county-level log-population covariate, re-randomize treatment timing using fixed cohort shares, and regenerate outcomes using the semi-synthetic model described in Appendix~\ref{app:mpdta_dgp}.

We use event-study TWFE as the benchmark because it is a common specification for staggered-adoption panels when treatment effects may vary with exposure length. At the same time, recent work has shown that TWFE coefficients can be difficult to interpret under treatment-effect heterogeneity. In such settings, event-study TWFE coefficients need not correspond to clean lag-specific causal effects; instead, they may combine information from different cohorts, calendar periods, and event times through regression-implied weights. Thus, event-study TWFE is a useful and familiar benchmark, but it is not designed to target the heterogeneous CATE structure exploited  by the likelihood ratio  statistics in our randomization test; see \citet{goodmanbacon2021difference,sun2021estimating,callaway2021difference}.

\begin{figure}[t]
    \centering
    \begin{minipage}{0.9\textwidth}
        \centering
        \begin{subfigure}[t]{0.4\linewidth}
            \centering
            \includegraphics[width=\linewidth]{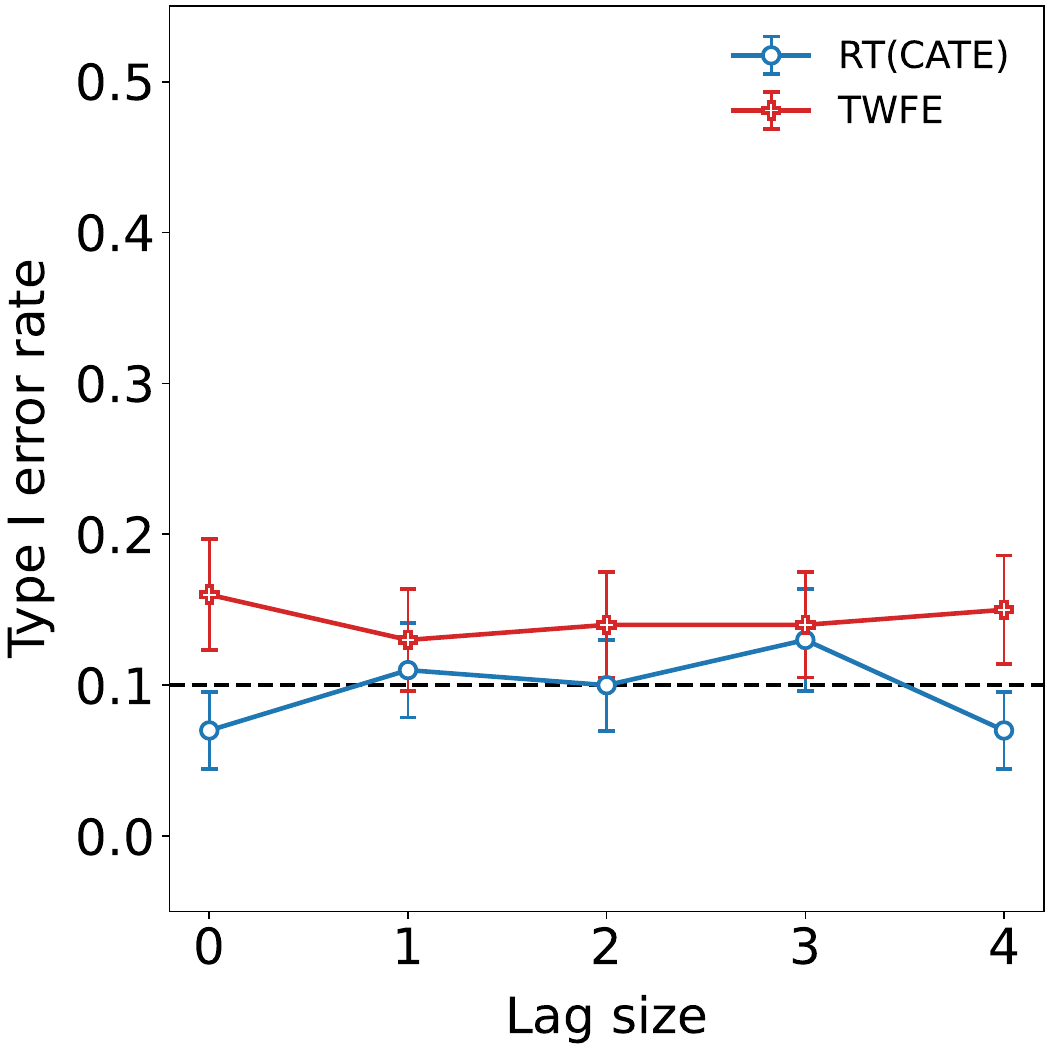}
            \caption{Size for \(H_{0,l}\)}
            \label{fig3:sub1}
        \end{subfigure}
        \hspace{0.02\linewidth}
        \begin{subfigure}[t]{0.4\linewidth}
            \centering
            \includegraphics[width=\linewidth]{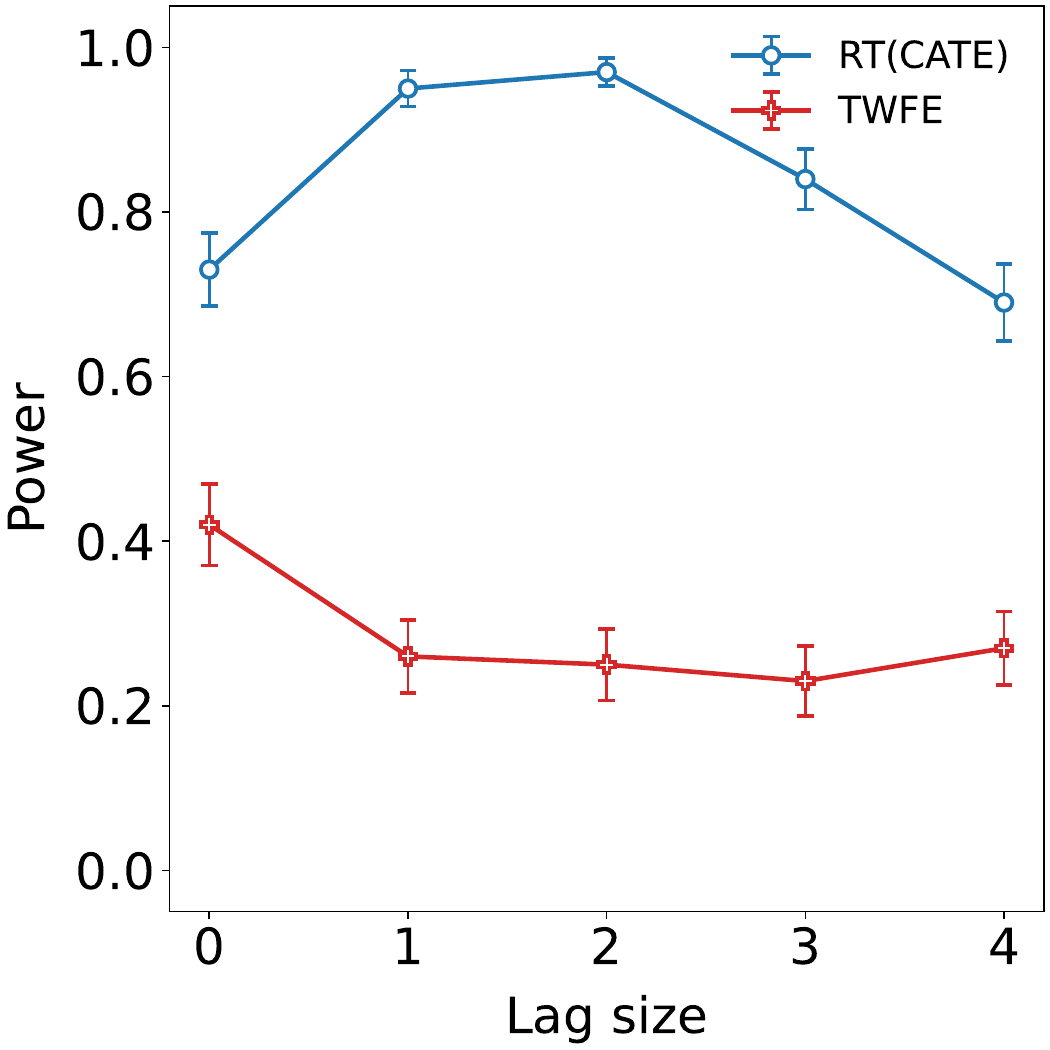}
            \caption{Power for \(H_{0,l}\)}
            \label{fig3:sub2}
        \end{subfigure}
    \end{minipage}
    \caption{Semi-synthetic testing results on the county teen employment panel.}
    \label{fig:real-p}
\end{figure}

\subsubsection{Comparison with TWFE model}

Because the treatment effect may vary with exposure length, we use the following event-study TWFE regression:
\[
Y_{i,t}=\alpha_i+\lambda_t
+\sum_{l=0}^{3}\beta_l\,\mathbf 1_{\{A_i=t-l\}}
+\beta_{\ge 4}\,\mathbf 1_{\{A_i\leq t-4\}}
+\sum_{h=2}^{4}\delta_{-h}\,\mathbf 1_{\{A_i = t + h\}}
+\varepsilon_{i,t}.
\]
Here \(\alpha_i\) absorbs time-invariant county differences, and \(\lambda_t\) absorbs period-specific shocks common to all counties. The coefficients \(\beta_l\), \(l=0,\ldots,3\), correspond to post-treatment event times, while \(\beta_{\ge4}\) pools observations at least four periods after adoption into a single long-run event-time bin. We omit the period immediately before adoption as the reference period. The lead coefficients \(\delta_{-h}\), \(h=2,\ldots,4\), capture differences relative to this omitted pre-treatment period. In empirical applications, nonzero lead coefficients are often interpreted as evidence of pre-treatment imbalance or anticipation. In our semi-synthetic design, the null DGP assumes no anticipation, so these coefficients are best viewed as finite-sample diagnostics rather than structural anticipation effects.

\Cref{fig3:sub1,fig3:sub2} report the lag-specific testing results. Under the null, event-study TWFE mildly over-rejects, with rejection rates between \(13\%\) and \(16\%\) across lags. RT (CATE) remains close to the nominal level. Under the alternative, event-study TWFE achieves rejection rates between \(23\%\) and \(42\%\). RT (CATE) is more powerful, with rejection rates between \(69\%\) and \(97\%\) across lags. Thus, both methods show some sensitivity to the alternative, but the CATE-assisted randomization statistic produces a much sharper separation between the null and alternative. The power pattern of RT (CATE) mainly reflects the simulated lag-specific effects together with the number of available timing comparisons, which are most favorable at intermediate lags.

\subsection{Post-randomization subgroup analysis}

We now turn to a more qualitative use of the same assignment-free CATE estimates. Instead of using them only to build a more powerful test statistic, we ask whether they can also reveal useful subgroup structure after randomization.

We denote county log population by \(X_i=\mathrm{lpop}_i\). We generate treatment effects using a threshold model: counties above the empirical \(0.65\)-quantile of \(X_i\) form the effective subgroup, while counties at or below the threshold form a true null subgroup. Specifically, if \(c^\star\) denotes this threshold, then
\(
\tau_l(X_i)
=
-\tau_*\mathbf 1_{\{X_i>c^\star\}},l=0,\ldots,4,
\)
where \(\tau_*\in\{0.10,0.15,0.20,0.25,0.30\}\). Thus, the high-\(\mathrm{lpop}\) group has a negative lag-invariant treatment effect, while the low-\(\mathrm{lpop}\) group has no treatment effect.

We estimate the signed CATE using the diagonal residual-moment estimator and regression trees. After obtaining the fitted signed CATE values \(\hat\tau_i\), we estimate the threshold by fitting a regression stump in \(X_i\). For each candidate threshold \(c\), we split counties into \(\{i:X_i\le c\}\) and \(\{i:X_i>c\}\), and choose the split that best fits the estimated CATE values. We use bootstrap bagging to compute a lower quantile \(\hat c\) of the selected thresholds to avoid overestimating the true threshold. The estimated low-effect group is \(\{i:X_i\le \hat c\}\), and the estimated high-effect group is \(\{i:X_i>\hat c\}\). Full details of the subgroup data-generating process, threshold selection, and subgroup randomization statistic are given in Appendix~\ref{app:subgroup_dgp}.

\begin{figure}[t]
    \centering
    \begin{minipage}{0.9\textwidth}
        \centering
        \begin{subfigure}[t]{0.325\linewidth}
            \centering
            \includegraphics[width=\linewidth]{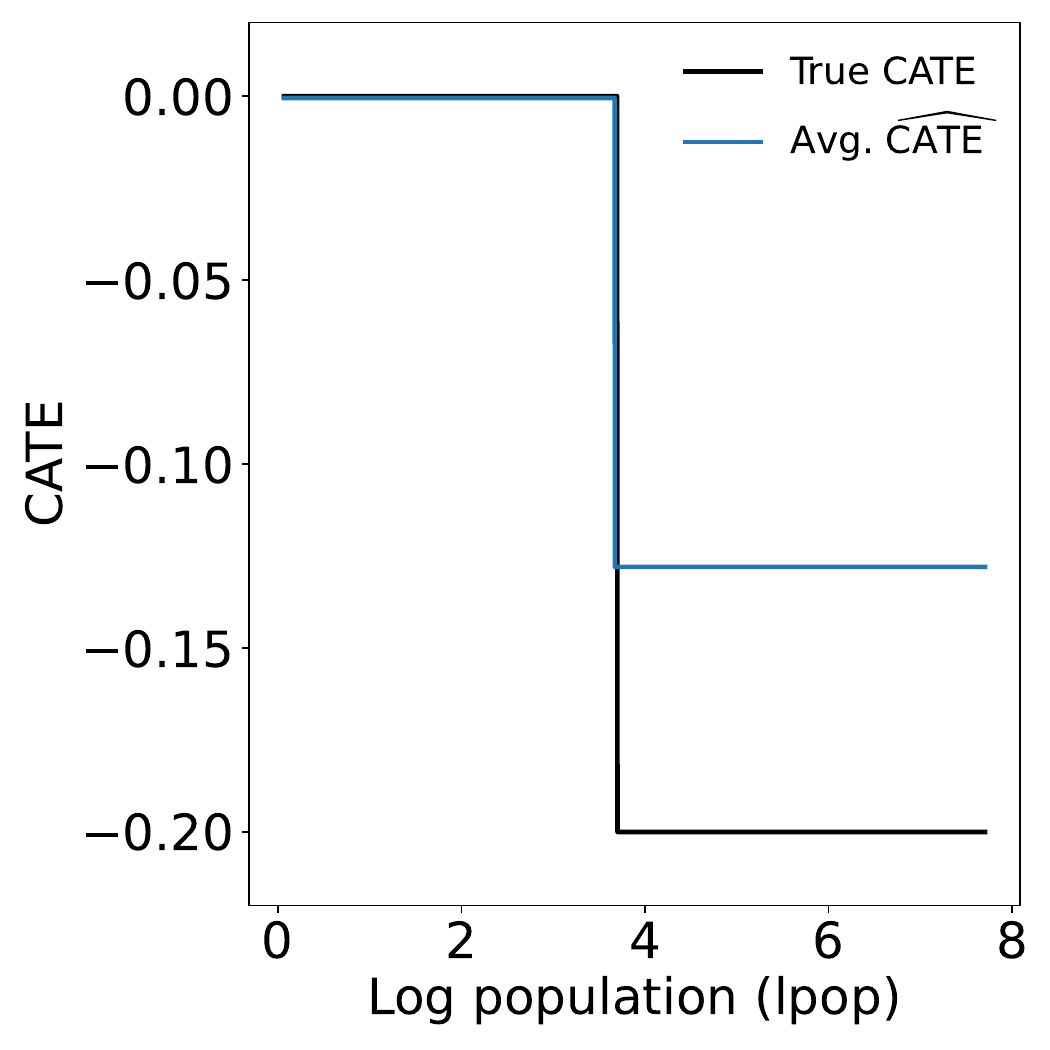}
            \caption{\(\tau_*=0.20\).}
            \label{fig4:sub1}
        \end{subfigure}
        \hfill
        \begin{subfigure}[t]{0.325\linewidth}
            \centering
            \includegraphics[width=\linewidth]{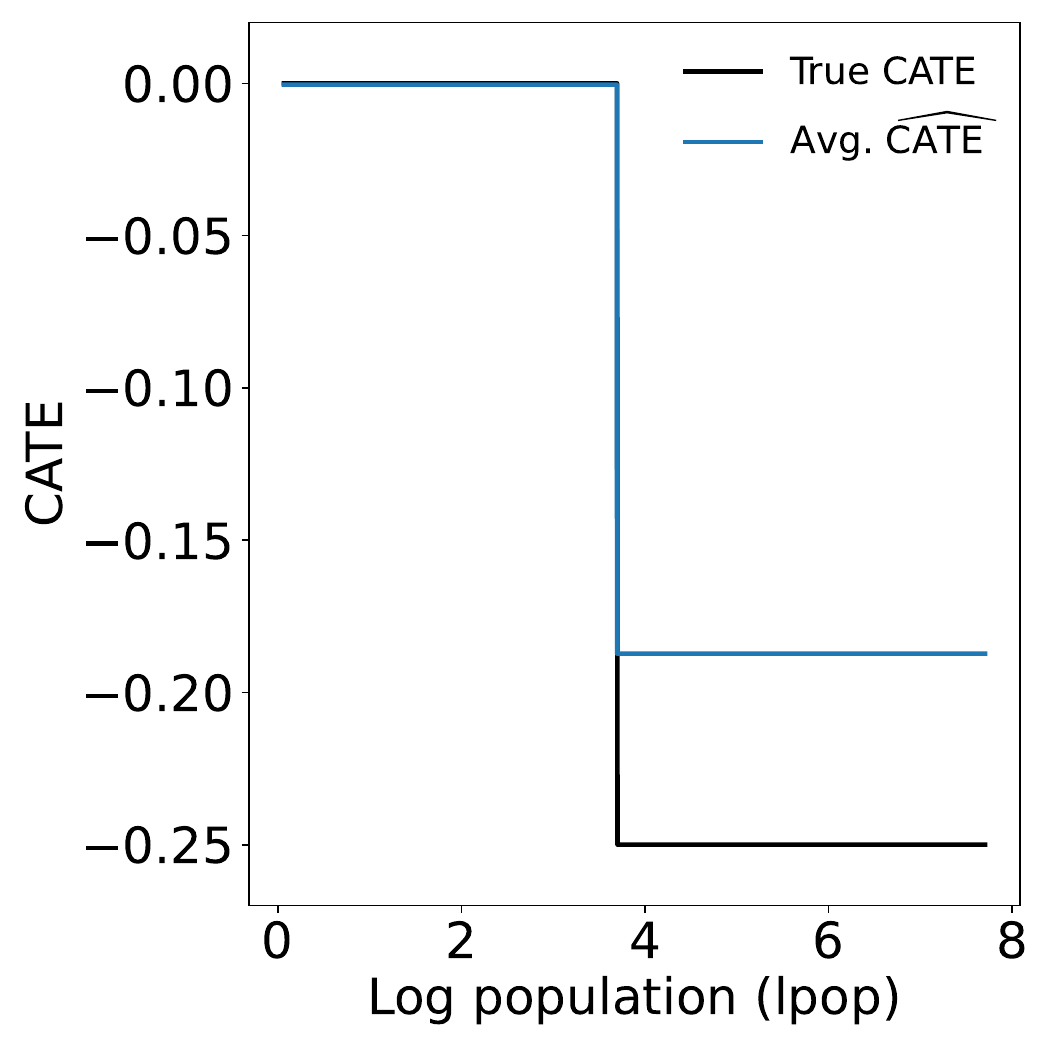}
           \caption{\(\tau_*=0.25\).}
            \label{fig4:sub2}
        \end{subfigure}
        \hfill
        \begin{subfigure}[t]{0.325\linewidth}
            \centering
            \includegraphics[width=\linewidth]{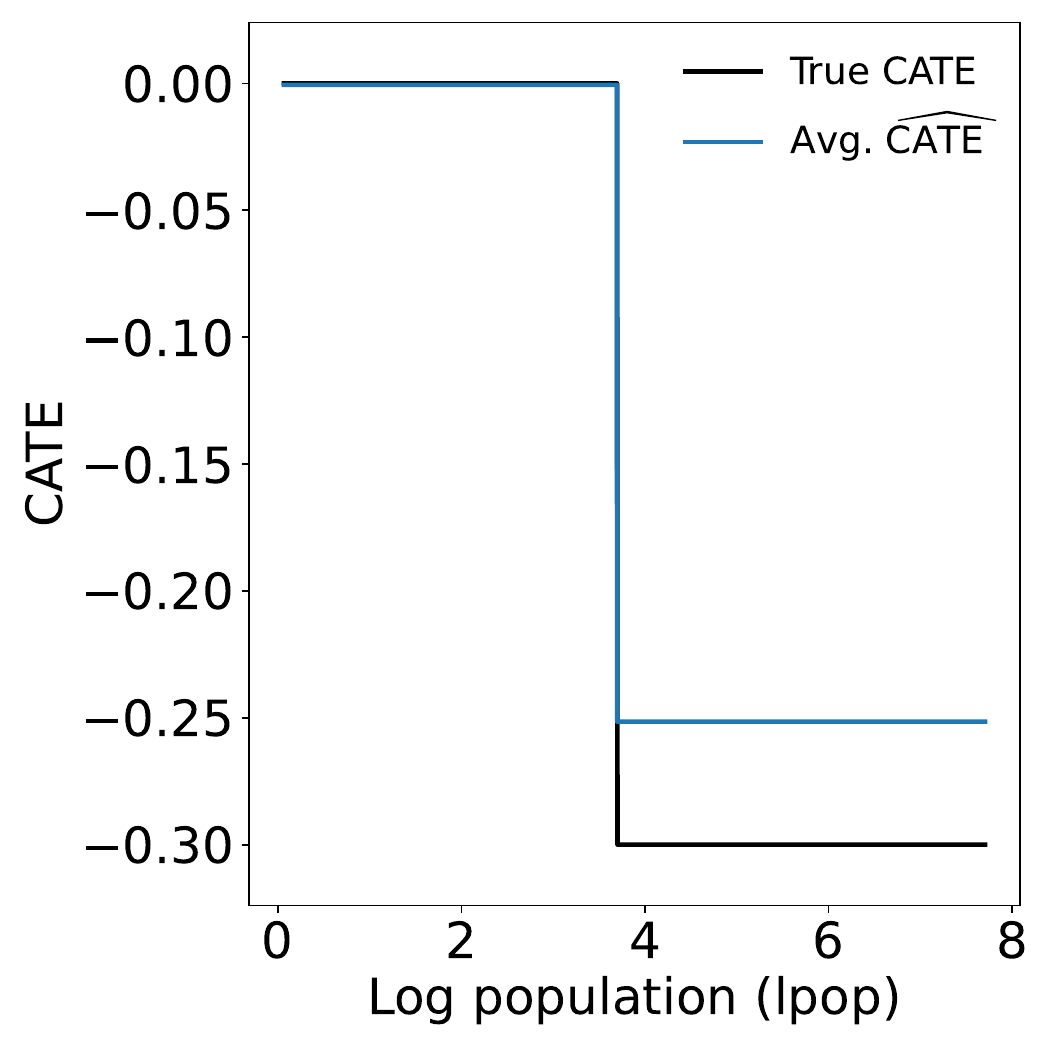}
            \caption{\(\tau_*=0.30\).}
            \label{fig4:sub3}
        \end{subfigure}
    \end{minipage}
    \caption{Estimated CATEs and thresholds for three signal levels, $\tau_*= 0.20,0.25,0.30$.}
    \label{fig:real-sub-CATE}
\end{figure}

\begin{figure}[t]
\vspace{10pt}
    \centering
    \begin{minipage}{0.9\textwidth}
        \centering
        \begin{subfigure}[t]{0.325\linewidth}
            \centering
            \includegraphics[width=\linewidth]{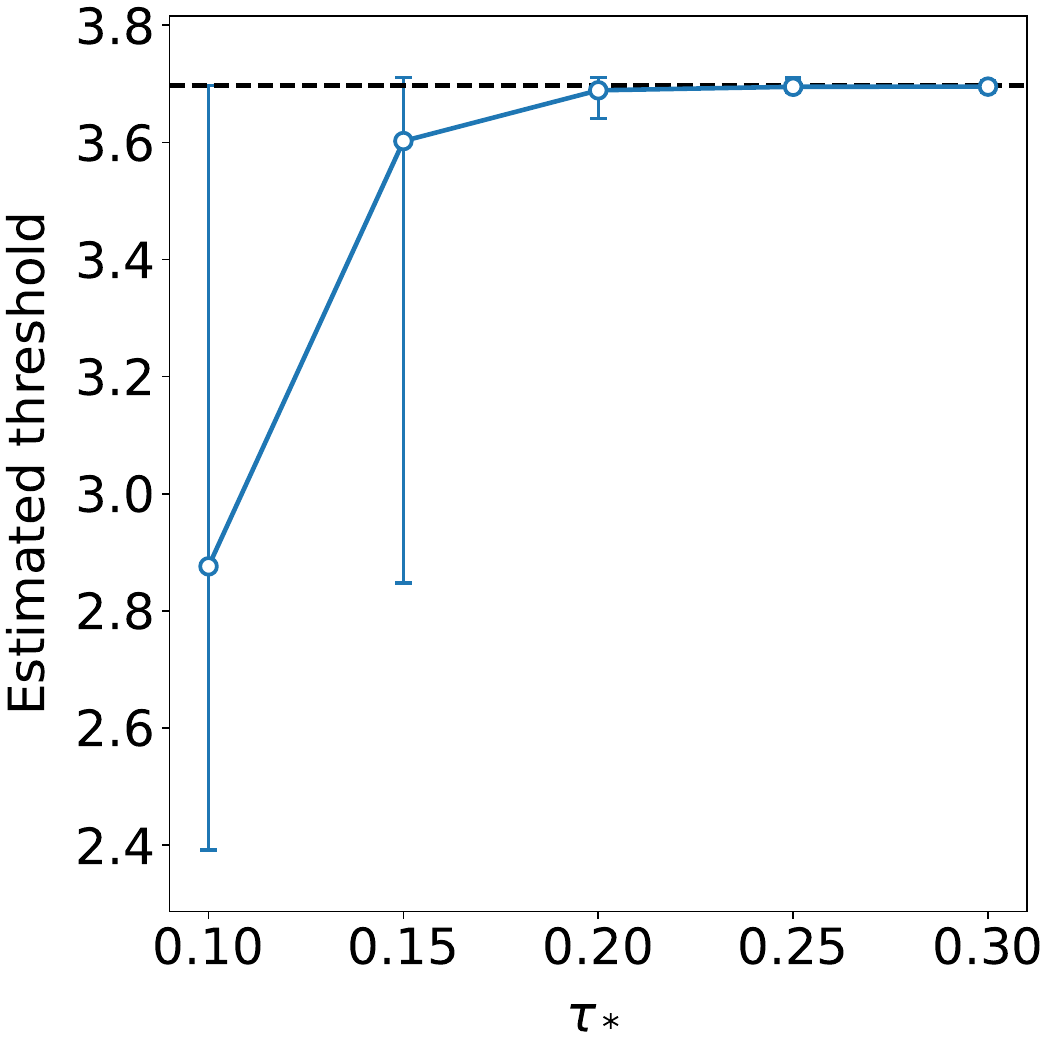}
            \caption{Estimated thresholds.}
            \label{fig4:sub4}
        \end{subfigure}
        \hfill
        \begin{subfigure}[t]{0.325\linewidth}
            \centering
            \includegraphics[width=\linewidth]{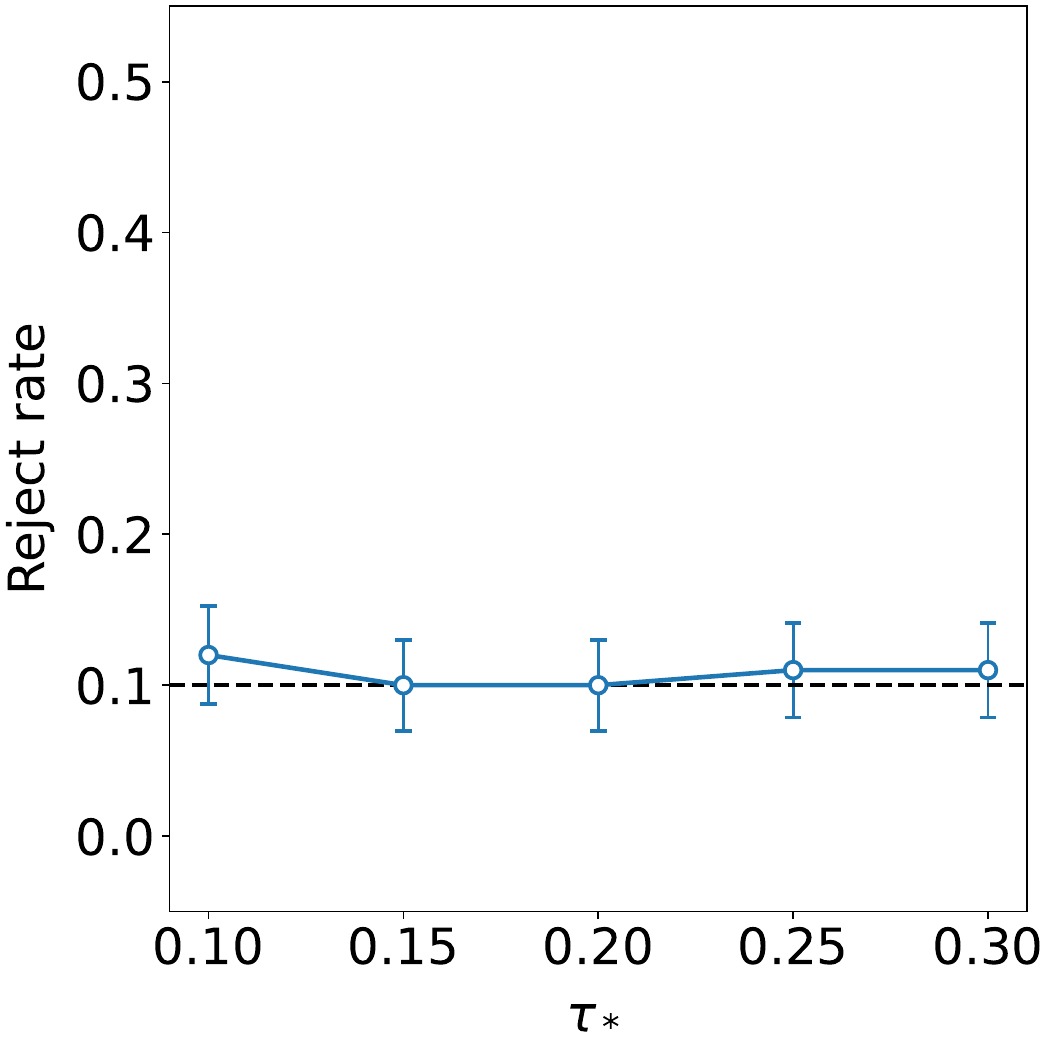}
            \caption{The low-effect group.}
            \label{fig4:sub5}
        \end{subfigure}
        \hfill
        \begin{subfigure}[t]{0.325\linewidth}
            \centering
            \includegraphics[width=\linewidth]{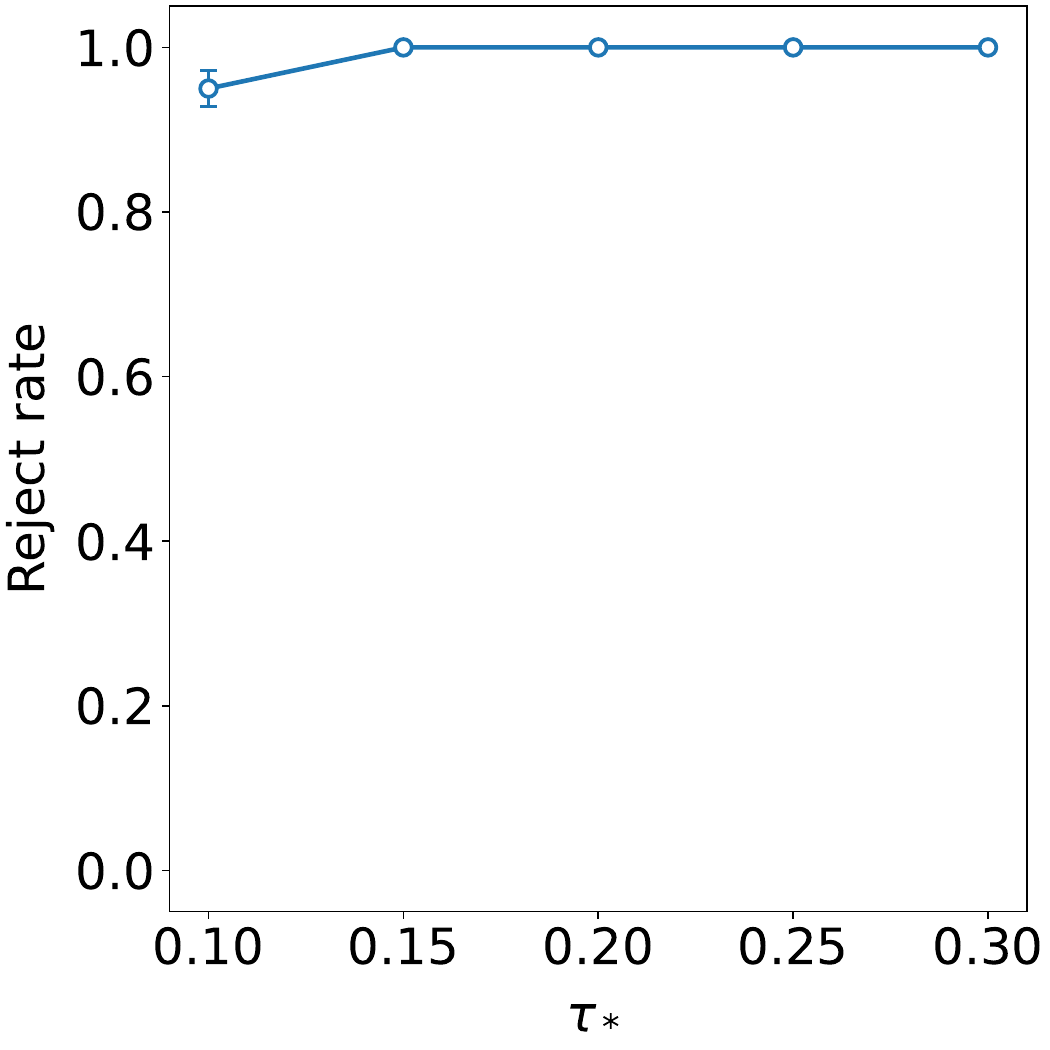}
            \caption{The high-effect group.}
            \label{fig4:sub6}
        \end{subfigure}
    \end{minipage}
    \caption{Estimated thresholds and rejection rates for \(\tau_*\in\{0.10,0.15,\dotsc,0.30\}\).}
    \label{fig:real-sub}
\end{figure}

\Cref{fig:real-sub-CATE} visualizes the estimated CATE functions for multiple signal levels. Our method locates the true threshold accurately and recovers the main subgroup structure. \Cref{fig4:sub4} shows that the threshold estimate becomes more concentrated around the true threshold as the signal level \(\tau_*\) increases. \Cref{fig4:sub5} shows that, in the estimated low-effect group where the true effect is mostly zero, the rejection rate for testing the subgroup null stays close to the nominal level. \Cref{fig4:sub6} shows that, in the estimated high-effect group where the true treatment effect is negative, the rejection rate is close to one. These results show that the assignment-free CATE estimates can recover useful subgroup structure and support subgroup-specific randomization inference.

\section{Discussion}\label{sect:discussion}

We developed a method for constructing CATE-assisted randomization tests without sample splitting. The main idea is simple: in panel experiments, residualized outcomes contain more information than their conditional means alone. Under suitable structure, their covariance matrix carries information about treatment-effect structure, and this information can be extracted without fitting on the realized assignment.

This perspective is not limited to the panel setting studied here. Similar structure may arise whenever each unit contributes multiple outcomes, or when multiple units can be matched on covariates so that their conditional outcome distributions are similar.
More broadly, our results suggest a class of model-assisted randomization tests based on assignment-free treatment effect estimation. The basic ingredients are a known treatment assignment mechanism, residuals that depend on the assignment only through the treatment effects, and enough second-order moment restrictions to identify the effect pattern of interest. The remaining sign ambiguity can often be resolved, and is a smaller problem than estimating the full CATE surface.

Several extensions seem worth pursuing. One is to extend the idea to more complex designs. Our theory imposes no strong assumptions on the panel design, so the same residual-covariance idea may still apply. The challenge is that the moment equations may become more complicated as the design becomes richer. 
A second direction is to relax the uncorrelated-error assumption used by the off-diagonal estimator. In many panel applications, residual errors are correlated over time. Then the off-diagonal residual covariances contain both treatment-effect information and error correlation, and separating the two is difficult.
A third direction is to move beyond second-order moments: higher-order residual moments may contain sign information directly, which could give a better handle on the sign ambiguity.

\bibliographystyle{plainnat}
\bibliography{sample}

\appendix
\clearpage

\section{Randomized panel designs}\label{sect:designs}

\begin{example}[Staggered-adoption design]\label{ex:staggered}
Fix group sizes \(N_1,\ldots,N_{T+1}\) such that
$\sum_{t=1}^{T+1}N_t=N.$
Here \(N_t\) denotes the number of units assigned to begin treatment at time \(t\in[T]\), and \(N_{T+1}\) denotes the number of never-treated units. The assignment vector \(A_{[N]}\) is drawn uniformly from all assignments with these fixed group sizes:
\[
\mathbb P(A_{[N]}=a_{[N]})
=
\frac{\prod_{t=1}^{T+1}N_t!}{N!}
\mathbbm 1\!\left\{
a_{[N]}\in[T+1]^N,\;
\sum_{i=1}^N \mathbf 1_{\{a_i=t\}}=N_t
\ \text{for all } t\in[T+1]
\right\}.
\]
Equivalently, exactly \(N_t\) units are assigned \(A_i=t\) for each \(t\in[T+1]\), with \(A_i=T+1\) corresponding to never treated.
\end{example}

\begin{example}[Single-time adoption design]\label{ex:crt}
Fix a treatment time \(t^\star\in[T]\). Exactly \(N_{\mathrm t}\) units are randomly assigned to begin treatment at time \(t^\star\), while the remaining \(N_{\mathrm c}=N-N_{\mathrm t}\) units are never treated. Equivalently, \(A_i\in\{t^\star,T+1\}\) for all \(i\), with exactly \(N_{\mathrm t}\) units assigned \(A_i=t^\star\). The assignment vector \(A_{[N]}\) follows
\[
\mathbb P(A_{[N]} = a_{[N]})
=
\binom{N}{N_{\mathrm t}}^{-1}
\mathbbm 1\!\left\{
a_{[N]}\in\{t^\star,T+1\}^N,\;
\sum_{i=1}^N \mathbf 1_{\{a_i=t^\star\}} = N_{\mathrm t}
\right\}.
\]
\end{example}

\begin{example}[Two-period crossover design]\label{ex:crossover}
Fix a washout time \(t_0\in\{1,\dots,T-1\}\), which partitions the study into two periods: period~1 \((t\le t_0)\) and period~2 \((t>t_0)\).
Units are assigned switching times
\(S_i=(A_i^{\mathrm{on}},A_i^{\mathrm{off}})\in[T+1]^2\). Exactly \(N_{\mathrm{AB}}\) units are assigned \(S_i=(1,t_0+1)\), corresponding to treatment in period~1 and no treatment in period~2, while the remaining units are assigned \(S_i=(t_0+1,T+1)\), corresponding to no treatment in period~1 and treatment in period~2. The assignment distribution is
\[
\mathbb P(S_{[N]} = s_{[N]})
=
\binom{N}{N_{\mathrm{AB}}}^{-1}
\mathbbm 1\!\left\{ s_{[N]} \in \mathcal S_{\mathrm{CO}} \right\},
\]
where \(\mathcal S_{\mathrm{CO}}\) denotes the set of all assignments with \(N_{\mathrm{AB}}\) units assigned \(S_i=(1,t_0+1)\).
\end{example}

\section{Estimating equations when \(A_i\) is observed}\label{subsec:first_moment}

If the realized start time \(A_i\) is observed, then \eqref{eq:yit_expression} implies linear moment equations for estimating \(\{\tau_l(\cdot)\}\). For \(k\in\{0,1,\dots,t-1\}\), define
\begin{equation}\label{eq:m_tk_def}
d_{t,k}(x)
:=\mathbb E\!\left[\,R_{i,t}\Big(\mathbf 1_{\{A_i=t-k\}}-\pi_{t-k}(X_i)\Big)\,\middle|\,X_i=x\right].
\end{equation}
Multiplying \eqref{eq:yit_expression} by \(\mathbf 1_{\{A_i=t-k\}}-\pi_{t-k}(X_i)\) and taking conditional expectations gives
\[
d_{t,k}(x)
=
\sum_{l=0}^{t-1}
\mathbb E\!\left[
\big(\mathbf 1_{\{A_i=t-l\}}-\pi_{t-l}(X_i)\big)
\big(\mathbf 1_{\{A_i=t-k\}}-\pi_{t-k}(X_i)\big)
\middle| X_i=x\right]
\tau_l(x),
\]
where the noise term drops by \(\mathbb E[\epsilon_{i,t}\mid X_i,A_i]=0\). Using mutual exclusivity of the indicators \(\{\mathbf 1_{\{A_i=a\}}:a\in[T+1]\}\), this simplifies to
\begin{equation}\label{eq:m_t_k}
d_{t,k}(x)
=
\pi_{t-k}(x)\tau_k(x)
-
\pi_{t-k}(x)\sum_{l=0}^{t-1}\pi_{t-l}(x)\tau_l(x).
\end{equation}
Assume that \(1-\pi_{\leq t}(x)>0\). Summing \eqref{eq:m_t_k} over \(k=0,\dots,t-1\) and rearranging yields
\[
\sum_{l=0}^{t-1}\pi_{t-l}(x)\tau_l(x)
=
\frac{\sum_{k=0}^{t-1} d_{t,k}(x)}{1-\pi_{\leq t}(x)}.
\]
Substituting back into \eqref{eq:m_t_k} gives
\[
\tau_k(x)
=
\frac{d_{t,k}(x)}{\pi_{t-k}(x)}
+
\frac{\sum_{j=0}^{t-1} d_{t,j}(x)}{1-\pi_{\leq t}(x)},
\qquad k=0,1,\dots,t-1.
\]

\section{Experimental details}

\subsection{Consistency and warm-start experiment}
\label{app:warm_start_dgp}

This appendix describes the data-generating process and estimators used in the warm-start experiment in \Cref{subsec:warm_start}. We use \(T=5\) time periods and the same staggered-adoption cohort proportions
$\pi=(0.1,0.2,0.2,0.2,0.2,0.1),$
where the last component corresponds to the never-treated group. The covariate is one-dimensional:
$X_i\sim\mathrm{Unif}[-1,1].$
The untreated mean is affine in \(X_i\):
\[
\mu_{0,t}(X_i)
=
\alpha_t+\beta_tX_i,
\qquad
\alpha_t=0.55+0.1(t-1),
\qquad
\beta_t=0.3-0.03(t-1),
\]
for \(t=1,\ldots,5\). 
Observed outcomes are generated by
\[
Y_{i,t}
=
\mu_{0,t}(X_i)
+
\sum_{l=0}^{t-1}\mathbf 1_{\{A_i=t-l\}}\tau_l(X_i)
+
\varepsilon_{i,t},
\quad
\varepsilon_{i,t}\overset{\mathrm{iid}}{\sim}N(0,0.1^2).
\]

For the raw consistency experiments, the lagged CATE vector is
\[
(\tau_0(X_i),\ldots,\tau_4(X_i))
=
(1.3,\ 1.15,\ 1.0,\ 0.85,\ 0.7+0.35X_i).
\]
We run these experiments over the sample size
\[
N\in\{200,400,800,1600,3200,6400,12800,25600\}.
\]
The trace penalty in the convex relaxation objective is set to
$\lambda_{\mathrm{trace}}=N^{-1/2},$
while the ridge penalty in the NLS objective is always set to zero.

For the warm-start experiment, the full sample size is fixed at \(N=25{,}600\). We first fit the nonparametric residual-covariance estimator using all covariates and outcomes, but without using the realized assignments. We then reveal the treatment assignments for only a subset $\mathcal I_M\subset\{1,\ldots,N\}$ and  $|\mathcal I_M|=M,$
where
\[
M\in\{25,50,100,200,400,800,1600,3200\}.
\] 
We learn the signed  CATE vector $\tau(x)$
using only these \(M\) assignments.

Denote the units that start treatment at time \(a\) or remain untreated at time \(a+l\) by
\[
\mathcal E_{a,l}=
\{i\in\mathcal I_M: A_i=a \text{ or } A_i>a+l\},
\]
and define the conditional assignment probability
\[
e_{a,l}
=
\frac{\pi_a}{\pi_a+\sum_{b>a+l}\pi_b}.
\]
For a candidate lag-CATE function $f(X_i)=(f_0(X_i),\ldots,f_{T-1}(X_i))$,
define
\[
\ell_M(f)
=
\frac{1}{M}
\sum_{l=0}^{T-1}
\sum_{a=1}^{T-l}
\sum_{i\in\mathcal E_{a,l}}
\left[
\hat R_{i,a+l}
-
\left\{
\mathbf 1_{\{A_i=a\}}-e_{a,l}
\right\}
f_l(X_i)
\right]^2,
\]
where \(\hat R_{i,a+l}=Y_{i,a+l}-\hat\mu_{a+l}(X_i)\), and 
\(\hat\mu_{a+l}(X_i)\) is the fitted outcome function using all covariates and outcomes at time \(a+l\).
The direct R-learner estimator directly learns the lagged CATE from the \(M\) labeled units:
\[
\hat \tau^{\mathrm{direct}}
\in
\argmin_{f\in\mathcal F^T}
\ell_M(f).
\]
The warm-start estimator begins with the assignment-free unsigned estimate \(\hat\tau_{\pm}(X_i)\). Instead of learning the full vector \(f\), it only learns a sign function \(s(\cdot)\in\{-1,1\}\) by minimizing the same loss over functions of the form
$f_s(X_i)=s(X_i)\hat\tau_{\pm}(X_i).$
Thus,
\[
\hat s
\in
\argmin_{s\in\mathcal S}
\ell_M(f_s),
\qquad
\hat\tau(\cdot)
=
\hat s(\cdot)\hat\tau_{\pm}(\cdot).
\]
The warm-start estimator turns the full CATE estimation problem into a lower-dimensional sign-learning problem. Both the direct R-learner and the warm-start sign model are implemented using gradient-boosted regression trees.


\subsection{Parametric estimation}
\label{appendix:parametric}

In the main text, we present the off-diagonal estimator in a pointwise, nonparametric form. Here we briefly describe a parametric version. The idea is the same, but we restrict the lagged CATE functions to a finite-dimensional linear model.

Suppose that each lagged CATE is linear in a \(p\)-dimensional feature vector \(\phi(x)\):
\[
\tau_l(x)=\beta_l^\top\phi(x),
\qquad l=0,\ldots,T-1.
\]
Let $\beta:=(\beta_0^\top,\ldots,\beta_{T-1}^\top)^\top\in\mathbb R^{pT}$. We write \(\tau(x)=P(x)\beta\), where
\(
P(x):=I_T\otimes \phi(x)^\top.
\)
Substituting this into the off-diagonal quadratic moments gives, for each \((t,s)\in\mathcal P\),
\[
\tau(x)^\top H_{t,s}(x)\tau(x)
=
\beta^\top G_{t,s}(x)\beta,
\qquad
G_{t,s}(x):=P(x)^\top H_{t,s}(x)P(x).
\]
Thus the same NLS and convex-relaxation estimators from \Cref{subsubsec:cross_est} can be applied after replacing \(H_{t,s}(x)\) by \(G_{t,s}(x)\) and summing the moment-matching loss over the sample. In the convex version, we estimate the rank-one matrix \(\beta\beta^\top\), extract the leading eigenvector, and obtain a spectral estimator \(\hat\beta^{\,\mathrm{sp}}\).

\begin{figure}[t]
    \centering
    \includegraphics[width=0.4\linewidth]{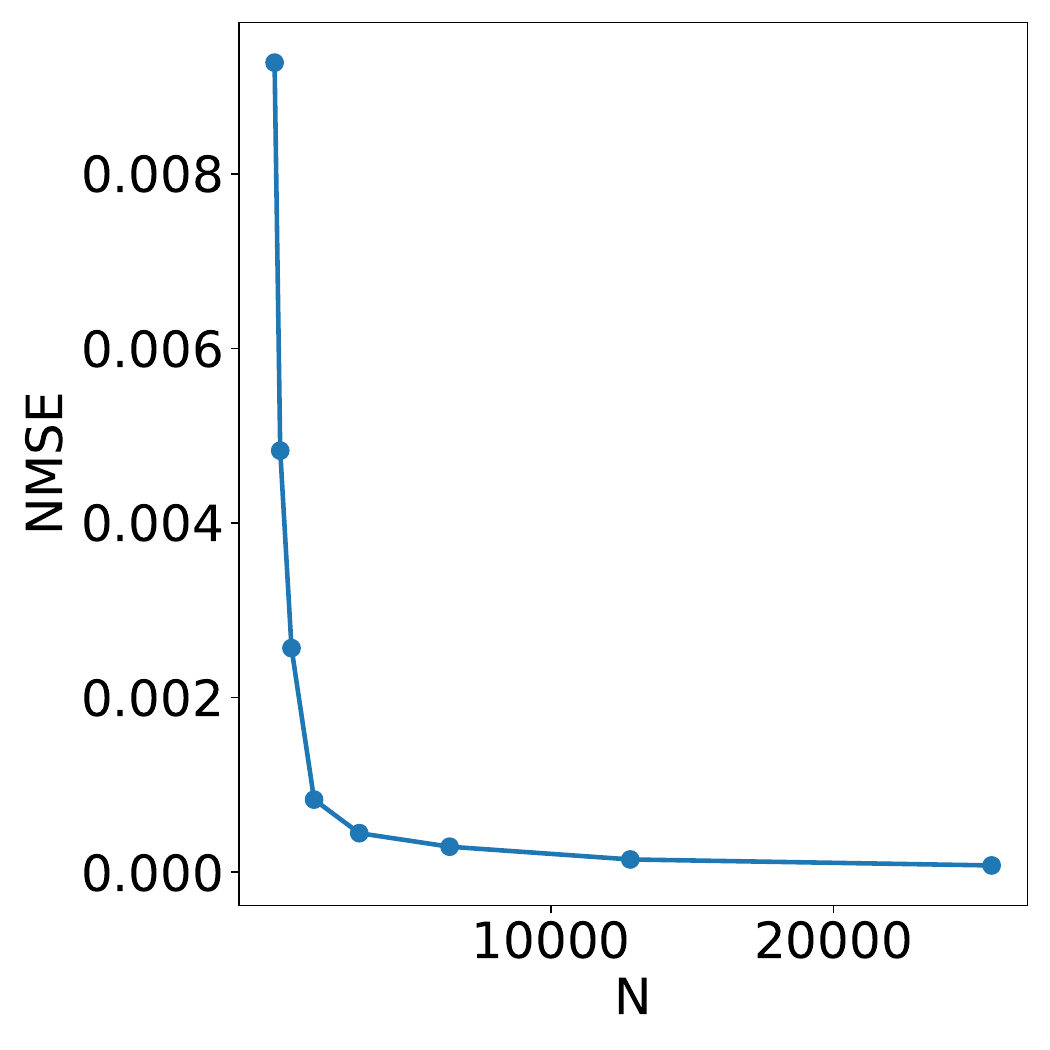}
    \caption{Consistency experiment for the parametric off-diagonal estimator.}
    \label{fig:parametric-consistency}
\end{figure}

We identify the coefficients using the moment conditions
\begin{equation}\label{eq:projected_moments}
\mathbb E\!\left[
\Big(R_{i,t}R_{i,s}-\beta^\top G_{t,s}(X_i)\beta\Big)\phi_j(X_i)
\right]=0,
\quad
1\le s<t\le T,\quad j=1,\ldots,p.
\end{equation}
These give \(p\binom{T}{2}\) scalar equations for \(pT\) unknown coefficients. For \(T\ge3\), the number of equations is at least the number of unknowns. Since the moments are quadratic, \(\beta\) and \(-\beta\) are observationally equivalent. As shown in \Cref{prop:opt2_identification}, a sufficient local identification condition is that the Jacobian of the stacked moment map in \eqref{eq:projected_moments} with respect to \(\beta\) has full column rank \(pT\) at the true coefficient vector.

We also repeat the consistency experiment under this parametric specification. \Cref{fig:parametric-consistency} shows the same pattern as the nonparametric experiment: the estimation error for the coefficient vector \(\beta\) decreases as the sample size grows.

\subsection{County employment experiment}
\label{app:mpdta_dgp}

\subsubsection{Data generating process}

This appendix describes the semi-synthetic data-generating process used in the county teen employment experiments. 
We the standardized log population by $Z_i$.
We let \(Y^{\mathrm{obs}}\) denote the outcome variable log teen employment, and let
$s_Y:=\operatorname{sd}(Y^{\mathrm{obs}})$
be its empirical standard deviation.
To generate smooth treatment-effect heterogeneity in the power experiments, we use the centered empirical rank score
\[
R_i=\frac{\operatorname{rank}(Z_i)-1/2}{N},
\qquad
W_i=2R_i-1.
\]
We re-randomize treatment timing using the cohort proportions
\[
\pi=(0.1,0.2,0.2,0.2,0.2,0.1),
\]
where the first five components correspond to treatment start times \(1,\ldots,5\), and the last component corresponds to the never-treated group.

We generate the untreated mean \(\mu_{0,t}(Z_i)\) by perturbing the empirical time-\(t\) mean of the original outcome with a nonlinear function of standardized log population:
\[
\mu_{0,t}(Z_i)
=
\bar Y_{\cdot,t}
+
s_Y\left[
a_t Z_i
+
b_t (Z_i^2-1)
+
c_t (Z_i^3-3Z_i)
\right],
\]
where \(\bar Y_{\cdot,t}\) is the empirical mean of the original outcome at time \(t\),
\[
\begin{aligned}
a &= 2.0\cdot(-0.9,-0.3,0.15,0.75,1.35),\\
b &= 2.0\cdot(-0.25,-0.08,0.08,0.30,0.55),\\
c &= 2.0\cdot(-0.12,-0.03,0.03,0.12,0.28).
\end{aligned}
\]
In the size experiments, the treatment effect is the sharp zero null: $\tau_l(Z_i)=0,~l=0,\ldots,4.$
In the power experiment, we let 
\[
\tau_l(Z_i)=-(m_l+s_lW_i),
~l=0,\ldots,4,
\]
where $m=(0.6,1.0,1.4,1.8,2.2)$ and $s=(0.4,0.9,1.4,1.9,2.4).$
Thus the design has increasingly large negative average effects across exposure lags, with stronger negative effects among higher-log-population counties.

The error covariance is scaled by the variability of the original outcome.
Errors have covariance  $(0.3s_Y)^2I_5$.
Finally, observed outcomes are generated from the additive model in \eqref{eq:yit} using the re-randomized treatment start times, the untreated mean function above, the corresponding treatment-effect pattern, and the simulated errors.

\subsubsection{Subgroup experiment details}
\label{app:subgroup_dgp}

Here we describe the subgroup experiment in the county teen employment study. Denote the raw log population
$X_i=\mathrm{lpop}_i$.
The true threshold is the empirical \(0.65\)-quantile, $c^\star:=Q_{0.65}(X_i).$
The high-\(\mathrm{lpop}\) group is the effective subgroup,
\[
\mathcal X_{\mathrm{eff}}=\{i:X_i>c^\star\},
\]
and the low-\(\mathrm{lpop}\) group is the true null subgroup,
\[
\mathcal X_{\mathrm{null}}=\{i:X_i\le c^\star\}.
\]
The treated units are approximately evenly assigned to start treatment in periods \(1,\ldots,5\), with \(A_i=6\) denoting never treated.

For this subgroup experiment, the untreated potential outcomes are generated as
\[
Y_{i,t}(0)=\bar Y_{\cdot,t}+\varepsilon_{i,t},
\]
where \(\bar Y_{\cdot,t}\) is the empirical mean of the original outcome at time \(t\), and \(\varepsilon_{i,t}\) is standardized Gaussian noise scaled by \(0.04s_Y\). Unlike the validity and power experiments, this subgroup experiment does not use the nonlinear cubic untreated trend.

For each signal level
\(
\tau_*\in\{0.10,0.15,0.20,0.25,0.30\},
\)
the treatment effect is invariant over lags and negative in the high-\(\mathrm{lpop}\) subgroup:
\[
\tau_l(X_i)
=
-\tau_*\mathbf 1_{\{X_i>c^\star\}},
\qquad l=0,\ldots,4.
\]
Thus, \(\mathcal X_{\mathrm{null}}\) is truly null, while \(\mathcal X_{\mathrm{eff}}\) contains affected units.

We estimate the CATE using the diagonal residual-moment estimator described in \Cref{subsubsec:static_est}. The nuisance functions are fitted using CART, with the relevant leaf-size tuning carried out by the out-of-fold threshold-selection procedure. We then apply the assignment-free sign-resolution step from \Cref{sect:sign_cate} to obtain signed CATE estimates.
Given the fitted signed CATEs \(\hat\tau_i\), we estimate the subgroup threshold by fitting a two-leaf regression stump in \(X_i\). For a threshold \(c\), define the left and right groups by
\[
\mathcal L(c)=\{i:X_i\le c\},
\qquad
\mathcal R(c)=\{i:X_i>c\}.
\]
Let \(\bar\tau_L(c)\) and \(\bar\tau_R(c)\) denote the average fitted CATEs in the two leaves. The stump loss is
\[
\ell(c)
=
\sum_{i\in\mathcal L(c)}
\{\hat\tau_i-\bar\tau_L(c)\}^2
+
\sum_{i\in\mathcal R(c)}
\{\hat\tau_i-\bar\tau_R(c)\}^2.
\]
Since the effective subgroup is assumed to be the high-\(\mathrm{lpop}\) group with negative treatment effects, we only allow splits satisfying
$
\bar\tau_R(c)\le \bar\tau_L(c).
$
In each simulation replication, we draw \(B=200\) bootstrap resamples. For each bootstrap resample \(b\), we compute a stump threshold \(\hat c^{(b)}\) by minimizing the bootstrap version of \(\ell(c)\), subject to the direction constraint above and a minimum subgroup-size constraint. The final threshold is the empirical \(0.05\)-quantile of the thresholds:
\[
\hat c
=
Q_{0.05}\left(
\hat c^{(1)},\ldots,\hat c^{(200)}
\right).
\]
This lower quantile is used to reduce threshold overshoot. Overshooting the true threshold would place truly affected high-\(\mathrm{lpop}\) units into the selected null subgroup, which could inflate the rejection rate in the null subgroup.

The selected subgroups used for downstream randomization inference are
\[
\widehat{\mathcal X}_{\mathrm{eff}}=\{i:X_i>\hat c\},
\qquad
\widehat{\mathcal X}_{\mathrm{null}}=\{i:X_i\le \hat c\}.
\]
For subgroup inference, we use the AIPW statistic from \Cref{sect:cate_stat}. Because the effective subgroup effect is negative, the implementation orients the AIPW statistic using the estimated average signed CATE in the selected subgroup.

\section{Technical proofs}

\subsection{Proof of \Cref{thm:validity}}

\begin{proof}
Let
\(
W:=(X_{[N]},Y_{[N],[T]},\hat\eta).
\)
Under the sharp null \(H_0\), the observed outcomes do not depend on the realized assignment vector \(A_{[N]}\). Conditional on \(W\), the only randomness in \(T(O_{[N]};\hat\eta)\) comes from \(A_{[N]}\), while the only randomness in \(T(\tilde O_{[N]};\hat\eta)\) comes from the simulated assignment \(\tilde A_{[N]}\).

The condition $\hat\eta \independent A_{[N]}\mid X_{[N]},Y_{[N],[T]}$
implies that, conditional on \(W\), the observed assignment vector still follows the design distribution. By construction, \(\tilde A_{[N]}\) is an independent draw from the same design distribution. Hence, conditional on \(W\), the observed and simulated test statistics $Z:=T(O_{[N]};\hat\eta)$
and $\tilde Z:=T(\tilde O_{[N]};\hat\eta)$
have the same distribution.
Fix \(W=w\). Since the assignment space is finite in the randomized designs considered here, \(Z\mid W=w\) has finite support. Let
\[
G_w(z):=\tilde{\mathbb P}(\tilde Z\ge z\mid W=w).
\]
The randomization \(p\)-value is \(p(O_{[N]};\hat\eta)=G_W(Z)\). For any \(\alpha\in(0,1)\), define
\[
D_\alpha(w):=\{z:\,G_w(z)\le \alpha\}.
\]
Let \(z_\alpha(w)\) be the smallest support point of \(Z\mid W=w\) that belongs to \(D_\alpha(w)\). Since \(G_w\) is nonincreasing, $\{G_w(Z)\le \alpha\}\subseteq \{Z\ge z_\alpha(w)\}$.
Therefore,
\[
\mathbb P\{G_w(Z)\le \alpha\mid W=w\}
\le
\mathbb P\{Z\ge z_\alpha(w)\mid W=w\}
=
G_w(z_\alpha(w))
\le \alpha.
\]
Taking conditional expectation given \((X_{[N]},Y_{[N],[T]})\) gives
\[
\mathbb P_{H_0}\{p(O_{[N]};\hat\eta)\le\alpha\mid X_{[N]},Y_{[N],[T]}\}\le\alpha,
\]
which proves \eqref{equ:validity}.
\end{proof}

\subsection{Proof of \Cref{prop:diagonal_moment}}

\begin{proof}
Fix \(x\in\mathcal X\). Under Assumption~\ref{assumption:diagonal}, \(\tau_l(x)=\tau_*(x)\) for all \(l\). From \eqref{eq:yit_expression},
\[
R_{i,t}
=
\tau_*(x)\sum_{l=0}^{t-1}\Bigl(\mathbf 1_{\{A_i=t-l\}}-\pi_{t-l}(x)\Bigr)
+\epsilon_{i,t}.
\]
Since
\[
\sum_{l=0}^{t-1}\mathbf 1_{\{A_i=t-l\}}=\mathbf 1_{\{A_i\le t\}},
\qquad
\sum_{l=0}^{t-1}\pi_{t-l}(x)=\pi_{\le t}(x),
\]
we have
\[
R_{i,t}
=
\tau_*(x)\Bigl(\mathbf 1_{\{A_i\le t\}}-\pi_{\le t}(x)\Bigr)+\epsilon_{i,t}.
\]
Therefore,
\begin{align*}
c_{t,t}(x)
&=\mathbb E[R_{i,t}^2\mid X_i=x] \\
&=\tau_*^2(x)\,
\mathbb E\!\left[\Bigl(\mathbf 1_{\{A_i\le t\}}-\pi_{\le t}(x)\Bigr)^2\mid X_i=x\right]
+\mathbb E[\epsilon_{i,t}^2\mid X_i=x],
\end{align*}
where the cross term is zero because \(\mathbb E[\epsilon_{i,t}\mid X_i,A_i]=0\). Since \(\mathbf 1_{\{A_i\le t\}}\mid X_i=x\) is Bernoulli with success probability \(\pi_{\le t}(x)\),
\[
\mathbb E\!\left[\Bigl(\mathbf 1_{\{A_i\le t\}}-\pi_{\le t}(x)\Bigr)^2\mid X_i=x\right]
=
\pi_{\le t}(x)(1-\pi_{\le t}(x)).
\]
Using \(\mathbb E[\epsilon_{i,t}^2\mid X_i=x]=\sigma_\epsilon^2(x)\), we obtain
\[
c_{t,t}(x)=\tau_*^2(x)v_t(x)+\sigma_\epsilon^2(x),
\qquad
v_t(x)=\pi_{\le t}(x)(1-\pi_{\le t}(x)).
\]
\end{proof}

\subsection{Proof of \Cref{prop:diagonal_identification}}

\begin{proof}
By Proposition~\ref{prop:diagonal_moment}, we have
\[
c_{t,t}(x)=\tau_*^2(x)v_t(x)+\sigma_\epsilon^2(x),
\qquad t\in[T].
\]
Averaging over \(t\) gives $\bar c(x)=\tau_*^2(x)\bar v(x)+\sigma_\epsilon^2(x).$
Subtracting yields
\[
c_{t,t}(x)-\bar c(x)
=
\tau_*^2(x)\{v_t(x)-\bar v(x)\}.
\]
If \(v_t(x)\neq \bar v(x)\), then
\[
\tau_*^2(x)=\frac{c_{t,t}(x)-\bar c(x)}{v_t(x)-\bar v(x)}.
\]
Thus \(m_*(x)=|\tau_*(x)|\) is identified by taking the nonnegative square root.
\end{proof}

\subsection{Proof of \Cref{prop:diag_consistency}}

\begin{proof}
Since \(T\) is fixed and \(\hat c_{t,t}(x)\xrightarrow{p}c_{t,t}(x)\) for all \(t\),
\[
\hat{\bar c}(x)=\frac1T\sum_{t=1}^T\hat c_{t,t}(x)
\xrightarrow{p}
\bar c(x).
\]
Hence
\[
\frac{\hat c_{t,t}(x)-\hat{\bar c}(x)}{v_t(x)-\bar v(x)}
\xrightarrow{p}
\frac{c_{t,t}(x)-\bar c(x)}{v_t(x)-\bar v(x)}
=\tau_*^2(x),
\]
where the last equality follows from Proposition~\ref{prop:diagonal_identification}. Since the map \(u\mapsto \sqrt{\max\{u,0\}}\) is continuous,
\[
\hat m_*(x)
=
\sqrt{\max\left\{\frac{\hat c_{t,t}(x)-\hat{\bar c}(x)}{v_t(x)-\bar v(x)},0\right\}}
\xrightarrow{p}
\sqrt{\tau_*^2(x)}=m_*(x).
\]
\end{proof}

\subsection{Proof of \Cref{prop:offdiag_moment}}\label{sect:proof_offdiag_moment}

\begin{proof}
Fix \(x\in\mathcal X\) and \(1\le s<t\le T\). Throughout, condition on \(X_i=x\). Let
\[
\Delta_a:=\mathbf 1_{\{A_i=a\}}-\pi_a(x),
\qquad a\in[T].
\]
Then \eqref{eq:yit_expression} gives
\[
R_{i,t}=\sum_{l=0}^{t-1}\Delta_{t-l}\tau_l(x)+\epsilon_{i,t},
\qquad
R_{i,s}=\sum_{k=0}^{s-1}\Delta_{s-k}\tau_k(x)+\epsilon_{i,s}.
\]
Under Assumption~\ref{assumption:offdiagonal}, \(\mathbb E[\epsilon_{i,t}\epsilon_{i,s}\mid X_i=x]=0\) for \(t\neq s\). The cross terms involving one \(\epsilon\) are also zero by \(\mathbb E[\epsilon_{i,t}\mid X_i,A_i]=0\). Therefore,
\[
c_{t,s}(x)
=
\sum_{l=0}^{t-1}\sum_{k=0}^{s-1}
\mathbb E[\Delta_{t-l}\Delta_{s-k}\mid X_i=x] \tau_l(x)\tau_k(x).
\]
For \(a,b\in[T]\), mutual exclusivity of the treatment start indicators gives
\[
\mathbb E[\Delta_a\Delta_b\mid X_i=x]
=
\pi_a(x)\mathbf 1_{\{a=b\}}-\pi_a(x)\pi_b(x).
\]
Substituting \(a=t-l\) and \(b=s-k\), and using \(t>s\), we obtain
\begin{equation}\label{eq:cts_formula_rewrite_app}
c_{t,s}(x)
=
\sum_{k=0}^{s-1}\pi_{s-k}(x)\tau_{t-s+k}(x)\tau_k(x)
-
\bar\tau_t(x)\bar\tau_s(x),
\end{equation}
where
$\bar\tau_t(x):=\sum_{l=0}^{t-1}\pi_{t-l}(x)\tau_l(x).$
To write the moment as a quadratic form, define
\[
\pi^{(t)}(x):=(\pi_t(x),\pi_{t-1}(x),\dots,\pi_1(x),0,\dots,0)^\top\in\mathbb R^T,
\]
\[
\pi^{(s)}(x):=(\pi_s(x),\pi_{s-1}(x),\dots,\pi_1(x),0,\dots,0)^\top\in\mathbb R^T,
\]
\[
D_s(x):=\operatorname{diag}(\pi_s(x),\pi_{s-1}(x),\dots,\pi_1(x),0,\dots,0)\in\mathbb R^{T\times T},
\]
and let \(J_{t-s}\in\mathbb R^{T\times T}\) be the shift matrix with entries
\[
(J_{t-s})_{a,b}=\mathbf 1_{\{a=b+t-s\}}.
\]
Then
\[
\tau(x)^\top J_{t-s}D_s(x)\tau(x)
=
\sum_{k=0}^{s-1}\pi_{s-k}(x)\tau_{t-s+k}(x)\tau_k(x),
\]
and
\[
\tau(x)^\top \pi^{(t)}(x)\pi^{(s)}(x)^\top\tau(x)=\bar\tau_t(x)\bar\tau_s(x).
\]
Thus, with
\[
G_{t,s}(x):=J_{t-s}D_s(x)-\pi^{(t)}(x)\pi^{(s)}(x)^\top,
\]
we have \(c_{t,s}(x)=\tau(x)^\top G_{t,s}(x)\tau(x)\). Since only the symmetric part matters in a scalar quadratic form, define
\[
H_{t,s}(x):=\frac12\{G_{t,s}(x)+G_{t,s}(x)^\top\}.
\]
Then
$c_{t,s}(x)=\tau(x)^\top H_{t,s}(x)\tau(x),$
which proves the quadratic representation.
\end{proof}

\subsection{Proof of \Cref{prop:opt2_identification}}

\begin{proof}
Fix \(x\in\mathcal X\), and write $F:=F_x, J:=J_x,$ and $\tau_0:=\tau(x).$
By assumption, \(\tau_0\neq0\) and \(J(\tau_0)\) has full column rank \(T\). Hence there is a set of \(T\) coordinates \(\mathcal I\subseteq\mathcal P\) such that the submatrix \(J_{\mathcal I}(\tau_0)\) is invertible. Let \(P_{\mathcal I}\) be the coordinate projection onto these entries, and define
\[
G:=P_{\mathcal I}\circ F:\mathbb R^T\to\mathbb R^T.
\]
Then \(DG(\tau_0)=J_{\mathcal I}(\tau_0)\) is invertible. By the inverse function theorem, there exists an open neighborhood \(U_+\) of \(\tau_0\) on which \(G\), and therefore \(F\), is injective.

Because each component of \(F\) is a quadratic form, \(F(-\tau)=F(\tau)\) for every \(\tau\). Also, \(J(-\tau_0)=-J(\tau_0)\), so \(J(-\tau_0)\) has full column rank. Applying the same argument at \(-\tau_0\), there is an open neighborhood \(U_-\) of \(-\tau_0\) on which \(F\) is injective.

Choose an open neighborhood \(\mathcal N_x\) of \(\tau_0\) small enough that \(\mathcal N_x\subseteq U_+\), \(-\mathcal N_x\subseteq U_-\), and \(\mathcal N_x\cap(-\mathcal N_x)=\varnothing\). Since the population moment restriction is \(c(x)=F(\tau_0)\), injectivity on \(\mathcal N_x\) gives
\[
\{\tau\in\mathcal N_x:F(\tau)=c(x)\}=\{\tau_0\}.
\]
If \(\tau\in-\mathcal N_x\) and \(F(\tau)=c(x)\), then \(-\tau\in\mathcal N_x\) and, by evenness of \(F\),
\[
F(-\tau)=F(\tau)=F(\tau_0).
\]
Injectivity on \(\mathcal N_x\) gives \(-\tau=\tau_0\), or \(\tau=-\tau_0\). Therefore,
\[
\{\tau\in\mathcal N_x\cup(-\mathcal N_x):F_x(\tau)=c(x)\}
=
\{\tau(x),-\tau(x)\}.
\]
\end{proof}

\subsection{Proof of \Cref{prop:rank_sufficient_design}}
\begin{proof}
Recall that the rows of \(J_x(\tau(x))\) are proportional to
\[
\{H_{t,s}(x)\tau(x):(t,s)\in\mathcal P\}.
\]
Thus it suffices to find \(T\) pairs \((t,s)\in\mathcal P\) such that the corresponding vectors
\(H_{t,s}(x)\tau(x)\) span \(\mathbb R^T\).

Let \(e_0,\ldots,e_{T-1}\) denote the standard basis of \(\mathbb R^T\). Recall from the quadratic-form representation that
\[
H_{t,s}(x)
=
\frac12\{G_{t,s}(x)+G_{t,s}(x)^\top\},
\]
where
\[
G_{t,s}(x)
=
J_{t-s}D_s(x)-\pi^{(t)}(x)\pi^{(s)}(x)^\top .
\]
Here
\[
D_s(x)
=
\mathrm{diag}\bigl(\pi_s(x),\pi_{s-1}(x),\ldots,\pi_1(x),0,\ldots,0\bigr),
\]
\[
\pi^{(t)}(x)
=
\bigl(\pi_t(x),\pi_{t-1}(x),\ldots,\pi_1(x),0,\ldots,0\bigr)^\top,
\]
and \(J_{t-s}\) is the shift matrix with entries
\[
(J_{t-s})_{a,b}=\mathbf 1_{\{a=b+t-s\}},
\qquad a,b\in\{0,\ldots,T-1\},
\]
where indices follow the lag coordinates of
\(\tau(x)=(\tau_0(x),\ldots,\tau_{T-1}(x))^\top\).

Under the equal-probability design \(\pi_1(x)=\cdots=\pi_T(x)=q\), we have
\[
D_s(x)=qD_s^0,
\qquad
\pi^{(t)}(x)=q\pi_0^{(t)},
\qquad
\pi^{(s)}(x)=q\pi_0^{(s)},
\]
where
\[
D_s^0=\mathrm{diag}(\underbrace{1,\ldots,1}_{s},0,\ldots,0).
\]
Therefore,
\[
\begin{aligned}
G_{t,s}(x)
&=
J_{t-s}D_s(x)-\pi^{(t)}(x)\pi^{(s)}(x)^\top\\
&=
qJ_{t-s}D_s^0
-
q^2\pi_0^{(t)}\{\pi_0^{(s)}\}^\top.
\end{aligned}
\]
Taking the symmetric part gives
\[
H_{t,s}(x)
=
\frac{q}{2}
\left\{
J_{t-s}D_s^0+D_s^0J_{t-s}^\top
\right\}
-
\frac{q^2}{2}
\left\{
\pi_0^{(t)}\{\pi_0^{(s)}\}^\top
+
\pi_0^{(s)}\{\pi_0^{(t)}\}^\top
\right\}.
\]
In particular,
\[
H_{t,s}(x)
=
\frac{q}{2}
\left\{
J_{t-s}D_s^0+D_s^0J_{t-s}^\top
\right\}
+O(q^2).
\]

Now consider the \(T\) off-diagonal pairs
\[
(2,1),(3,1),\ldots,(T,1),\quad (3,2).
\]
These give \(T-1\) pairs involving the first time point, plus one additional pair \((3,2)\), yielding \(T\) candidate rows of the Jacobian.

For \(j=1,\ldots,T-1\), the leading-order vector from the pair \((j+1,1)\) is proportional to
\[
v_j:=\tau_j(x)e_0+\tau_0(x)e_j.
\]
The leading-order vector from the pair \((3,2)\) is proportional to
\[
v_*:=\tau_1(x)e_0+\{\tau_0(x)+\tau_2(x)\}e_1+\tau_1(x)e_2.
\]
Let \(V(\tau(x))\) be the \(T\times T\) matrix with rows
\[
v_1^\top,\ldots,v_{T-1}^\top,v_*^\top .
\]
A direct determinant calculation gives
\[
\det\{V(\tau(x))\}
=
(-1)^T\,2\,\tau_0(x)^{T-2}\tau_1(x)\tau_2(x),
\]
up to the sign convention from row ordering.

Under the geometric decay model,
\[
\tau_l(x)=\theta(x)\rho^l,\qquad l=0,\ldots,T-1,
\]
we have
\[
\tau_0(x)=\theta(x),
\qquad
\tau_1(x)=\theta(x)\rho,
\qquad
\tau_2(x)=\theta(x)\rho^2.
\]
Therefore,
\[
\det\{V(\tau(x))\}
=
(-1)^T\,2\,\theta(x)^T\rho^3,
\]
which is nonzero whenever \(\theta(x)\neq0\) and \(\rho\in(0,1)\).

For finite \(q\), let \(V_q(\tau(x))\) be the \(T\times T\) matrix formed by the selected vectors \(H_{t,s}(x)\tau(x)\). Since each entry of \(H_{t,s}(x)\tau(x)\) is a polynomial in \(q\), \(\det\{V_q(\tau(x))\}\) is also a polynomial in \(q\). Moreover, from the expansion above,
\[
V_q(\tau(x))
=
\frac{q}{2}V(\tau(x))+O(q^2),
\]
and hence
\[
\det\{V_q(\tau(x))\}
=
\left(\frac{q}{2}\right)^T
\det\{V(\tau(x))\}
+
O(q^{T+1}).
\]
Substituting the geometric-decay form gives
\[
\det\{V_q(\tau(x))\}
=
\theta(x)^T
\left[
(-1)^T\,2^{1-T}\rho^3 q^T
+
O(q^{T+1})
\right].
\]
Define
\[
P_{T,\rho}(q)
:=
\theta(x)^{-T}\det\{V_q(\tau(x))\}.
\]
Then \(P_{T,\rho}(q)\) is a polynomial in \(q\), and its leading term is
\[
(-1)^T\,2^{1-T}\rho^3 q^T,
\]
which is not identically zero because \(\rho\in(0,1)\). Hence \(P_{T,\rho}\) is a nonzero polynomial.

Whenever \(P_{T,\rho}(q)\neq0\), we have
\[
\det\{V_q(\tau(x))\}\neq0.
\]
Thus the selected vectors \(H_{t,s}(x)\tau(x)\) span \(\mathbb R^T\), and therefore the full collection
\[
\{H_{t,s}(x)\tau(x):(t,s)\in\mathcal P\}
\]
also spans \(\mathbb R^T\). Consequently,
\[
\operatorname{rank}(J_x(\tau(x)))=T.
\]
\end{proof}

\subsection{Proof of \Cref{prop:global_id}}

\begin{proof}
Fix \(x\in\mathcal X\), and write \(\tau_0:=\tau(x)\). Since the population moment equation holds,
\[
c(x)=F_x(\tau_0).
\]
Also, because \(F_x\) is quadratic, \(F_x(-\tau_0)=F_x(\tau_0)=c(x)\). Hence
\[
\{\tau_0,-\tau_0\}
\subseteq
\{\tau\in\Theta:F_x(\tau)=c(x)\}.
\]

It remains to show the reverse inclusion. Suppose, toward a contradiction, that there exists
\(\tau'\in\Theta\) such that $F_x(\tau')=c(x),\tau'\notin\{\tau_0,-\tau_0\}.$ 
Then
\[
d_{\pm}(\tau',\tau_0)
=
\min\{\|\tau'-\tau_0\|_2,\|\tau'+\tau_0\|_2\}
>0.
\]
Let \(\varepsilon=d_{\pm}(\tau',\tau_0)/2\). Then \(\tau'\) belongs to the set
\[
\{\tau\in\Theta:d_{\pm}(\tau,\tau_0)\ge \varepsilon\}.
\]
By the assumed separation condition,
\[
\|F_x(\tau')-c(x)\|_2^2
\ge
\inf_{\tau\in\Theta:\ d_{\pm}(\tau,\tau_0)\ge \varepsilon}
\|F_x(\tau)-c(x)\|_2^2
>
0.
\]
This contradicts \(F_x(\tau')=c(x)\). Therefore no such \(\tau'\) exists, and
\[
\{\tau\in\Theta:F_x(\tau)=c(x)\}
=
\{\tau_0,-\tau_0\}.
\]

Finally, since \(\tau_0\tau_0^\top=(-\tau_0)(-\tau_0)^\top\), the rank-one matrix
\[
B(x):=\tau(x)\tau(x)^\top
\]
is uniquely determined by the moment equation over \(\Theta\). This proves the claim.
\end{proof}

\subsection{Proof of \Cref{prop:gaussian_sign_recovery}}

\begin{proof}
Fix a unit \(i\), and condition on \(X_i\) and \(A_i=a\le T\). Define
\[
g_{i,t}
:=
\sum_{l=0}^{t-1}
\Bigl(
\mathbf 1_{\{a=t-l\}}-\pi_{t-l}(X_i)
\Bigr)
\theta(X_i)\rho^l,
\qquad t\in[T].
\]
Under the assumed CATE form
\[
\tau_l(X_i)=s_i\theta(X_i)\rho^l,
\]
the residual equation in \eqref{eq:yit_expression} becomes
\[
R_{i,t}=s_i g_{i,t}+\epsilon_{i,t},
\qquad t\in[T].
\]

The oracle sign estimator with observed assignment compares the two losses
\[
L_i(s):=\sum_{t=1}^{T}\bigl(R_{i,t}-s g_{i,t}\bigr)^2,
\qquad s\in\{-1,1\}.
\]
For the correct sign,
\[
L_i(s_i)=\sum_{t=1}^{T}\epsilon_{i,t}^2.
\]
For the wrong sign,
\[
L_i(-s_i)
=
\sum_{t=1}^{T}\bigl(\epsilon_{i,t}+2s_i g_{i,t}\bigr)^2.
\]
Thus,
\[
\{\hat s_i^{\mathrm{obs}}\neq s_i\}
\subseteq
\{L_i(-s_i)\le L_i(s_i)\}.
\]
Expanding \(L_i(-s_i)\le L_i(s_i)\) gives
\[
4\sum_{t=1}^{T}g_{i,t}^2
+
4s_i\sum_{t=1}^{T}\epsilon_{i,t}g_{i,t}
\le 0,
\]
or equivalently,
\[
s_i\sum_{t=1}^{T}\epsilon_{i,t}g_{i,t}
\le
-\sum_{t=1}^{T}g_{i,t}^2.
\]
By the conditional sub-Gaussian assumption,
\[
s_i\sum_{t=1}^{T}\epsilon_{i,t}g_{i,t}
\]
is sub-Gaussian with variance proxy
\[
\sigma^2\sum_{t=1}^{T}g_{i,t}^2.
\]
Therefore, by the sub-Gaussian tail bound,
\[
\mathbb P(\hat s_i^{\mathrm{obs}}\neq s_i\mid X_i,A_i=a)
\le
\exp\left\{
-\frac{1}{2\sigma^2}\sum_{t=1}^{T}g_{i,t}^2
\right\}.
\]

It remains to lower bound \(\sum_{t=1}^{T}g_{i,t}^2\). Under the balanced assignment design,
\[
\pi_1(X_i)=\cdots=\pi_T(X_i)=\frac{1}{T+1}.
\]
At the adoption time \(t=a\),
\[
g_{i,a}
=
\theta(X_i)
\left[
1
-
\frac{1}{T+1}\sum_{l=0}^{a-1}\rho^l
\right]
=
\theta(X_i)
\left[
1
-
\frac{1}{T+1}\frac{1-\rho^a}{1-\rho}
\right].
\]
Since \(\sum_{t=1}^{T}g_{i,t}^2\ge g_{i,a}^2\), we have
\[
\sum_{t=1}^{T}g_{i,t}^2
\ge
\theta^2(X_i)
\left[
1
-
\frac{1}{T+1}\frac{1-\rho^a}{1-\rho}
\right]^2.
\]
Also,
\[
\frac{1-\rho^a}{1-\rho}
=
1+\rho+\cdots+\rho^{a-1}
\le a,
\]
because \(\rho\in(0,1)\). Hence
\[
1
-
\frac{1}{T+1}\frac{1-\rho^a}{1-\rho}
\ge
1-\frac{a}{T+1}
>0.
\]
Thus,
\[
\sum_{t=1}^{T}g_{i,t}^2
\ge
\theta^2(X_i)
\left(1-\frac{a}{T+1}\right)^2.
\]
Substituting this lower bound into the tail inequality gives
\[
\mathbb P(\hat s_i^{\mathrm{obs}}\neq s_i\mid X_i,A_i=a)
\le
\exp\left\{
-\frac{\theta^2(X_i)}{2\sigma^2}
\left(1-\frac{a}{T+1}\right)^2
\right\}.
\]
This proves the claim.
\end{proof}

\subsection{Proof of \Cref{thm:opt2_convex_consistency}}

\begin{proof}
Fix \(x\in\mathcal X_0\), and write
\[
c_0:=c(x),
\qquad
\hat c:=\hat c(x),
\qquad
B_0:=B(x)=\tau(x)\tau(x)^\top.
\]
Let
\[
J_x^0(B):=\ell^{\,\mathrm{cvx}}(B;c,x,0),
\qquad
\hat J_x(B):=\ell^{\,\mathrm{cvx}}(B;\hat c,x,\lambda_N),
\]
where \(\lambda_N=o(1)\). The minimization is over
\[
\mathbb B=\{B\succeq0:\operatorname{tr}(B)\le M^2\}.
\]
Set \(r_N:=\|\hat c-c_0\|_2\), so \(r_N\xrightarrow{p}0\) by Assumption~\ref{assump:opt2_moment_cons}.

By Assumption~\ref{assump:opt2_reg},
\[
C_H:=\max_{(t,s)\in\mathcal P}\|H_{t,s}(x)\|_{\mathrm{op}}<\infty.
\]
For any \(B\in\mathbb B\),
\[
|\operatorname{tr}(H_{t,s}(x)B)|
\le
\|H_{t,s}(x)\|_{\mathrm{op}}\operatorname{tr}(B)
\le C_HM^2.
\]
Thus \(\sup_{B\in\mathbb B}\|\mathcal A_x(B)\|_2<\infty\). For any \(B\in\mathbb B\),
\begin{align*}
\hat J_x(B)-J_x^0(B)
&=
\|\hat c-\mathcal A_x(B)\|_2^2
-
\|c_0-\mathcal A_x(B)\|_2^2
+
\lambda_N\operatorname{tr}(B)\\
&=
2(c_0-\mathcal A_x(B))^\top(\hat c-c_0)
+
\|\hat c-c_0\|_2^2
+
\lambda_N\operatorname{tr}(B).
\end{align*}
Therefore,
\[
\sup_{B\in\mathbb B}|\hat J_x(B)-J_x^0(B)|
\le
2r_N\sup_{B\in\mathbb B}\|c_0-\mathcal A_x(B)\|_2+r_N^2+\lambda_NM^2
\xrightarrow{p}0.
\]

Fix \(\varepsilon>0\), and define
\[
\mathbb B_\varepsilon:=\{B\in\mathbb B:\|B-B_0\|_F\ge\varepsilon\}.
\]
Since \(\mathbb B\) is compact and \(J_x^0\) is continuous, \(\mathbb B_\varepsilon\) is compact. Since \(B_0\) is the unique minimizer of \(J_x^0\),
\[
\delta_\varepsilon
:=
\inf_{B\in\mathbb B_\varepsilon}J_x^0(B)-J_x^0(B_0)
>
0.
\]
On the event
\[
\sup_{B\in\mathbb B}|\hat J_x(B)-J_x^0(B)|\le\delta_\varepsilon/3,
\]
which has probability tending to one, every \(B\in\mathbb B_\varepsilon\) has strictly larger sample objective than \(B_0\). Hence no minimizer \(\hat B(x)\) can lie in \(\mathbb B_\varepsilon\). This proves
\[
\|\hat B(x)-B_0\|_F\xrightarrow{p}0.
\]

For the spectral part, let \(\tau_0:=\tau(x)\), \(u_0:=\tau_0/\|\tau_0\|_2\), and \(g_0:=\|\tau_0\|_2^2\). Then \(B_0=g_0u_0u_0^\top\), with eigengap \(g_0\ge c_\tau^2>0\). 
Since \(\|\hat B(x)-B_0\|_{\mathrm{op}}\le\|\hat B(x)-B_0\|_F\to0\) in probability, Weyl's inequality gives
$\hat\lambda_1(x)\xrightarrow{p} g_0$.
Moreover, \(B_0\) has eigengap \(g_0=\|\tau(x)\|_2^2\ge c_\tau^2>0\), so Davis--Kahan gives
\[
\min\{\|\hat u_1(x)-u_0\|_2,\|\hat u_1(x)+u_0\|_2\}\xrightarrow{p}0.
\]
Combining the eigenvalue convergence with the eigenvector convergence up to sign yields
\[
d_{\pm}\!\left(\sqrt{\hat\lambda_1(x)}\hat u_1(x),\tau(x)\right)
=
d_{\pm}(\hat \tau_{\pm}^{\,\mathrm{sp}}(x),\tau(x))
\xrightarrow{p}0.
\]
This proves consistency up to a global sign.
\end{proof}

\subsection{Proof of \Cref{thm:opt2_nls_consistency}}

\begin{proof}
Fix \(x\in\mathcal X_0\), and write
\[
c_0:=c(x),
\qquad
\hat c:=\hat c(x),
\qquad
\tau_0:=\tau(x).
\]
Let
\[
L_x^0(\tau):=\ell^{\,\mathrm{nls}}(\tau;c,x,0),
\qquad
\hat L_x(\tau):=\ell^{\,\mathrm{nls}}(\tau;\hat c,x,\rho_N),
\]
where \(\rho_N=o(1)\). The minimization is over \(\Theta=\{\tau\in\mathbb R^T:\|\tau\|_2\le M\}\). Set \(r_N:=\|\hat c-c_0\|_2\), so \(r_N\xrightarrow{p}0\).

By Assumption~\ref{assump:opt2_reg},
\(C_H:=\max_{(t,s)\in\mathcal P}\|H_{t,s}(x)\|_{\mathrm{op}}<\infty\).
For any \(\tau\in\Theta\),
\[
|(F_x(\tau))_{t,s}|
=|\tau^\top H_{t,s}(x)\tau|
\le C_HM^2,
\]
so \(\sup_{\tau\in\Theta}\|F_x(\tau)\|_2<\infty\). For any \(\tau\in\Theta\),
\begin{align*}
\hat L_x(\tau)-L_x^0(\tau)
&=
\|\hat c-F_x(\tau)\|_2^2
-
\|c_0-F_x(\tau)\|_2^2
+
\rho_N\|\tau\|_2^2\\
&=
2(c_0-F_x(\tau))^\top(\hat c-c_0)
+
\|\hat c-c_0\|_2^2
+
\rho_N\|\tau\|_2^2.
\end{align*}
Thus,
\[
\sup_{\tau\in\Theta}|\hat L_x(\tau)-L_x^0(\tau)|
\le
2r_N\sup_{\tau\in\Theta}\|c_0-F_x(\tau)\|_2+r_N^2+\rho_NM^2
\xrightarrow{p}0.
\]

Fix \(\varepsilon>0\), and define
\[
\Theta_\varepsilon
:=
\left\{\tau\in\Theta:
\min\{\|\tau-\tau_0\|_2,\|\tau+\tau_0\|_2\}\ge\varepsilon
\right\}.
\]
Since \(\Theta\) is compact and \(L_x^0\) is continuous, \(\Theta_\varepsilon\) is compact. By assumption, the unique minimizer set of \(L_x^0\) is \(\{\tau_0,-\tau_0\}\). Hence
\[
\eta_\varepsilon
:=
\inf_{\tau\in\Theta_\varepsilon}L_x^0(\tau)-L_x^0(\tau_0)
>
0.
\]
On the event
\(\sup_{\tau\in\Theta}|\hat L_x(\tau)-L_x^0(\tau)|\le\eta_\varepsilon/3\),
which has probability tending to one, every \(\tau\in\Theta_\varepsilon\) has strictly larger sample objective than \(\tau_0\). Therefore no global minimizer \(\hat\tau^{\,\mathrm{nls}}(x)\) lies in \(\Theta_\varepsilon\). Since \(\varepsilon>0\) is arbitrary,
\[
d_{\pm}(\hat \tau_{\pm}^{\,\mathrm{nls}}(x),\tau(x))
\xrightarrow{p}0.
\]
\end{proof}

\end{document}